\newcolumntype{M}[1]{>{\centering\arraybackslash}m{#1}}
\newcolumntype{N}{@{}m{0pt}@{}}
\newtheorem{assumption}{Assumption} 
\newtheorem{definition}{Definition}
\newtheorem{theorem}{Theorem}  
\newtheorem{proposition}{Proposition} 
\newtheorem{remark}{Remark} 
\newtheorem{lemma}{Lemma} %
\newcommand*\wthelper[2]{%
        \hbox{\dimen@\accentfontxheight#1%
                \accentfontxheight#11.2\dimen@ 
                $\m@th#1\widetilde{#2}$%
                \accentfontxheight#1\dimen@
        }%
}
\newcommand*\accentfontxheight[1]{%
        \fontdimen5\ifx#1\displaystyle
                \textfont
        \else\ifx#1\textstyle
                \textfont
        \else\ifx#1\scriptstyle
                \scriptfont
        \else
                \scriptscriptfont
        \fi\fi\fi3
}
\newcommand{\powerset}[1]{\mathcal{P} \left( #1 \right) }
\newcommand{\G}{\mathbb{G}}
\newcommand{\GT}{\widetilde{\G}} 
\newcommand{\N}{\mathbb{N}}
\newcommand{\Z}{\mathbb{Z}}
\newcommand{\bigO}[1]{\mathcal{O}\left( #1 \right)}
\newcommand{\prob}[1]{\mathbb{P}\left[#1\right]}
\newcommand{\adv}{\textbf{Adv}}
\newcommand{\tree}{\mathfrak{T}}
\newcommand{\ith}{\text{-th}}
\newcommand{\varRow}{l}
\newcommand{\varColumn}{m}
\newcommand{\pvec}[2][]{{\hspace{-0.05em}\vec{\hspace{0.05em}#2}\mkern2mu\vphantom{#2}}^\prime_{\hspace{-0.18em}#1}}
\newcommand{\ppvec}[2][]{{\hspace{-0.05em}\vec{\hspace{0.05em}#2}\mkern2mu\vphantom{#2}}^{\prime\prime}_{\hspace{-0.18em}#1}}
\newcommand{\definline}[1]{(\textit{#1})}
\newcommand{\matrixset}[3]{\text{M}_{#1\times #2}(#3)}
\newcommand{\maps}[2]{$ \left( #1 \right) \mapsto \left( #2 \right) $}
\newcommand{\mapssingleoutput}[2]{$ \left( #1 \right) \mapsto #2 $}
\newcommand{\singlefunction}[1]{\texttt{#1}}
\newcommand{\algorithm}[2]{\texttt{#1.#2}}
\newcommand{\algorithmdef}[4]{\item[#1.#2 \maps{#3}{#4}:]}
\newcommand{\algorithmdefsingleoutput}[4]{\item[#1.#2 \mapssingleoutput{#3}{#4}:] }
\newcommand{\singlefunctiondef}[3]{\item[ \textbf{#1} \maps{#2}{#3}.\hspace{0.1em} ]}
\newcommand{\singlefunctiondefsingleoutput}[3]{\item[ \textbf{#1} \mapssingleoutput{#2}{#3}.\hspace{0.1em} ]}
\newcommand{\randomchoose}[1]{We choose randomly the element\ifstrequal{#1}{s}{}{s} 
}
\newcommand{\game}[2]{$\mathbf{#1_{#2}}$}
\newcommand{\mathgame}[2]{\mathbf{#1_{#2}}}
\newcommand{\phase}[1]{%
\textbf{#1 phase.} \hspace{0pt}}
\newcommand{\gamedef}[3]{
\textbf{Definition of \game{#1}{#2}.}
#3}
\newcommand{\EK}{\mathrm{EK}}
\newcommand{\CH}{\mathrm{CH}}
\newcommand{\CV}{\mathrm{CV}}
\newcommand{\PV}{\mathrm{PV}}
\newcommand{\PI}{\mathrm{PI}}
\newcommand{\normal}{S}
\newcommand{\semifunctional}{SF}
\newcommand{\sfone}{SF1}
\newcommand{\sftwo}{SF2}
\newcommand{\sfthree}{SF3}
\newcommand{\sffour}{SF4}
\newcommand{\random}{R}
\newcommand{\typeone}{type~1}
\newcommand{\typetwo}{type~2}
\newcommand{\disclaimerSUE}{The SUE keys are generated in a similar fashion, for which we refer to the paper by Lee et al.~\cite{lee2013RSABE}.

}
\definecolor{mildgray}{RGB}{220,220,220} 
\definecolor{darkgray}{RGB}{150,150,150} 
\colorlet{Rnode}{darkgray}
\definecolor{mildgray}{RGB}{230,230,230} 
	\definecolor{Rnode}{RGB}{170,170,170} 
	\definecolor{consideredNode}{RGB}{230,230,230}
	\definecolor{consideredEdge}{RGB}{0,0,0}
	\definecolor{boxPV}{RGB}{0,0,0}
	\newlength{\consideredWidth}
	\newlength{\boxPVWidth}
	\definecolor{revokedNode}{RGB}{170,170,170}
	\definecolor{revokedEdge}{RGB}{0,0,0}
	\definecolor{boxCV}{RGB}{0,0,0}	
	\definecolor{consideredNode}{RGB}{230,230,230}
	\definecolor{consideredEdge}{RGB}{0,0,0}
	\definecolor{revokedNode}{RGB}{170,170,170}
	\definecolor{revokedEdge}{RGB}{0,0,0}
	\definecolor{boxSet}{RGB}{0,0,0}
	\newlength{\boxSetWidth}
	\newlength{\revokedWidth}
	\newlength{\boxCVWidth}
\renewcommand{\emptyset}{\varnothing}
\renewcommand{\setminus}{\smallsetminus}
\title{A proof of security for a key-policy \texorpdfstring{\\}{}RS-ABE scheme}
\author[1]{Federico Giacon\thanks{federico.giacon@rub.de}}
\author[2]{Riccardo Aragona\thanks{riccardo.aragona@unitn.it}}
\author[2]{Massimiliano Sala\thanks{maxsalacodes@gmail.com }}
\affil[1]{Horst G\"ortz Institute for IT-Security, Ruhr-University Bochum, Germany}
\affil[2]{Department of Mathematics, University of Trento, Italy}
\date{}
\begin{document}
	\maketitle
	\begin{abstract}
		A revocable-storage attribute-based encryption  (RS-ABE) scheme is an encryption scheme which extends attribute-based encryption by introducing user revocation. A key-policy RS-ABE scheme links each key to an access structure. We propose a new key-policy RS-ABE scheme whose security we prove in term of indistinguishability under a chosen-plaintext attack (IND-CPA).
	\end{abstract}
\medskip
\small{\textbf{Keywords:} Key Policy, Attribute Based Encryption, Bilinear Group}

	\section{Introduction}
	\label{intro}
	
	
	Attribute-based encryption (ABE) is a flavor of public-key encryption which has seen a growing interest in the last years. 
	An ABE scheme builds its ciphertexts and its keys employing \emph{attributes}. Attributes are used to manage the access to a certain document using a fixed key. In a \emph{key-policy} ABE scheme each ciphertext links the encrypted plaintext to a set of attributes, and each key is linked to an \emph{access structure}. An access structure is a list of sets of attributes which the key is able to decrypt. Thus a key can decrypt a ciphertext if and only if the set of attributes of the ciphertext belongs to the access structure of the key.
	In a similar way, a \emph{ciphertext-policy} reverses the roles of set of attributes and access structure: in this case, a key is associated with the set of attributes, and a ciphertext with an access structure.
	The choice between key-policy and ciphertext-policy depends on the final application of the scheme. For example, a digital television platform could favour key policy to manage their TV licenses: associating attributes to encrypted programs, the clients are able to decrypt only the programs which belong to their license. Conversely, a bank could prefer a ciphertext policy to manage the work of its personnel: if we associate attributes to the cryptographic keys of each employee, then each member is able to decrypt only the documents within their competence, i.e., those corresponding to attributes associated with their keys.
	Different ABE constructions have been proposed over time, with gradually increased security, reliability and efficiency.
	For this reason the concept of ABE has been extended in order to widen its use-cases.
	Our interest lies specifically on user revocation and in this paper we design a public-key scheme which enjoys ABE for encryption, while allowing to  revoke users arbitrarily.
	This is the goal of revocable-storage attribute-based encryption (RS-ABE), a scheme described by \citeauthor{lee2013RSABE}~\cite{lee2013RSABE} employing a ciphertext policy. To be more precise, our aim is to build a key-policy version of such scheme and to prove its security in a theoretical framework fit to the application context. Starting from the ideas in \cite{lee2013RSABE}, we modify concepts and techniques described therein to adapt them to our goals.\\
	We observe that a key-policy RS-ABE scheme was independently described by \citeauthor{lee2014}~\cite{lee2014}, but we employ a different construction and different security assumptions, reaching an independent security result.

	In Section~\ref{sec:structure} we describe the RS-ABE framework, and we give the definition of CPA-security. In Section~\ref{assumptions.sec} we describe the assumptions required for the security of our scheme. In Section~\ref{sec:block} we give an overview of the building blocks of our scheme: Complete Subset (CS), Self-Updatable Encryption (SUE) and key-policy Attribute Based Encryption. In Section~\ref{sec:scheme} we are finally able to describe our scheme in detail and  state our main result on its security.  Section~\ref{sec:proof} is entirely devoted to the proof of our claimed theorem. Finally, in Section~\ref{efficiency.sec} we discuss the efficiency of our key-policy RS-ABE scheme and draw some conclusions, sketching future work.
	

	\section{Structure}\label{sec:structure}
	We provide a high-level description of a \emph{key-policy revocable-storage at\-tri\-bute-based encryption} scheme. $\mathfrak{A}$ is the set of all possible attributes and an access structure $\mathbb{A}$ is a set of subsets of $\mathfrak{A}$ (for a formal definition, see Definition~\ref{def.accessstructure}). Moreover we denote with $\mathfrak{U}$ the set of all users. 
	Starting from $\mathfrak{A}$ and other public parameters, an authority  $\mathcal{C}$ must create some \emph{public information} ($\PI$) and two general keys:  the \emph{general public key} and the \emph{master key}. The public information contains the general setting and is known to everyone. The general public key, which we shorten to \emph{public key} ($\mathrm{PK}$), can be used by any user to encrypt, even by anyone having access to the system and knowing only $\PI$ and $\mathrm{PK}$. The master key ($\mathrm{MK}$) is used only by 
	$\mathcal{C}$ to create the users' private keys (no user has a personal public key), when user requests it, and the (general) \emph{time-update key} ($\mathrm{TK}$), at any time update. The latter is known to everyone and incorporates the information on the updated list of revoked users. In order to decrypt, any user needs her own \emph{private key} ($\mathrm{SK}$), $\mathrm{TK}$ and $\PI$.\\
	This scheme is described by seven Probabilistic Polynomial Time (PPT) algorithms. 
		\begin{description}
		\singlefunctiondef{Setup}{\lambda,\mathfrak{A},T_{\textnormal{max}},N_{\textnormal{max}}}{\mathit{\mathrm{MK}},\PI,\mathit{\mathrm{PK}}}
		\label{RS-ABE_Setup-theory}
			$\lambda$ is the security parameter,
			$\mathfrak{A}$ is the set of all possible attributes,
			$T_{\textnormal{max}}$ is the maximum time which can be used inside the system,
			$N_\textnormal{max}$ is the total number of users who can receive a key. W.l.o.g. we can assume $N_\textnormal{max}=2^d$ with $d\geq 1$.
			The outputs are the master (private) key,   the public information and the (general) public key.
		
		\singlefunctiondefsingleoutput{GenKey}{\mathit{\mathrm{PI}},\mathit{\mathrm{PK}},\mathit{\mathrm{MK}},\mathbb{A},u}{\mathit{\mathrm{SK}}_{\mathbb{A},u}}
		The inputs are
		\begin{itemize} 
		\item all three outputs of \singlefunction{Setup}, that is, $\mathit{\mathrm{PI}}$, $\mathit{\mathrm{PK}}$, and $\mathit{\mathrm{MK}}$, 
		\item $\mathbb{A}$, an access structure listing the sets of attributes which the key is able to decrypt, and
		\item $u$, the user assignee of the key.
		\end{itemize}
		The output is 
		$\mathit{\mathrm{SK}}_{\mathbb{A},u}$, the private key for the user $u$.

		\singlefunctiondefsingleoutput{UpdateKey}{\mathit{\mathrm{PI}},\mathit{\mathrm{PK}},\mathit{\mathrm{MK}},T,R}{\mathit{\mathrm{TK}}_{T,R}}
		This function generates a key which links the time $T$ with a set of \emph{revoked users} $R$. The remaining inputs are
		$\mathit{\mathrm{PI}}$, $\mathit{\mathrm{PK}}$, and
		$\mathit{\mathrm{MK}}$.
		The output is
		$\mathit{\mathrm{TK}}_{T,R}$, the \emph{time-update key} for the time $T$ (and with revoked users $R$).
		
		\singlefunctiondefsingleoutput{Encrypt}{\mathit{\mathrm{PI}},\mathit{\mathrm{PK}},M,S,T}{\mathrm{CT}_{S,T}}
		The inputs are $\mathit{\mathrm{PI}}$, $\mathit{\mathrm{PK}}$, a plaintext
		$M$, 
		$S\subseteq \mathfrak{A}$ (the set of attributes required to decrypt the plaintext) and
		the encryption time $T$.
		The output is
		$\mathrm{CT}_{S,T}$, the ciphertext.
		
		\singlefunctiondefsingleoutput{Decrypt}{\mathit{\mathrm{PI}},\mathrm{CT}_{S,T},\mathit{\mathrm{SK}}_{\mathbb{A},u},\mathit{\mathrm{TK}}_{T^{\prime}\!,R}}{M}
		This is the function merging together both the secret key $\mathit{\mathrm{SK}}_{\mathbb{A},u}$ and the time-update key $\mathit{\mathrm{TK}}_{T^{\prime}\!,R}$ to decrypt a ciphertext $\mathrm{CT}_{S,T}$. Moreover, the public information $\mathit{\mathrm{PI}}$ is used.
		The output is
		$M$, the decrypted plaintext.
		
		\singlefunctiondefsingleoutput{Update$\mathrm{CT}$}{\mathit{\mathrm{PI}},\mathit{\mathrm{PK}},\mathrm{CT}_{S,T}}{\mathrm{CT}_{S,T+1}}
		The goal of this function is to update the time $T$ of a ciphertext $\mathrm{CT}_{S,T}$ by a unit, using only $\mathit{\mathrm{PI}}$ and $\mathit{\mathrm{PK}}$.		
		The output is
		$\mathrm{CT}_{S,T+1}$, a ciphertext for the same plaintext of $\mathrm{CT}_{S,T}$, but with updated time.
		
	\end{description}
	
	W.l.o.g. we will always assume that the set $\mathfrak{U}$ of users is the largest possible, that is, $|\mathfrak{U}|=N_{\textnormal{max}}=2^d$.
	
	The \emph{correctness} of a scheme describes which keys can decrypt a ciphertext. This scheme is correct \underline{if}, fixing any $\mathit{\mathrm{PI}}$, $\mathit{\mathrm{PK}}$, and $\mathit{\mathrm{MK}}$ obtained as output of \singlefunction{Setup}, for any plaintext $M$ and any ciphertext $\mathrm{CT}_{S,T}$ output of \singlefunction{Encrypt}, for any key $\mathit{\mathrm{SK}}_{\mathbb{A},u}$ output of \singlefunction{GenKey}, and any key $\mathit{\mathrm{TK}}_{T^{\prime},R}$ output of \singlefunction{UpdateKey}, \underline{then} the output of \singlefunction{Decrypt}$\left({\mathit{\mathrm{PI}},\mathrm{CT}_{S,T},\mathit{\mathrm{SK}}_{\mathbb{A},u},\mathit{\mathrm{TK}}_{T^{\prime}\!,R}}\right)$ is exactly $M$ \underline{when} all the following are true:
	\begin{itemize}
		\item
		$u\not\in R$, i.e., the user $u$ is not revoked;
		\item
		$T\leq T^{\prime}$, i.e., the ciphertext is older than the update key;
		\item
		$S\in\mathbb{A}$, i.e., the set of attributes of the ciphertext is authorized by the key.
	\end{itemize}

	We also require that, for any ciphertext $\mathrm{CT}_{S,T}$ output of \singlefunction{Encrypt}$\left({\mathit{\mathrm{PK}},M,S,T}\right)$, decrypting the output of \singlefunction{Update$\mathrm{CT}$}$({\mathit{\mathrm{PK}},\mathrm{CT}_{S,T}})$ yields the same as decrypting the output of \singlefunction{Encrypt}$(\mathit{\mathrm{PK}},M,S,T+1)$. 

	\subsection{Definition of security}
	\label{security.sec}
	The security of our scheme is described in term of \emph{indistinguishability under a chosen-plaintext attack} (\emph{IND-CPA}).
	An informal description follows.\\
	An adversary is allowed to query the challenger for a polynomial number of private keys and time-update keys. When satisfied, the adversary presents two different plaintexts $M_*^0$ and $M_*^1$; one of them is randomly chosen by the challenger and given back encrypted ($\mathrm{CT}_*$) to the adversary. The adversary may then continue querying the challenger for other private keys and time-update keys, and must eventually decide which plaintext between $M_*^0$ and $M_*^1$ corresponds to $\mathrm{CT}_*$. Informally, we may consider an RS-ABE scheme secure if the adversary cannot reliably guess which of the two messages was encrypted.\\
	We now provide the corresponding formal security game.
	\begin{definition} 
		\label{security.def}
		The security game describing IND-CPA is played by a PPT  adversary $\mathcal{A}$ and a challenger $\mathcal{C}$, and consists of five distinct phases. At the start we fix the attributes $\mathfrak{A}$, the maximum time $T_{\textnormal{max}}$, the number of users $N_{\textnormal{max}}$, and the security parameter $\lambda$. Each of these parameters are known by both~$\mathcal{A}$ and $\mathcal{C}$.
		\begin{description}
			\item[Setup]
			$\mathcal{C}$ runs \singlefunction{Setup}$(\lambda,\mathfrak{A},T_{\textnormal{max}},N_{\textnormal{max}})$ to obtain the master secret key $\mathit{\mathrm{MK}}$, the public information $\mathit{\mathrm{PI}}$ and the public key $\mathit{\mathrm{PK}}$. The former is kept secret by $\mathcal{C}$, while the others are given to~$\mathcal{A}$.
			
			\item[Query, I]
			$\mathcal{A}$ is now allowed to query $\mathcal{C}$. The maximum number of allowed queries $n_q^{I}$ is polynomial with respect to $\lambda$.
			The queries\footnote{The querying process is adaptive, i.e., the choice for the type of query and the input for the next query may depend on the output of the previous queries. } may be of two different types: private-key queries,  indexed with $i$, or time-update queries, indexed with $j$. 
			\begin{enumerate}
				\item
				(private-key query)
				$\mathcal{A}$ chooses a pair $(\mathbb{A}_i,u_i)$ consisting of an access \linebreak{} 
				 structure $\mathbb{A}_i$ and a user $u_i$ who was not previously requested. 
				$\mathcal{C}$ runs \singlefunction{GenKey}$(\mathit{\mathrm{PI}},\mathit{\mathrm{PK}},\mathit{\mathrm{MK}},\mathbb{A}_i,u_i)$ and gives the output $\mathit{\mathrm{SK}}_{\mathbb{A}_i,u_i}$ to $\mathcal{A}$.
				\item
				(time-update query)
				$\mathcal{A}$ selects a pair $(T_j,R_j)$, where $R_j$ is an arbitrary set of users and $T_j$ is a time for which there was no previous request for a private key (${\forall k < j,}\: {T_j\neq T_{k}}$).
				$\mathcal{C}$ runs \singlefunction{UpdateKey}$(\mathit{\mathrm{PI}},\mathit{\mathrm{PK}},\mathit{\mathrm{MK}},T_j,R_j)$ and hands its output 
				$\mathit{\mathrm{TK}}_{T_j,R_j}$ 
				to~$\mathcal{A}$.
			\end{enumerate}

			\item[Challenge]
			\label{security.challenge}
			$\mathcal{A}$ chooses two plaintexts $M_*^0$ and $M_*^1$, together with a time $T_*$ and a set of attributes $S_*$.
			We require also that, for each previous private-key query $i$ and  time-update query $j$ occurring in Query~I, the following condition holds\footnote{Since $T_*$ and $S_*$ are going to be used in the encryption,   $\mathcal{A}$ must not be able to decrypt the message using the keys she previously queried.}
				\[(S_*\not\in \mathbb{A}_i) \mbox{ or }
				(u_i\in R_j) \mbox{ or }
				(T_*>T_j).\]
			$\mathcal{C}$ selects randomly a bit $b\in \{0,1\}$ and runs \singlefunction{Encrypt}$(\mathit{\mathrm{PI}},\mathit{\mathrm{PK}},M_*^b,S_*,T_*)$.\\ The obtained output $\mathrm{CT}_*$ is returned to $\mathcal{A}$.
			
			\item[Query, II]
			This phase is similar\footnote{with the additional restriction that any key which would allow $\mathcal{A}$ to simply use \singlefunction{Decrypt} cannot be requested. The querying process is again adaptive.} to the previous querying phase.			
			$\mathcal{A}$ can continue to query $\mathcal{C}$ a polynomial number of times with respect to~$\lambda$ with two possible requests.
			\begin{enumerate}
				\item
				(private-key query)
				$\mathcal{A}$ chooses a pair $(\mathbb{A}_i,u_i)$, where  $u_i$  was not previously requested, such that 
				\begin{itemize}
				\item $S_*\not\in \mathbb{A}_i$, or
				\item for each $k\leq j$ such that $T_*\leq T_k$, then $u_i\in R_k\,$.
				\end{itemize}
				$\mathcal{C}$ runs \singlefunction{GenKey}$(\mathit{\mathrm{PI}},\mathit{\mathrm{PK}},\mathit{\mathrm{MK}},\mathbb{A}_i,u_i)$ and gives the output $\mathit{\mathrm{SK}}_{\mathbb{A}_i,u_i}$ to~$\mathcal{A}$.
				
				\item
				(time-update query)
				$\mathcal{A}$ selects a pair $(T_j,R_j)$, where $R_j$ is an arbitrary set of users and $T_j$ is a time that was no previous requested for a previous time-update key (${\forall k < j},\:\allowbreak {T_j\neq T_{k}}$) such that
				\begin{itemize} 
				\item $T_*>T_j$, or
				\item for each $k\leq i$ such that $S_*\in \mathbb{A}_k$, then $u_k\in R_j\,$.
				\end{itemize}
				$\mathcal{C}$ runs \singlefunction{UpdateKey}$(\mathit{\mathrm{PI}},\mathit{\mathrm{PK}},\mathit{\mathrm{MK}},T_j,R_j)$ and gives the output  $\mathit{\mathrm{TK}}_{T_j,R_j}$  to $\mathcal{A}$.
			\end{enumerate}

			\item[Guess]
			$\mathcal{A}$ outputs a guess $b_*$ for the value of $b$.
		\end{description}
		If the output is correct, the adversary wins the game.		
		The \emph{advantage} of the adversary $\mathcal{A}$ is defined as the likelihood of making the right guess for $b$:
		\[
			\adv_{\mathcal{A}}^{\textnormal{RS-ABE}}(\lambda)=\left|\prob{b=b_*}-\frac{1}{2}\right|.
		\]
		Since the output of the functions shaping the scheme may depend on randomly chosen parameters, the advantage is described in term of the probability over all possible choices for their output considering their distribution.
	\end{definition}	
	\begin{definition}
 		A function $\eta(\lambda)$ is \emph{negligible} in $\lambda$ if for every $c>0$ and for every $k>0$ there exists $\lambda_0>0$ such that $|\eta(\lambda)| < \left|\frac{1}{c \lambda^{k}}\right|$ for all $\lambda > \lambda_0$.
 	\end{definition}
	\begin{definition}[Security]
		The RS-ABE scheme is said to be \emph{IND-CPA-secure}, or \emph{secure for a chosen-plaintext attack}, if the advantage of a PPT adversary is negligible with respect to the security parameter $\lambda$.
	\end{definition}

	Notice how the use of \singlefunction{Update$\mathrm{CT}$} does not allow the adversary to request a key that would allow the decryption of an updated ciphertext, because of the restrictions of the challenge phase and second-query phase. 
	
	Notice that the update key is not required to be secret.
	
	The aim of this paper is to propose a key-policy RS-ABE scheme which can be shown to be secure for a chosen-plaintext attack if the three assumptions in the next section hold.

	\section{Assumptions}
	\label{assumptions.sec}
	We work in the context of composite-order bilinear groups, adapted for the special case $N=p_1p_2p_3$, where each $p_i$ is a prime number. This is a concept first introduced by Boneh et al.~\cite{kilian2005bilineargroups}.
	\begin{definition}[Bilinear group]
		\label{bilineargroup.def}
		A \emph{bilinear group} (of composite order three) is identified by the tuple
		$
			((N,\G,\GT,e),p_1,p_2,p_3),
		$
		 where:
		
		\begin{enumerate}
			\item
			$p_1$, $p_2$ and $p_3$ are three distinct primes, and $N=p_1 p_2 p_3$;
			\item
			$\G$ and $\widetilde{\G}$ are two cyclic groups of order~$N$;
			\item
			$e:\,\G\times\G\rightarrow\widetilde{\G}$ is a map such that:
			\begin{itemize}
				\item
				\definline{bilinear} for every $h,k\in\G$ and $\alpha,\beta\in\Z_N$, 
				$e(h^\alpha,k^\beta)=e(h,k)^{\alpha\beta}; $
				\item
				\definline{non-degenerate} there exists $g\in\G$ such that $e(g,g)$ generates~$\widetilde{\G}$.
			\end{itemize}
		\end{enumerate}
		\end{definition}
		\begin{definition}
		\label{groupdes.def}
		A \emph{group descriptor} (of composite order three) is\\ a tuple $((N,\G,\GT,e),p_1,p_2,p_3,\bar{g}_1,\bar{g}_2,\bar{g}_3)$ such that
		\begin{itemize}
		\item the tuple $((N,\G,\GT,e),p_1,p_2,p_3)$ is a bilinear group;
		\item the group operations of~$\G$ and~$\GT$ and the map $e$ are efficiently computable;
		\item $\bar{g}_1$ is a generator of $\G_{p_1}$, $\bar{g}_2$ is a generator of $\G_{p_2}$ and $\bar{g}_3$ is a generator of $\G_{p_3}$, where $\G_{m}$ is the subgroup of~$\G$ with order~$m$.
		\end{itemize}
		\end{definition}
		\begin{remark}
		The three generators $\bar{g}_1$, $\bar{g}_2$, and $\bar{g}_3$ are only used implicitly to generate random element of the group $\G$ and its subgroups $\G_{p_1}$, $\G_{p_2}$, and $\G_{p_3}$.
		\end{remark}
		
		We observe that $g=\bar{g}_1 \bar{g}_2 \bar{g}_3$ generates $\G$, that $g^{p_2 p_3}$ is a generator of~$\G_{p_1}$ and that every element of~$\G_{p_1}$ has the form $g^{\alpha p_2 p_3}$ for some~$\alpha\in\N$.
		\begin{definition}
		\label{groupdesgen.def}
		A \emph{group descriptor generator} $\mathscr{G}$ is a polynomial algorithm which takes as input the security parameter~$\lambda$ and outputs a group descriptor \linebreak $\mathscr{G}(\lambda)=((N,\G,\GT,e),p_1,p_2,p_3,\bar{g}_1,\bar{g}_2,\bar{g}_3)$ such that 
	the group operations and the bilinear map  are computable in polynomial time with respect to~$\lambda$.
	\end{definition}
	
	
	The group descriptor $\mathscr{G}(\lambda)$ contains all the parameters used by the challenger for generating random elements of the group  $\G$ and its subgroups. The adversary knows only the tuple $(N,\G,\GT,e)$, which allows him to compute group operations and the bilinear map.
	
	The  following three assumptions fix the properties which must hold for a group descriptor generator $\mathscr{G}$, used in \singlefunction{Setup} (page~\pageref{RS-ABE_Setup-theory}).
	With the notation
	$
		r \leftarrow R
	$
	we specify that we are choosing an element $r$ among the elements of $R$ using a uniform random distribution. These assumptions were introduced by Lewko et al.~\cite{lewko2010assumptions}.
	Notice that in the following 
	the element $g_i$ belongs to $\G_{p_i}$, and 
	the element identified with the letters $X$, $Y$, and $Z$ belong respectively to $\G_{p_1}$, $\G_{p_2}$, and $\G_{p_3}$.

	\begin{assumption}[Subgroup decision problem]
		\label{a1}
		Let $\mathscr{G}$ be a group descriptor generator and for any $\lambda$ compute $\mathscr{G}(\lambda)$.
		We choose randomly the following elements:
			$g_1 \leftarrow{} \G_{p_1}$;
			$g_3 \leftarrow{} \G_{p_3}$;
		and we call $D=\left((N,\G,\GT,e),g_1,g_3\right)$. 
		We choose randomly two other elements:
			$W_1 \leftarrow{} \G_{p_1}$;
			$W_2 \leftarrow{} \G_{p_1p_2}$.
		The advantage of an algorithm $\mathscr{A}$ is defined as:
		\[
			\left|\prob{\mathscr{A}(D,W_1)=1}-\prob{\mathscr{A}(D,W_2)=1}\right|.
		\]
		This advantage is denoted with $\adv^{\textnormal{A1}}_{\mathscr{A}}$.
		We say that $\mathscr{G}$ satisfies Assumption \ref{a1} if for any PPT algorithm $\mathscr{A}$ we have that $\adv^{\textnormal{A1}}_{\mathscr{A}}$ is a negligible function with respect to $\lambda$.
	\end{assumption}
	\noindent The previous assumption could be informally rephrased as follows.
An adversary $\mathscr{A}$ returns $1$, i.e $\mathscr{A}(D,W)=1$, if and only if she thinks that $W$ is 
in the small subgroup $\G_{p_1}$, knowing that in any case $W$ is in the large subgroup $\G_{p_1p_2}$.
Her advantage is a measurement of the correctness of her guess.
In the case she is randomly guessing, she would return $1$ half of the times and so her advantage
would be zero. What we want with this assumption is to make any adversary at most negligibly better than a random guess.
	\begin{assumption}[General subgroup decision problem]
		\label{a2}
		Let $\mathscr{G}$ be a group descriptor generator and for any $\lambda$ compute $\mathscr{G}(\lambda)$. 
		We choose randomly the following elements:
			$g_1,X_1 \leftarrow{} \G_{p_1}$;
			$Y_1,Y_2 \leftarrow{} \G_{p_2}$;
			$g_3,Z_1 \leftarrow{} \G_{p_3}$;
		and we call \linebreak $D=\left((N,\G,\GT,e),g_1,g_3,X_1Y_1,Y_2Z_1\right)$.
		We choose randomly two other elements:
			$W_1 \leftarrow{} \G;$
			$W_2 \leftarrow{} \G_{p_1p_3}$.
		The advantage of an algorithm $\mathscr{A}$ is defined as:
		\[
			\left|\prob{\mathscr{A}(D,W_1)=1}-\prob{\mathscr{A}(D,W_2)=1}\right|.
		\]
		This advantage is denoted with $\adv^{\textnormal{A2}}_{\mathscr{A}}$.
		We say that $\mathscr{G}$ satisfies Assumption \ref{a2} if for any PPT algorithm $\mathscr{A}$ we have that $\adv^{\textnormal{A2}}_{\mathscr{A}}$ is a negligible function with respect to $\lambda$.
	\end{assumption}
	\noindent In this assumption, $\mathscr{A}$ is trying to guess if $W$ is  in the subgroup $\G_{p_1p_3}$, knowing that in any case $W$ is in  group $\G$.
	\begin{assumption}[Composite Diffie-Hellman problem]
		\label{a3}
		Let $\mathscr{G}$ be a group descriptor generator and for any $\lambda$ compute $\mathscr{G}(\lambda)$.
		We choose randomly the following elements:
			$\alpha,s \leftarrow{} \Z_{N}$;
			$g_1 \leftarrow{} \G_{p_1}$;
			$g_2,Y_1,Y_2 \leftarrow{} \G_{p_2}$;
			$g_3 \leftarrow{} \G_{p_3}$;
		and we call \linebreak 
		$D=\left((N,\G,\GT,e),g_1,g_2,g_3,g_1^\alpha Y_1,g_1^s Y_2\right)$. 
		We choose randomly another element
			$W_1 \leftarrow{} \GT$,
		and we fix $W_2=e(g_1,g_1)^{s\alpha}$.
		The advantage of an algorithm $\mathscr{A}$ is defined as:
		\[
			\left|\prob{\mathscr{A}(D,W_1)=1}-\prob{\mathscr{A}(D,W_2)=1}\right|.
		\]
		This advantage is denoted with $\adv^{\textnormal{A3}}_{\mathscr{A}}$.
		We say that $\mathscr{G}$ satisfies Assumption \ref{a3} if for any PPT algorithm $\mathscr{A}$ we have that $\adv^{\textnormal{A3}}_{\mathscr{A}}$ is a negligible function with respect to $\lambda$.
	\end{assumption}
	



	\section{Building blocks}\label{sec:block}
	
	We build our original key policy RS-ABE, starting from the ideas described by Lee et al.~\cite{lee2013RSABE} and adapting them to the key-policy case. To construct the scheme we outline three different schemes, which are later merged to form RS-ABE. These schemes are the \emph{complete subset} scheme (\emph{CS})~\cite{naor2001SC}, a \emph{self-updatable encryption} scheme (\emph{SUE})~\cite{lee2013RSABE}, and an \emph{attribute-based encryption} scheme (\emph{ABE})~\cite{lewko2010ABE}.
	
	\subsection{Notation}
	In the following construction we employ 
	perfect binary trees. 
	A perfect binary tree $\tree{}$ of depth $d$ is a binary tree where all leaf nodes has depth $d$, and all other nodes have exactly two children. Obviously,  exactly $2^d$ leaves hang from $\tree{}$. Each node of $\tree{}$ is denoted  $\nu_{i}$, where the index $i$ is assigned using breadth-first search on the tree. In other words, the root node has index $0$, the left child of $\nu_i$ has index $2i+1$ and the right child has index~$2i+2$.\\ 
	For each node $\nu_i$ we call $S_i$ the maximal subtree of $\tree{}$ having $\nu_i$ as root node. Clearly, $S_i$ is a perfect binary tree. To say that a node $\nu$ belongs to a subtree $S$ we write $\nu\in S$.
	
	In our case, each user in $\mathfrak{U}=\{u_1,\ldots,u_{N_{\textnormal{max}}}\}$  is associated to a leaf of the tree and vice versa. Sometimes, $u_j$  will denote the leaf corresponding to the user. The set $U_i\subset \mathfrak{U}$ contains all users associated with a leaf of the subtree $S_i$, in other words, all users whose leaf meets the node $\nu_i$ in the path from the leaf to the root node (see Figure \ref{fig1}).

	\begin{figure}[H]
	\begin{center}
	\forestset{
		nice empty nodes/.style={
		delay={where content={}{shape=coordinate,for parent={for children={anchor=north}}}{}}
	}}
	\begin{forest}
		grow down/.style={for tree={%
		  grow=south,
		  s sep=.5cm,
		  parent anchor=south,
		  child anchor=north
		  }
		}
		[ , draw, for tree={   l sep=0em, 
									l=3.5em,	   
									grow down,
									edge={dashed},
									if={n_children==0} 
										{circle, 
											draw, 
											minimum size=1.2em,
											l sep=1.6em,
											inner sep=0
										}
										{
										}
								}, 
							nice empty nodes ,
	   [ $\nu_i$ , draw,
	   tikz={\node [draw,rounded corners,line width=\boxPVWidth,boxPV,fit=(!11)(!22)] {};} 
	   [, for tree={consideredEdge,edge={solid}, line width=\consideredWidth} [  $u_1$ ] 
		   [  $u_2$ ] ]
	     [ , for tree={consideredEdge,edge={solid}, line width=\consideredWidth},  draw ,
	   [  $u_3$ ,
	   , ] 
		   [  $u_4$ ] ] ]
	   [ [ [  $u_5$ ] 
		   [  $u_6$ ] ]
	     [ [  $u_7$ ] 
		   [  $u_8$ ] ] ] ]
	\end{forest}
	\end{center}
		\caption{The solid line represents the subtree $S_i$ with root $\nu_i$ and the box represents the set $U_i$.}
		\label{fig1}
	\end{figure}
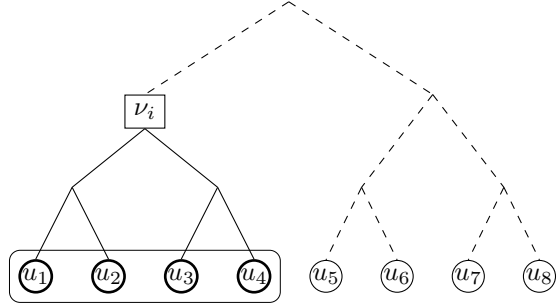
	
	 The \emph{Steiner tree} $S=S(\tree{},R)$, for a tree $\tree{}$ and a proper set of users $R\subset \mathfrak{U}$, is the smallest subtree of $\tree{}$ containing the root node and the leaves associated to all users in $R$. Observe that a Steiner tree is binary perfect only if $R=\emptyset$.
	
	\subsection{Complete subset scheme}
	\label{cs.sec}
	The complete-subset scheme is a particular implementation of the subset cover framework, introduced by Naor et al.~\cite{naor2001SC}. We use the same implementation described by Lee et al.~\cite{lee2013RSABE}.
	Notice that the subset-cover scheme as we describe it is not a full-fledged encryption mechanism, but rather a framework upon which we build encryption. Our main use of a CS scheme is to handle revoked users.\\
	The CS scheme is described by the following four functions.
	\begin{description}
	\algorithmdefsingleoutput{CS}{Setup}{N_{\textnormal{max}}}{\tree{}} we take the number of users $N_{\textnormal{max}}=2^d$ as an input and we build a perfect binary tree $\tree$ with depth $d$.
		Then we assign to each user in $\mathfrak{U}$ a unique leaf of the tree. The tree $\tree{}$ is the output of this function.

		\algorithmdefsingleoutput{CS}{Assign}{\tree{},u}{\PV_{u}} given the users' tree $\tree{}$ and a user $u\in \mathfrak{U}$, we output as \emph{private set} the subsets in $\mathfrak{U}$ of the form $U_i$ which contains $u$:
		\[
			\PV_u=\{U_i \mid u\in U_i\}\subset\mathcal{P}(\mathfrak{U}),
		\]
		where $\mathcal{P}(\mathfrak{U})$ is the power set of $\mathfrak{U}$.\\
		Observe that $|\PV_u|=d$.
	
		\algorithmdefsingleoutput{CS}{Cover}{\tree{},R}{\CV_{R}} we consider the Steiner tree $S=S(\tree{},R)$. We denote with $\nu_{k_1},\ldots,\nu_{k_m}$ all children of a node in $S$ which are not in $S$. The \emph{covering set} $\CV_R$ is then (see Figure \ref{figure.cover}):
		\[
			\CV_R=\{U_{k_1},\ldots,U_{k_m}\}\subset\mathcal{P}(\mathfrak{U}).
		\]
		Notice that $\CV_R$  is a partition of $\mathfrak{U}\setminus R$.
		
		\begin{figure}[H]
			\begin{center}
	\forestset{
		nice empty nodes/.style={
		delay={where content={}{shape=coordinate,for parent={for children={anchor=north}}}{}}
	}}
	\begin{forest}
		grow down/.style={for tree={%
		  grow=south,
		  s sep=.5cm,
		  parent anchor=south,
		  child anchor=north
		  }
		}
		[ , for tree={   l sep=0em, 
									l=3.5em,	   
									grow down,
									if={n_children==0} 
										{circle, 
											draw,
											minimum size=1.2em,
											l sep=1.6em,
											inner sep=0
										}
										{
										}
								}, 
							nice empty nodes
	   [ , edge={revokedEdge, line width=\revokedWidth} [ , edge={revokedEdge, line width=\revokedWidth} [  $u_1$ , edge={dashed},
	   tikz={\node [draw,rounded corners,line width=\boxCVWidth,boxCV,fit=()] {};} ] 
		   [  $u_2$, edge={revokedEdge, line width=\revokedWidth}, fill=revokedNode ] ] [ $\nu_{k_2}$ , draw,edge={dashed}, tikz={\node [draw,rounded corners,line width=\boxCVWidth,boxCV,fit=(!1)(!2)] {};}[  $u_3$ ,edge={dashed} ] 
		   [  $u_4$ ,edge={dashed} ] ] ]
	   [ , edge={revokedEdge, line width=\revokedWidth} [ , edge={revokedEdge, line width=\revokedWidth} [  $u_5$, edge={revokedEdge, line width=\revokedWidth}, fill=revokedNode ] 
		   [  $u_6$ ,edge={dashed},
	   tikz={\node [draw,rounded corners,line width=\boxCVWidth,boxCV,fit=()] {};}
	    ] ]
	     [ , edge={revokedEdge, line width=\revokedWidth} [  $u_7$, edge={revokedEdge, line width=\revokedWidth}, fill=revokedNode ] 
		   [  $u_8$, edge={revokedEdge, line width=\revokedWidth}, fill=revokedNode ] ] ] ]
	\end{forest}
	\end{center}
	\caption{$R=\{u_2,u_5,u_7,u_8\}$. The solid subtree is the Steiner tree $S(\tree{},R)$, the gray leaves are the revoked users, $\nu_{k_1}=u_1$, $\nu_{k_2}\not\in\mathfrak{U}$, $\nu_{k_3}=u_6$, $U_{k_1}=\{u_1\}$, $U_{k_2}=\{u_3,u_4\}$, $U_{k_3}=\{u_6\}$.}
		\label{figure.cover}
		\end{figure}
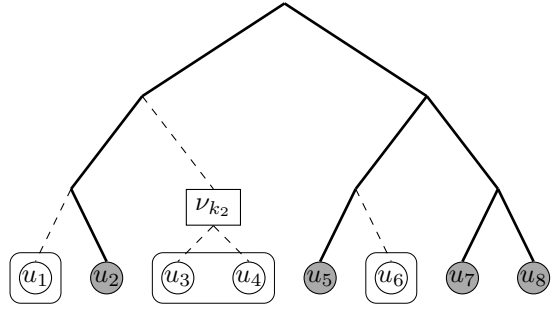
		
		\algorithmdefsingleoutput{CS}{Match}{\CV_R,\PV_u}{U} we find a match between a covering set $\CV_R$ and a private set $\PV_u$ simply by considering the (unique) element in the intersection of the two sets, $U\in \CV_R \cap \PV_u$, when it exists (see Figure \ref{fig3}). The output is $U$
		if $U$ exists, empty otherwise.
		A match exists if and only if the user is not revoked. It must be unique, since $\CV_R$ is a partition.
	
	\begin{figure}[H]
	\begin{center}
	\forestset{
		nice empty nodes/.style={
		delay={where content={}{shape=coordinate,for parent={for children={anchor=north}}}{}}
	}}
	\begin{forest}
		grow down/.style={for tree={%
		  grow=south,
		  s sep=.5cm,
		  parent anchor=south,
		  child anchor=north
		  }
		}
		[ , for tree={   l sep=0em, 
									l=3.5em,	   
									grow down,
									if={n_children==0} 
										{circle, 
											draw, 
											minimum size=1.2em,
											l sep=1.6em,
											inner sep=0
										}
										{
										}
								}, 
							nice empty nodes
	   [ , edge={revokedEdge, line width=\revokedWidth} [  , edge={revokedEdge, line width=\revokedWidth} [  $u_1$, edge={dashed} ] 
		   [  $u_2$ , edge={revokedEdge, line width=\revokedWidth} , fill=revokedNode ] ]
	     [ , edge={consideredEdge, dotted, line width=\consideredWidth},
	   tikz={\node [draw,rounded corners,line width=\boxSetWidth,boxSet,fit=(!1)(!2)] {};} [  $u_3$ , edge={consideredEdge, dotted, line width=\consideredWidth}, fill=consideredNode] 
		   [  $u_4$ , edge={dashed} ] ] ]
	   [ , edge={revokedEdge, line width=\revokedWidth} [ , edge={revokedEdge, line width=\revokedWidth} [  $u_5$, edge={revokedEdge, line width=\revokedWidth}, fill=revokedNode ] 
		   [  $u_6$ , edge={dashed}
	    ] ]
	     [ , edge={revokedEdge, line width=\revokedWidth} [  $u_7$, edge={revokedEdge, line width=\revokedWidth}, fill=revokedNode ] 
		   [  $u_8$, edge={revokedEdge, line width=\revokedWidth}, fill=revokedNode ] ] ] ]
	\end{forest}
	\end{center}
	\caption{Considering Figure~\ref{figure.cover}, the box is the output of \singlefunction{CS.Match} for $\PV{}_{u_3}=\big\{ \{u_3\},\{u_3,u_4\},\{u_1,u_2,u_3,u_4\},\mathfrak{U}  \big\}$. }
	\label{fig3}
	\end{figure}
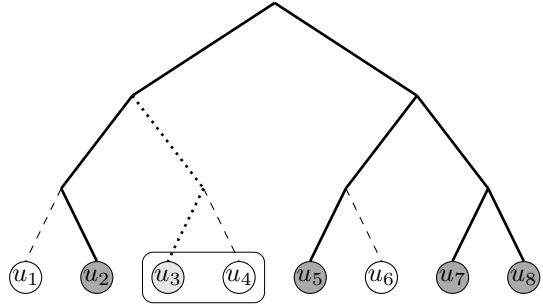
	
	\end{description}
	\begin{remark}
	It is worth considering the size of the private set and of the covering set. Recall that $N_{\textnormal{max}}=2^{d}$ and let $|R|=r$ then~\cite{naor2001SC}
	\[
	|\PV_u|=\log{}_2(N_{\textnormal{max}})=d;
	\]\[
	  |\CV_R|\leq r \log{}_2(N_{\textnormal{max}}/r)=r(d-\log{}_2(r)).
	\]
	
			\end{remark}
		
	\subsection{Self-updatable encryption scheme}
The self-updatable encryption scheme (SUE) manages the evolution of the system, introducing a discrete time structure used to equip ciphertexts and keys with specific times. The basic idea behind this scheme is to associate (with depth-first search) each time with a node of a perfect binary tree and then to employ the ciphertext-delegatable encryption scheme (CDE)~\cite{lee2013RSABE} to validate the time of the given decryption key, exploiting the tree structure. This allows to identify the nodes associated with a time bigger than a certain threshold, by using a small number of keys.\\
Here we present  a high-level description of  the SUE scheme.
The SUE scheme is characterized by the following five PPT algorithms, the detailed description is provided by Lee et al.~\cite{lee2013RSABE}.
\begin{description}

	\algorithmdef{SUE}{Setup}{\mathscr{S},T_\textnormal{max}}{\mathrm{MK},\mathrm{PI},\mathrm{PK}} given a string\footnote{we might  as well give a group descriptor.} $\mathscr{S}=((N, \G, \GT, e), g_1,p_1,p_2,p_3)$ and the maximum admissible time $T_\textnormal{max}\in\mathbb{N}$, it outputs the master  key $\mathrm{MK}$, which is used by the authority $\mathcal{C}$ to create private keys,   the public information $\mathrm{PI}$ and the (general) public key $\mathrm{PK}$. In particular, we choose randomly the elements $\beta \in \Z_N$ and $Z\in\G_{p_3}$. The outputs of the function are
	$\mathrm{MK}=( \beta, Z )$, $\mathrm{PI}= (N,\G,\GT,e)$  
	and the (general) public key $\mathrm{PK}$, that is, a string containing $g=g_1$, $e(g,g)^\beta$ and some randomly chosen elements in $\G_{p_1}$.

		\algorithmdefsingleoutput{SUE}{GenKey}{\mathrm{PI},\mathrm{PK},\mathrm{MK},T^{\prime}}{\mathrm{SK}_{T^{\prime}}} given the public information $\mathrm{PI}$, the public key $\mathrm{PK}$, the master key $\mathrm{MK}$ and a time $T^{\prime}$, it outputs a private key $\mathrm{SK}_{T^{\prime}}$ associated  $T^{\prime}$.

		\algorithmdef{SUE}{Encrypt}{\mathrm{PI},\mathrm{PK},T,M}{C,\mathrm{CH}_{T}} given the public information $\mathrm{PI}$, the public key $\mathrm{PK}$, a time $T$ and a plaintext $M$, it outputs the ciphertext $C$ and the  ciphertext header $\mathrm{CH}_{T}$ associated with the time $T$. In particular, we choose randomly an element $s\in \Z_N$ and a session key $\mathrm{EK}=e(g,g)^{s\beta}$ is computed. With $\mathrm{EK}$, we compute $C=M\cdot\mathrm{EK}$ and we do not output $\mathrm{EK}$.

		\algorithmdef{SUE}{Decrypt}{\mathrm{PI},\mathrm{SK}_{T^{\prime}},\mathrm{CH}_{T},C}{\mathrm{EK},M} given
		the public information $\mathrm{PI}$, a ciphertext $(C,\mathrm{CH}_{T})$ for the time $T$ and a secret key $\mathrm{SK}_{T^{\prime}}$ for the time $T^{\prime}$, first it computes $\mathrm{EK}$ from the ciphertext header $\mathrm{CH}_{T}$ and then  $M=C\cdot\mathrm{EK}^{-1}$. Note that the output  might be empty in case the given key is not allowed to decrypt the ciphertext.
		
		\algorithmdefsingleoutput{SUE}{Update$\mathrm{CT}$}{\mathrm{PI},\mathrm{PK},\mathrm{CH}_{T}}{\mathrm{CH}_{T+1}}
		given the public information $\mathrm{PI}$, the public key $\mathrm{PK}$ and a ciphertext $\mathrm{CH}_{T}$ associated with a time $T$, it outputs a ciphertext $\mathrm{CH}_{T+1}$ associated with the time $T+1$, which decrypts to the same plaintext of $\mathrm{CH}_{T}$

		\algorithmdefsingleoutput{SUE}{Randomize$\mathrm{CT}$}{\mathrm{PI},\mathrm{PK},\mathrm{CH}_{T}}{\overline{\mathrm{CH}}_{T}} given the public information $\mathrm{PI}$, the public key $\mathrm{PK}$ and a ciphertext $\mathrm{CH}_{T}$ for the time $T$, it outputs a  ciphertext $\overline{\mathrm{CH}}_{T}$ for the same time $T$.
		This function takes a ciphertext and transforms it to another one, associated with the same time, but which decrypts to the same plaintext.\footnote{The function is named ``Randomize" because it changes the internal randomness used to create the ciphertext.} 
		
	\end{description}
	
	\noindent The \emph{correctness} of this scheme is characterized by a correct decryption:
	\[
		\singlefunction{SUE.Decrypt}(\mathrm{PI},\mathrm{SK}_{T^{\prime}},\mathrm{CH}_T,C)=(\mathrm{EK},M)
	\]
	when $T$ is a time less or equal than $T^{\prime}$, or otherwise empty.
	This must be true for every output of \singlefunction{SUE.Setup}, any valid times $T$, $T^{\prime}$, any ciphertext $(C,\mathrm{CH}_T)$ generated by \singlefunction{SUE.Encrypt} and any private key $\mathrm{SK}_{T^\prime}$ generated by \singlefunction{SUE.GenKey}.
	We require also that for any ciphertext $(C,\mathrm{CH}_T)$, the decryption of $(C,\mathrm{CH}_T)$ offers the same plaintext as the 
	the decryption of $(C,$\singlefunction{SUE.Randomize$\mathrm{CT}$}$)$.
	The distribution of the outputs of \singlefunction{SUE.Randomize$\mathrm{CT}$} must be equal to the distribution of the outputs of \singlefunction{SUE.Encrpyt}.
	
	\subsection{Attribute-based encryption scheme}
	The attribute-based encryption scheme (ABE) regulates the access to ciphertexts with respect to the presence or absence of specific attributes.
	ABE is an encryption scheme delineated for the first time by Sahai and Waters~\cite{sahai2005}. The first fully secure ABE scheme is described by Lewko et al.~\cite{lewko2010ABE}, and this is the scheme we are going to employ. Various other results for ABE are described in~\cite{chase2007, garg2013, waters2011}, and a non-monotonic ABE scheme in~\cite{ostrovsky2007}.
	\begin{definition}[Access structure]
	\label{def.accessstructure}
	An \emph{attribute} is an element of a given finite set $\mathfrak{A}=\left\{a_1,\ldots,a_n\right\}$, the set of all attributes. An \emph{access structure} is a non-empty subset of $ \powerset{\mathfrak{A}}$. 
	An access structure $\mathbb{A}$ is \emph{monotonic}  if for any $S\in\mathbb{A}$  we have that also
	all supersets of $S$ belong to $\mathbb{A}$. 
	\end{definition}
	\noindent From now on we consider implicitly only monotonic access structure.
	
	We can find two different forms of ABE schemes, depending on the position of attributes and access structure between the key and the ciphertext. The differentiation was first stated in~\cite{goyal2006}.
	\begin{itemize}
		\item
		A \emph{ciphertext-policy} (\emph{CP}) ABE scheme provides  each ciphertext with an access structure $\mathbb{A}$ (from which the name ``ciphertext-policy''), and  each secret key with a set of attributes $S$.
		\item
		A \emph{key-policy} (\emph{KP}) ABE scheme does exactly the converse: it provides  each secret key with an access structure $\mathbb{A}$, and  each ciphertext with a set of attributes $S$.
	\end{itemize}
	A general method to transform a KP-ABE in a CP-ABE scheme is discussed in~\cite{goyal2008}.	
	We are going to describe the KP version of the scheme of Lewko \cite{lewko2010ABE}, since we are going to construct our RS-ABE scheme from it. A key-policy RS-ABE scheme was independently described by \citeauthor{lee2014}~\cite{lee2014}; however the author employs a different construction and different assumptions to prove the security of the scheme.
\subsubsection{Linear Secret Sharing Scheme.}	
Before describing the construction of the ABE algorithm, we have to recall the construction of a \emph{linear secret sharing scheme (LSSS)} that is used in order to manage the attribute structure. For more details about LSSS see~\cite{beimel1996SSS}.
		\label{LSSS_def}

	The classical construction, that we are going to present, distributes the shares to each \texttt{LSSS user}. However in an ABE scheme the role of the LSSS users is played by the attributes in $\mathfrak{A}$. So we will use the set $\mathfrak{A}$ for the set of LSSS users, with $a_h\in\mathfrak{A}$ denoting any LSSS user.

		Let $\G=\Z_{p_1p_2p_3}$ be the set of the possible secrets. A \emph{linear secret sharing scheme}  (\emph{LSSS}) for a secret $s\in \G$ is a scheme constructed with the following procedure. 
		\begin{enumerate}
			\item
			Let $m\in\mathbb{N}$ be a public parameter. The authority knows the secret $s$ and chooses a random element $\vec{r}=\left(r_2,\ldots,r_{m}\right)\in \G^{m-1}$ uniformly over $\G^{m-1}$. 
			\item
			 The authority assign to each LSSS user $a_i$   a vector in $\G^{d_i}$, called the \emph{vector of shares}. The $\{d_i\}_{1\leq i \leq 2^d}\subset\mathbb{N}$ are  public parameters.  Each vector entry, a \emph{share}, is a linear combination of $s$ and of $r_i$'s.
	\noindent We denote the $j\ith$ element of the shares vector of $a_h$ as $\pi_{h,j}(s,\vec{r})$, and we call $l=\sum_{a_h\in \mathfrak{A}}d_h$ the \emph{size} of the scheme.
		\end{enumerate}
	We can then identify a generic LSSS with a pair $(B,\rho)$ and a vector $\vec{v}$, where:
	
	\begin{itemize}
		\item
		$B$ is an $l\times m$ matrix used to generate all the shares.
		\item
		$\rho: \{1,\ldots,l \}\rightarrow \mathfrak{A}$ is a function which assigns to every row of the matrix its corresponding LSSS user.
		\item
		$\vec{v}=\left(s,r_2,\ldots,r_{m}\right)$ is a vector in $\G^{m}$ which has the secret as the first component, and the remaining ones are random elements of $\G$.
	\end{itemize}
The authority creates the shares multiplying the matrix $B$ by the vector $\vec{v}$. Then, each LSSS user $a$ receives all  corresponding components of the vector $B\cdot \vec{v}$, i.e., $a$ obtains from the authority the elements with position in the set $\{h\mid \rho(h)=a\}$ as shares. The sets of LSSS users who can obtain the secret are the \emph{authorized sets}. The authorized sets form an access structure $\mathbb{A}$ which is monotonic, since adding LSSS users does not diminish the number of shares available for retrieving the secret.

Any authorized set $S\in \mathbb{A}$ can reconstruct the secret via linear combination of their shares. This means that there exist some constants 
			\[ \{ \omega_{h,j} \in \G \mid a_h\in S, 1\leq j\leq d_h \} \]
			such that the secret $s$ can be obtained as
			\[ s=\sum_{a_h \in S} \sum_{j=1}^{d_h} \omega_{h,j} \cdot \pi_{h,j}(s,\vec{r}) . \]
			More details are provided by Beimel~\cite{beimel1996SSS}.

\subsubsection{Construction of the ABE scheme.}\label{ABE.construction}
First we have to suppose that the function $\rho$ is injective, i.e., to each attribute is assigned only a single share. This hypothesis is used in the proof of security of  RS-ABE\@ scheme (see Section~\ref{security_section}). This restriction can be lifted by enlarging the dimensions of the keys, as we are going to see in Section~\ref{efficiency.sec}. For describing the construction we use the same notation as in the previous sections.
\begin{description}
		\algorithmdef{ABE}{Setup}{\mathscr{S},\mathfrak{A}}{\mathrm{MK},\mathrm{PI},\mathrm{PK}} where  $\mathscr{S}=((N, \G, \GT, e), g_1,p_1,p_2,p_3)$ and  $\mathfrak{A}$ is a set of attributes.\\ 
We choose randomly the elements:
	\begin{itemize}
		\item $\gamma \in \Z_N$;
		\item $T_a\in \G_{p_1}\text{, for all }a\in \mathfrak{A}$;
		\item $Z\in\G_{p_3}$.
	\end{itemize}
	The outputs of the function are the master secret key $\mathrm{MK}$, the public information $\mathrm{PI}$,
	\[
		\mathrm{MK}=\left( \gamma, Z \right) \qquad \mathrm{PI}= \left(N,\G,\GT,e\right)
	\]
	and the (general) public key,
	\[
		\mathrm{PK}=\left( g=g_1, \left\{T_a\right\}_{a\in \mathfrak{A}}, \Lambda=e(g,g)^\gamma \right).
	\]
	
	\algorithmdefsingleoutput{ABE}{GenKey}{\mathrm{PI},\mathrm{PK},\mathrm{MK},\mathbb{A}}{\mathrm{SK}_{\mathbb{A}}} where $\mathbb{A}=(B,\rho)$ is a generic LSSS access structure  as in Section~\ref{LSSS_def}, of size $l$ for a set of attributes $\mathfrak{A}$, such that $B\in \matrixset{l }{m }{\Z_N}$  is a matrix with $l$ rows and $m$ columns.\\ 
	We choose randomly the elements:
\begin{itemize}
		\item $r_2,\ldots,r_m  \in \Z_N$;
		\item $s_1,\ldots,s_l  \in \Z_N$;
		\item $Z_{1,i},Z_{2,i}\in \G_{p_3}\text{, for all }1\leq i\leq l $.
	\end{itemize}
	We also define the vector $\vec{v}=\left( \gamma, r_2, \ldots, r_m  \right)$.\\
	The output is the secret key\footnote{Any secret key is actually a list of $l$ key pairs. Both keys in a pair lie in $\G_{p_1p_3}$.} associated with the access structure $\mathbb{A}$:
	\[
		\mathrm{SK}_{\mathbb{A}}=\left(\left\{K_{1,i}=g^{B_i\cdot \vec{v}} (T_{\rho(i)})^{s_i} Z_{1,i}, K_{2,i}=g^{s_i} Z_{2,i}\right\}_{1\leq i\leq l } \right),
	\]
	where $B_i$ indicates the $i\ith$ row of the matrix $B$.
	\algorithmdef{ABE}{Encrypt}{\mathrm{PI},\mathrm{PK},S,M}{C,\mathrm{CH}_{S}} where $S$ is a set of attributes and   $M$ is a plaintext. First, randomly chosen $s\in\mathbb{Z}_N$, it computes
	\[
		\mathrm{EK}=\Lambda^s\in \GT;
	\]
	then the outputs are $C=M\cdot\mathrm{EK}$ and the ciphertext header:
	\[
		\CH_{S}=\left(C_0=g^s,\left\{C_{1,a}=T_a^s\right\}_{a\in S} \right).
	\]

	\algorithmdef{ABE}{Decrypt}{\mathrm{PI},\mathrm{SK}_{\mathbb{A}},\CH_S,C}{\EK,M} the output is the same session key $\EK$ we have obtained when we ran \algorithm{ABE}{Encrypt} and the corresponding plaintext $M$. Since $S$ is authorized  for $\mathbb{A}$, we know that there exist some constants $\omega_j \in \Z_N$ such that 
	\[
		\sum_{\rho(i)\in S} \omega_i \cdot B_i = \left(1,0,\ldots,0\right).
	\]
	If we know the $\omega_i$'s. we can compute the session key:
	\begin{equation}\label{eq:EKABE}
		\EK=\prod_{\rho(i)\in S} \left( \frac{e(C_0,K_{1,i})}{e(C_{1,\rho(i)},K_{2,i})} \right)^{\omega_{i}};
	\end{equation}
	and then the plaintext $M=C\cdot \EK^{-1}$.

	\algorithmdef{ABE}{Randomize$\mathrm{CT}$}{\mathrm{PI},\mathrm{PK},\EK,\CH_{S},\overline{s}}{\EK^{\prime},\overline{\CH}_{S}} $\overline{s}\in \Z_N$.
	If we call the components of the given header as $\CH_{S}=\left( C_0, \left\{ C_{1,a}\right\}_{a\in S} \right)$ the outputs are the rerandomized session key:
	\[
		\overline{\EK}=\EK \cdot \Lambda^{\overline{s}};
	\]
	and the rerandomized ciphertext header:
	\[
	\overline{\CH}_{S}=\left(C_0\cdot g^{\overline{s}},\left\{C_{1,a}\cdot T_a^{\overline{s}}\right\}_{a\in S} \right)
	\]

	\end{description}
	
	The correctness of the ABE algorithm follows from the properties of the bilinear map $e$.
	
	\begin{align*}
		\prod_{\rho(i)\in S} \left( \frac{e(C_0,K_{1,i})}{e(C_{1,\rho(i)},K_{2,i})} \right)^{\omega_{i}} &= \prod_{\rho(i)\in S} \left( \frac{e(g^s,g^{B_i\cdot \vec{v}} T_{\rho(i)}^{s_i} Z_{1,i})}{e(T_{\rho(i)}^s,g^{s_i} Z_{2,i})} \right)^{\omega_{i}} \\
		&= \prod_{\rho(i)\in S} \left( \frac{e(g^s,g^{B_i\cdot \vec{v}})\cdot e(g^s,T_{\rho(i)}^{s_i})}{e(g^{s_i},T_{\rho(i)}^{s})} \right)^{\omega_{i}} \\
		&= \prod_{\rho(i)\in S} e(g^s,g^{B_i\cdot \vec{v}})^{\omega_{i}} \\
		&= \left( e(g,g) ^{\sum_{\rho(i)\in S}\omega_{i}B_i\cdot \vec{v}}\right)^s \\
		&= e(g,g)^{\gamma s} = \Lambda^{s}.
	\end{align*}
	
	\pagebreak
	\section{The scheme}\label{sec:scheme}
	
	In this section we describe our construction of a key-policy RS-ABE scheme and we show its correctness. We strongly recommend the reader to read the following scheme description while rereading Section \ref{sec:structure} since we use the notation we introduced before, providing the details required to build the scheme.
	
	\subsection{Construction}
	
	\begin{description}
	\algorithmdef{RS-ABE}{Setup}{\lambda,\mathfrak{A},T_{\textnormal{max}},N_{\textnormal{max}}}{\mathit{\mathrm{MK}},\mathit{\mathrm{PI}},\mathit{\mathrm{PK}}} 
	
	We take the following steps.
	\begin{enumerate}
		\item
		We use the security parameter to generate a group descriptor $\mathscr{G}(\lambda)$ (De\-finition~\ref{groupdes.def}) using a group descriptor generator $\mathscr{G}$ (Definition~\ref{groupdesgen.def}). We fix a generator $g=g_1$ of $\G_{p_1}$, and we define $\mathscr{S}=\left((N, \G, \GT, e), g_1,p_1,\allowbreak{}p_2,\allowbreak{}p_3\right)$.
		\item
		We use \algorithm{CS}{Setup}$(N_{\textnormal{max}})$ to obtain a perfect binary tree $\tree$ with $N_{\textnormal{max}}=2^d$ leaves.
		\randomchoose{p}$\gamma_{i}\in \Z_N$, for each node $\nu_i\in \tree$.
		Every node $\nu_i$ of the tree $\tree$ is associated with the element $\gamma_{i}$.
		\item
		We choose randomly the elements:
		\begin{itemize}
		\item $T_a\in \G_{p_1}\text{, for all }a\in \mathfrak{A}$;
		\item $Z\in\G_{p_3}$;
		\item $\alpha\in \Z_N$.
	\end{itemize}
		
		\item
		For each node $\nu_i\in\tree$ 
	we define 
	\[
		\mathrm{MK}_{\textnormal{ABE},i}=\left( \gamma_i, Z \right). 
	\]
	
	Note that we are considering almost the same construction of $\mathrm{MK}$ as \algorithm{ABE}{Setup}:  for each node $\nu_i\in\tree$ we have a common $Z$ but a different $\gamma_i$. We will denote $\mathrm{MK}_{\textnormal{ABE}}=\{\mathrm{MK}_{\textnormal{ABE},i}\}_{\nu_i\in\tree}$.
	\item
	We define
	\[
		\mathrm{PK}_{\textnormal{ABE}}=\left( g=g_1, \left\{T_a\right\}_{a\in \mathfrak{A}}\right).
	\]
Note that we are considering almost the same construction of $\mathrm{PK}$ as \algorithm{ABE}{Setup} but here   $\mathrm{PK}_{\textnormal{ABE}}$ does not contain $e(g,g)^{\gamma_i}$ for any $\nu_i\in\tree$.
\item
For each node $\nu_i\in\tree$ we define
	\[
		\mathrm{MK}_{\textnormal{SUE},i}=\left( \alpha-\gamma_i, Z \right).
	\]
	
	Note that we are considering almost the same construction of $\mathrm{MK}$ as \algorithm{SUE}{Setup}: for each node $\nu_i\in\tree$ we have a common $Z$ but a different $\gamma_i$ (the same considered before). We will denote $\mathrm{MK}_{\textnormal{SUE}}=\{\mathrm{MK}_{\textnormal{SUE},i}\}_{\nu_i\in\tree}$.
	\item
	We define $\mathrm{PK}_{\textnormal{SUE}}$ containing $g=g_1$ and some randomly chosen elements in $\G_{p_1}$.\\
	Note that we are considering almost the same construction of $\mathrm{PK}$ as \algorithm{SUE}{Setup} but here   $\mathrm{PK}_{\textnormal{SUE}}$ does not contain $e(g,g)^{\alpha-\gamma_i}$ for any $\nu_i\in\tree$.

		
	\end{enumerate}
	The final output is:
	\[
		\mathit{\mathrm{MK}}=\left( \alpha, \tree,\mathrm{MK}_{\textnormal{ABE}},\mathrm{MK}_{\textnormal{SUE}}\right),
	\]
	\[
		\mathit{\mathrm{PI}}=\left( N, \G, \GT, e\right),
	\]
	\[
		\mathit{\mathrm{PK}}=\left( \Omega=e(g,g)^{\alpha},  \mathit{\mathrm{PK}}_{\textnormal{ABE}}, \mathit{\mathrm{PK}}_{\textnormal{SUE}}\right).
	\]
	
	\algorithmdefsingleoutput{RS-ABE}{GenKey}{\mathit{\mathrm{PI}},\mathit{\mathrm{PK}},\mathit{\mathrm{MK}},\mathbb{A},u}{\mathit{\mathrm{SK}}_{\mathbb{A},u}} this function generates a secret key associated with the user index $u$ and an access structure $\mathbb{A}$, requiring as input also the public information, the (general) public key and the master secret key.
	\begin{enumerate}
		\item
		We run \algorithm{CS}{Assign}$(\tree,u)$ and obtain a private set $\PV_u=\{U_{k_1},\ldots,U_{k_d}\}$.
		\item
 		 We consider the ABE  public key $\mathit{\mathrm{PK}}_{\textnormal{ABE}}$ and for every $i$ from $1$ to $d$ the ABE master private key $\mathit{\mathrm{MK}}_{\textnormal{ABE},k_i}=(\gamma_{k_i},Z)$. We  obtain $\mathit{\mathrm{SK}}_{\textnormal{ABE},k_i}$ by running \algorithm{ABE}{GenKey}$(\mathit{\mathrm{PI}},\mathit{\mathrm{PK}}_{\textnormal{ABE}},$ $\mathit{\mathrm{MK}}_{\textnormal{ABE},k_i},\mathbb{A})$.
	\end{enumerate}
	Then we output the secret key:
	\[
		\mathit{\mathrm{SK}}_{\mathbb{A},u}=\left( \mathit{\mathrm{PV}}_u, \mathit{\mathrm{SK}}_{\textnormal{ABE},k_1}, \ldots, \mathit{\mathrm{SK}}_{\textnormal{ABE},k_d} \right).
	\]
	
	\algorithmdefsingleoutput{RS-ABE}{UpdateKey}{\mathit{\mathrm{PI}}, \mathit{\mathrm{PK}},\mathit{\mathrm{MK}},T,R}{\mathit{\mathrm{TK}}_{T,R}} this function generates the key $\mathit{\mathrm{TK}}_{T,R}$ which associates with a certain time $T$ the corresponding set of revoked users $R$. It requires  the public information, the public key and the master private key of the scheme.
	\begin{enumerate}
		\item
		We run \algorithm{CS}{Cover}$(\tree,R)$ and obtain a covering $\mathrm{CV}_R=\{U_{k_0},\ldots,U_{k_m}\}$, for some $m$, and every set $U_{k_i}$ is the set of the leaves of the subtree $S_{\nu_{k_i}}$.
		\item
		We consider the SUE public key $\mathit{\mathrm{PK}}_{\textnormal{SUE}}$ and for every $i$ from $0$ to $m$  the SUE master private key $\mathit{\mathrm{MK}}_{\textnormal{SUE},k_i}=(\alpha - \gamma_{k_i},Z)$. We  obtain $\mathit{\mathrm{SK}}_{\textnormal{SUE},k_i}$ by running \algorithm{SUE}{GenKey}$(\mathit{\mathrm{PI}},\mathit{\mathrm{PK}}_{\textnormal{SUE}}, \mathit{\mathrm{MK}}_{\textnormal{SUE},k_i},T)$.
	\end{enumerate}
	Then we output the time key:
	\[
		\mathit{\mathrm{TK}}_{T,R}=\left( \mathit{\mathrm{CV}}_R, \mathit{\mathrm{SK}}_{\textnormal{SUE},k_0}, \ldots, \mathit{\mathrm{SK}}_{\textnormal{SUE},k_m} \right).
	\]
	
	\algorithmdefsingleoutput{RS-ABE}{Encrypt}{\mathit{\mathrm{PI}},\mathit{\mathrm{PK}},M,S,T}{\mathrm{CT}_{S,T}} we require as input the public information, the public key, a  plaintext $M\in \GT$, a set of attributes $S$ and a time $T$. The output is a ciphertext $\mathrm{CT}_{S,T}$.
	\randomchoose{s}$s\in \Z_N$.
	Then we run \algorithm{ABE}{Encrypt}$(\mathit{\mathrm{PI}},\mathit{\mathrm{PK}}_{\textnormal{ABE}},S,M)$ and \algorithm{SUE}{Encrypt}$(\mathit{\mathrm{PI}},\mathit{\mathrm{PK}}_{\textnormal{SUE}},T,M)$ in order to obtain $\mathit{\mathrm{CH}}_{\textnormal{ABE}}$ and $\mathit{\mathrm{CH}}_{\textnormal{SUE}}$. 
	The output ciphertext is:
	\[
		\mathrm{CT}_{S,T}=\left( \mathit{\mathrm{CH}}_{\textnormal{ABE}},\mathit{\mathrm{CH}}_{\textnormal{SUE}},C=\Omega^s \cdot M \right).
	\]
	
	\algorithmdefsingleoutput{RS-ABE}{Decrypt}{\mathit{\mathrm{PI}},\mathrm{CT}_{S,T},\mathit{\mathrm{SK}}_{\mathbb{A},u},\mathit{\mathrm{TK}}_{T^{\prime}\!,R}}{M} we require as input the public information, a ciphertext $\mathrm{CT}_{S,T}=\left( \mathit{\mathrm{CH}}_{\textnormal{ABE}}, \mathit{\mathrm{CH}}_{\textnormal{SUE}}, C \right)$ for a time $T$ and a set of attributes $S$, a private key $\mathit{\mathrm{SK}}_{\mathbb{A},u}=\left( \mathit{\mathrm{PV}}_u, \mathit{\mathrm{SK}}_{\textnormal{ABE},k_1}, \ldots, \mathit{\mathrm{SK}}_{\textnormal{ABE},k_d} \right)$ for a user $u$ and an LSSS access structure $\mathbb{A}$ which contains $S$, a time key $\mathit{\mathrm{TK}}_{T^{\prime}\!,R}=\left( \mathit{\mathrm{CV}}_R, \mathit{\mathrm{SK}}_{\textnormal{SUE},k_0}, \ldots, \mathit{\mathrm{SK}}_{\textnormal{SUE},k_m} \right)$ for a time $T^{\prime}\geq T$ and a set of revoked users such that $u\not\in R$. The output is the  plaintext $M$.
	
	\begin{enumerate}
		\item
		We run \algorithm{CS}{Match}$(\mathit{\mathrm{CV}}_R,\mathit{\mathrm{PV}}_u)$ and obtain a set of users of the form $U_k$.
		\item
		We run \algorithm{ABE}{Decrypt}$(\mathit{\mathrm{PI}},\mathit{\mathrm{SK}}_{\textnormal{ABE},k},\mathit{\mathrm{CH}}_{\textnormal{ABE}})$ to obtain $\mathit{\mathrm{EK}}_{\textnormal{ABE},k}=e(g,g)^{\gamma_k s}$ and  \algorithm{SUE}{Decrypt}$(\mathit{\mathrm{PI}},\mathit{\mathrm{SK}}_{\textnormal{SUE},k},\mathit{\mathrm{CH}}_{\textnormal{SUE}})$ to obtain $\mathit{\mathrm{EK}}_{\textnormal{SUE},k}=e(g,g)^{(\alpha  - \gamma_k) s}$, which are the ciphertext headers used for the decryption in ABE and SUE. By construction,  the ABE ciphertext header and the SUE ciphertext header do not depend on $\gamma_k$, so it is not necessary to include the index $k$ in $\mathit{\mathrm{CH}}_{\textnormal{ABE}}$ and $\mathit{\mathrm{CH}}_{\textnormal{ABE}}$.
	\end{enumerate}
	Then we output the plaintext:
	\[
		C\cdot (\mathit{\mathrm{EK}}_{\textnormal{ABE},k}\cdot \mathit{\mathrm{EK}}_{\textnormal{SUE},k})^{-1}.
	\]
	
	\algorithmdefsingleoutput{RS-ABE}{Update$\mathrm{CT}$}{\mathit{\mathrm{PI}},\mathit{\mathrm{PK}},\mathrm{CT}_{S,T}}{\mathrm{CT}_{S,T+1}} the inputs are the public information, the public key and a ciphertext.
	The output is a ciphertext $\mathrm{CT}_{S,T+1}$ which encrypts the same plaintext of $\mathrm{CT}_{S,T}=\left( \mathrm{CH}_{\textnormal{ABE}}, \mathrm{CH}_{\textnormal{SUE}}, C \right)$ with the time updated by a unit.
	We fix $\mathrm{CH}^{\prime}_{\textnormal{SUE}}$ by running \linebreak \algorithm{SUE}{Update$\mathrm{CT}$}
	$(\mathit{\mathrm{PI}},\mathit{\mathrm{PK}}_{\textnormal{SUE}}, \mathrm{CH}_{\textnormal{SUE}})$.
	The output ciphertext is:
	\[
		\mathrm{CT}_{S,T+1}=\left( \mathrm{CH}_{\textnormal{ABE}},\mathrm{CH}^{\prime}_{\textnormal{SUE}}, C\right).
	\]
	
	\end{description}
	
	\subsection{Correctness}\label{subsec:cor}
	The decryption process finds a match for the private set and the partition of the non-revoked users, which always exists when the user is not revoked, from the construction of the CS scheme~\cite{naor2001SC}. Then we use the keys corresponding to the found match to retrieve the session keys of SUE and ABE. Due to the choice of the private key when creating the keys $\mathit{\mathrm{SK}}_{\mathbb{A},u}$ and $\mathit{\mathrm{TK}}_{T,R}$, from the correctness of the ABE and SUE schemes we obtain $\mathrm{EK}_{\textnormal{ABE},k}=e(g,g)^{\gamma_{k}s}$ and $\mathrm{EK}_{\textnormal{SUE},k}=e(g,g)^{(\alpha - \gamma_{k})s}$.
	The decryption process can be expanded as:
	\begin{align*}
		C\cdot (\mathrm{EK}_{\textnormal{ABE},k}\cdot \mathrm{EK}_{\textnormal{SUE},k})^{-1}&= M \cdot e(g,g)^{\alpha s} \cdot (e(g,g)^{\gamma_{k}s} \cdot e(g,g)^{(\alpha - \gamma_{k})s})^{-1}\\
		&= M \cdot e(g,g)^{\alpha s} \cdot (e(g,g)^{\alpha s})^{-1}\\
		&=M.
	\end{align*}
	The time is correctly updated by \algorithm{RS-ABE}{Update$\mathrm{CT}$}, since we are using the update properties of the SUE scheme.
	
	\subsection{Security for a chosen-plaintext attack}
	\label{security_section}
	We are now ready to state the following result on the security of the key-policy RS-ABE scheme we have just described.
	
	\begin{theorem}[Security]
	\label{RS-ABE_security_th}
	If Assumptions~\ref{a1}, \ref{a2}, and \ref{a3} are valid, then the key-policy RS-ABE scheme is secure under a chosen-plaintext attack.
	\end{theorem}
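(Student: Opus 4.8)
The plan is to use a hybrid argument over a carefully designed sequence of games, following the dual-system-encryption paradigm of Lewko et al.\ and Lee et al., but adapted to the key-policy tree structure. I would first introduce \emph{semi-functional} versions of the ciphertext and of the two kinds of keys (private keys, which consist of $d$ ABE keys indexed by the nodes of the private set $\PV_u$, and time-update keys, which consist of the SUE keys indexed by the nodes of the covering set $\CV_R$). A semi-functional object is the normal object multiplied by an extra component living in the $\G_{p_2}$ subgroup; by the non-degeneracy and orthogonality of the bilinear pairing on the three prime-order subgroups, a semi-functional key decrypts a normal ciphertext correctly, a normal key decrypts a semi-functional ciphertext correctly, but a semi-functional key applied to a semi-functional ciphertext introduces a spurious pairing factor in $\GT$ that (except in a ``nominally semi-functional'' degenerate case) randomizes the recovered session key.

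Concretely, the game sequence would run: $\mathbf{Game}_{\text{real}}$ is the IND-CPA game of Definition~\ref{security.def}; in $\mathbf{Game}_0$ the challenge ciphertext $\mathrm{CT}_*$ is made semi-functional, with indistinguishability reduced to Assumption~\ref{a1} (the subgroup-decision problem lets the simulator embed the unknown $\G_{p_2}$ mass into $C_0$, the $T_a^s$ terms, and the SUE header). Then, for $k=1,\dots,q$ (where $q$ bounds the total number of queried keys), $\mathbf{Game}_k$ makes the first $k$ queried keys semi-functional while leaving the rest normal; the hop from $\mathbf{Game}_{k-1}$ to $\mathbf{Game}_k$ is reduced to Assumption~\ref{a2}, with the extra $\bigO{r_{\max}\log_2 N_{\max}}$ loss per hop coming from the fact that each private key carries $d=\log_2 N_{\max}$ ABE sub-keys and each update key carries up to $r_{\max}(d-\log_2 r_{\max})$ SUE sub-keys, over which one must also hybridize. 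In the final $\mathbf{Game}_q$ all keys are semi-functional, so the session key $\Omega^s = e(g,g)^{\alpha s}$ hidden inside $C$ is information-theoretically (or, via Assumption~\ref{a3}, computationally) independent of $b$; one last hop randomizes $e(g,g)^{\alpha s}$ using the composite Diffie-Hellman assumption, after which $\prob{b=b_*}=1/2$ exactly. Summing the hybrid losses gives the bound displayed (in the commented-out block) in the theorem statement.

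The main obstacle is the standard one in dual-system proofs, here compounded by the tree structure: when the simulator for the $\mathbf{Game}_{k-1}\to\mathbf{Game}_k$ hop tries to produce the $k$-th key in a way that is ``maybe semi-functional'' depending on the challenge term from Assumption~\ref{a2}, it risks producing a \emph{nominally} semi-functional key --- one whose $\G_{p_2}$ component is correlated with that of the challenge ciphertext in exactly the way that would let it still decrypt correctly, thereby letting the adversary trivially distinguish. The resolution uses the restriction imposed in the Challenge and Query~II phases of Definition~\ref{security.def}: for every queried key either $S_*\notin\mathbb{A}_i$, or $u_i$ is revoked at every relevant time, or $T_*>T_j$. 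The first case is handled by the ABE layer: since $\rho$ is injective (the hypothesis fixed in Section~\ref{ABE.construction}) and $S_*\notin\mathbb{A}_i$, the target vector $(1,0,\dots,0)$ is not in the row span of the rows of $B$ indexed by $S_*$, so by linear algebra over $\Z_N$ (really over $\Z_{p_2}$) the share vector $B_i\cdot\vec v$ restricted to the relevant attributes looks uniformly random to the adversary, which decorrelates the $\G_{p_2}$ randomness. The second and third cases are handled symmetrically by the SUE/CS layer: if $u_i$ is revoked, then by the partition property of $\CV_R$ noted after \singlefunction{CS.Cover} there is \emph{no} node shared between $\PV_{u_i}$ and the challenge's covering set, so no ABE--SUE pair with matching $\gamma_k$ is ever available to the adversary; and if $T_*>T_j$, the SUE/CDE tree mechanism guarantees the update key's time nodes do not cover $T_*$. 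I would therefore prove, as a key lemma, that under these admissibility constraints the simulator's ``nominal'' keys are distributed identically to genuinely semi-functional keys from the adversary's view, so the reduction to Assumption~\ref{a2} goes through. A secondary technical point to check carefully is that the shared $Z\in\G_{p_3}$ and the correlated split $\gamma_i$ versus $\alpha-\gamma_i$ across the ABE and SUE master keys at each node do not leak $\alpha$ or break any of the subgroup hops --- this requires verifying that all $\G_{p_3}$ blinding terms and the node-indexed $\gamma_i$'s can be perfectly simulated from the assumption inputs.
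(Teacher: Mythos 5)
Your proposal follows essentially the same dual-system route as the paper: Assumption~\ref{a1} to switch the challenge ciphertext to semi-functional, a hybrid over keys under Assumption~\ref{a2} whose nominal-semi-functionality obstacle is resolved exactly as in the paper (injectivity of $\rho$ together with $S_*\notin\mathbb{A}$ on the ABE side, the revocation/time constraints on the CS/SUE side, your case analysis matching the paper's \typeone{}/\typetwo{} adversary split), and Assumption~\ref{a3} to randomize $e(g,g)^{\alpha s}$ at the end, with the same order of loss.

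The substantive difference, and the point your sketch leaves open, is the organization of the middle hybrid. The paper does not hybridize over the $q$ key queries with an inner hybrid over sub-keys; it hybridizes over the \emph{distinct tree nodes} that occur in queries (node index $h=1,\dots,q_n$), with an inner hybrid over repeated requests at the same node (counter index). The reason is that the semi-functional components of all keys issued at a given node $\nu_i$ are \emph{correlated}: \algorithm{RS-ABE}{GenKeySF} and the semi-functional update-key algorithm reuse fixed per-node elements $\zeta_i,\eta_i$ in the $\G_{p_2}$ exponents. This correlation is not cosmetic --- it is what lets the final Assumption-\ref{a3} simulator, which does not know $\alpha$ and hence cannot form standard time-update keys (these require $\alpha-\gamma_i$), answer every query consistently in \game{G}{2} and \game{G}{3}. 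A per-query hybrid with independently randomized $\G_{p_2}$ parts must explain (i) how the Assumption-\ref{a2} simulator embedding the challenge into the $k$-th key keeps its $\G_{p_2}$ component consistent with already semi-functional keys sharing nodes with it, and (ii) why the terminal game is one the Assumption-\ref{a3} reduction can still simulate. The paper handles precisely this with the four intermediate key types \sfone{}--\sffour{} and the chain \game{H}{k,1}, \game{H}{k,2}, \game{H^\prime}{k,1}, \game{H^\prime}{k,2}, \game{H^{\prime\prime}}{}, which detaches the target node's keys from the shared $\zeta_j$, switches them one request at a time, and then re-attaches the $\zeta_j$ correlation. Two smaller corrections: there is no ``challenge covering set'' (the challenge is a ciphertext); the relevant fact is that any queried update key with $T_j\geq T_*$ has $u_i\in R_j$, so its covering set is disjoint from $\PV_{u_i}$, which is exactly how the paper shows every node admits a \typeone{} or \typetwo{} treatment. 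And in the last game the masked session key is not information-theoretically independent of $b$; the paper genuinely needs Assumption~\ref{a3} there, as your parenthetical hedge suggests.
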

	

The idea behind our proof starts from the strategy adopted by Lee et al.~\cite{lee2013RSABE} for the proof of security of their ciphertext-policy RS-ABE scheme. 
For our key-policy RS-ABE scheme we want to show that a PPT adversary $\mathcal{A}$ plays the security game with negligible advantage.
	
	First we are going to define the so-called \emph{semi-functional} algorithm, a slightly modified version of the algorithm of the existing schemes. Then we define indexes to identify each request of the adversary, and later we describe the game structure and the reductions to our assumptions.
	
	We use hybrid games to prove our result: we split the proof of security of the original game in a sequence of lesser proof involving \emph{hybrid games}, where a hybrid game and the following differ 	only slightly. These hybrid games are constructed by taking each single request of the adversary in account: starting from the base security game, each ensuing game keeps each request equal to its following one, except for a single request. During this request, the key supplied to the adversary changes, with respect to the previous game, from standard to semi-functional.

	\subsubsection{Semi-functional algorithms}
	\label{semifunctional.def}
	Here we define the semi-functional versions of the algorithm \algorithm{RS-ABE}{GenKey}, and \algorithm{RS-ABE}{Enrypt}. 
	In the proof of Theorem \ref{RS-ABE_security_th} (Section \ref{sec:proof}), we will also use the semi-functional version \algorithm{RS-ABE}{UpdateKeySF} of the algorithm \algorithm{RS-ABE}{UpdateKey} but we omit here the definition, since the SUE part of the scheme is only modified, for which we refer to the work done by Lee et al.~\cite{lee2013RSABE}.  
	Henceforth we refer to the unmodified version of those algorithms as \emph{standard}. In the hybrid games the standard version of each algorithm will be gradually swapped by its semi-functional counterpart.

	For all semi-functional algorithms we fix a generator $g_2$ of $\G_{p_2}$. Then we fix randomly in $\Z_N$ two elements $\zeta_i$, $\eta_i$ for each node $\nu_i$ of the binary tree $\tree$. These elements will be used to link each other the semi-functional keys between multiple requests, $\zeta_i$ will be associated with semi-functional private key generation, and $\eta_i$ will be associated with time-update key.	
Moreover, for each possible attribute in $\mathfrak{A}$ we fix an element $z_i$.

	\begin{description}
	
\algorithmdefsingleoutput{RS-ABE}{GenKeySF}{\mathrm{PI},\mathrm{PK},\mathrm{MK},\mathbb{A},u}{\mathrm{SK}_{\mathbb{A},u}} first we generate a standard private key $\mathrm{SK}^{\prime}_{\mathbb{A},u}=( \mathrm{PV}_u, \mathrm{SK}^{\prime}_{\textnormal{ABE},k_1}, \ldots, \mathrm{SK}^{\prime}_{\textnormal{ABE},k_d} )$ using the standard version of the algorithm, \algorithm{RS-ABE}{GenKey}, considering as parameter the user $u$ and the access structure $\mathbb{A}=(B,\rho)$, $B\in\matrixset{\varRow}{\varColumn}{\Z_N}$. The private set is the set $\mathrm{PV}_u = \{U_{k_1},\ldots,U_{k_d}\}$.
	
	We use a construction similar to the one employed in \algorithm{ABE}{GenKey}. For each ABE key $\mathrm{SK}^\prime_{\textnormal{ABE},k_h}=(\{K^\prime_{1,j},K^\prime_{2,j}\}_{j=1}^{\varRow})$ associated with the set $U_{k_h}\in \mathrm{PV}_u$ we fix some random elements $r^\prime_{2,h},\ldots,r^\prime_{\varColumn,h}\in \Z_N$ and we set $\pvec[h]{v}=(\zeta_{k_h},r^\prime_{2,h},\ldots,r^\prime_{\varColumn,h})$. 
	Then the corresponding semi-functional ABE key is given by \linebreak  $
	\mathrm{SK}_{\textnormal{ABE},k_h}=
	\left(\left\{
	K_{1,i}=K^\prime_{1,i}g_2^{B_i\cdot \pvec[h]{v}},
	K_{2,i}=K^\prime_{2,i}\right
	\}_{i=1}^{\varRow}\right).
	$
	
	The output is then:
	\[
		\mathrm{SK}_{\mathbb{A},u}=\left( \mathrm{PV}_u, \mathrm{SK}_{\textnormal{ABE},k_1}, \ldots, \mathrm{SK}_{\textnormal{ABE},k_d} \right).
	\]

\algorithmdef{SUE}{EncryptSF}{\mathrm{PI},\mathrm{PK},T,M,c}{C,\mathrm{CH}_{T}} we omit the definition of this  function, since the SUE part of the scheme is only modified, for which we refer to the work done by Lee et al.~\cite{lee2013RSABE}.

\algorithmdef{ABE}{EncryptSF}{\mathrm{PI},\mathrm{PK},S,M,c}{C,\mathrm{CH}_{S}} first we generate a standard ciphertext header $\mathrm{CH}^\prime_{S}=\left(C^\prime_0,\left\{C^\prime_{1,a}\right\}_{a\in S} \right)$ associated with a set of attributes $S$ and a standard session key $\mathrm{EK}^\prime$.
	
	Then the output is: \[C=M\cdot\mathrm{EK}^\prime,\] and \[\mathrm{CH}_{S}=\left(C_0=C^\prime_0 g_2^c,\left\{C_{1,a}=C^\prime_{1,a} g_2^{cz_a} \right\}_{a\in S} \right).\]

	\algorithmdefsingleoutput{RS-ABE}{EncryptSF}{\mathrm{PI},\mathrm{PK},M,S,T}{\mathrm{CT}_{S,T}} first we generate a standard ciphertext $\mathrm{CT}^{\prime}_{S,T}=\left( \mathrm{CH}^\prime_{\textnormal{ABE}},\mathrm{CH}^\prime_{\textnormal{SUE}},C^\prime \right)$ using \algorithm{RS-ABE}{Encrypt}, associating it to the set of attributes $S$ and the time $T$.
	Now we create a semi-functional header both for the ABE part and the SUE part of the scheme.
	We choose randomly the element $c\in \Z_N$, and	
	we input $c$ in the previous functions: we run \algorithm{ABE}{EncryptSF}$(\mathrm{PI},\mathrm{PK}_{\textnormal{ABE}},S,M,c)$ and \algorithm{SUE}{EncryptSF}$(\mathrm{PI},\mathrm{PK}_{\textnormal{SUE}},T,M,c)$ in order to obtain $\mathrm{CH}_{\textnormal{ABE}}$ and $\mathrm{CH}_{\textnormal{SUE}}$, discarding the session keys.
	
	The output ciphertext is:
	\[
		\mathrm{CT}_{S,T}=\left( \mathrm{CH}_{\textnormal{ABE}},\mathrm{CH}_{\textnormal{SUE}},C=C^\prime \right).
	\]
	
	\end{description}
	
	
	Observe that knowing the semi-functional private key and update key does not allow us to decrypt a semi-functional encrypted text. The result of a decryption attempt would result in a randomized plaintext.

	\section{Proof of Security}\label{sec:proof}
	\subsection{Key indexing}
	During the proof of Theorem~\ref{RS-ABE_security_th} we want to be able to easily manage each different key request made in the querying phases of the security game.
	Each RS-ABE private key is a list of ABE keys, each of them associated with a different node of the tree $\tree$. The same is true for a time-update key, which is basically a list of SUE private keys.
	This means that the adversary makes multiple requests of private ABE and SUE keys associated with the same node.
	
	We consider for example the requests for ABE keys; the notation is identical for SUE keys. We are going to identify each of them using a pair of integers $(i_n,i_c)$. The first component is called \emph{node index}, and the second component is called \emph{counter index}.
	
	Their value is assigned as follows: suppose that a request is made for an ABE key associated with the node $\nu$.
	\begin{itemize}
		\item
		The node index $i_n$ is computed in two different ways, depending on the number of time the same node was requested.
		\begin{itemize}
			\item
			If this is the first time at which a key associated with the node $\nu$ was requested, then we set $i_n$ equal to the number of distinct nodes associated with previous ABE requests.
			\item
			Otherwise, we set the $i_n$ equal to the value of the node index of the previous request for the same node $\nu$.
		\end{itemize}
		In a request for the same node the node index remains always the same.
		\item
		The counter index $i_c$ is instead incremental for the same node, which means the following.
		\begin{itemize}
			\item
			If this is the first time at which a key associated with the node $\nu$ was requested, then we set $i_c$ equal to $1$.
			\item
			Otherwise, we consider the counter index of the previous node and we set $i_c$ as one more of that value.
		\end{itemize}
		In particular, when the same node is requested, the counter index is always different and increasing.
	\end{itemize}
	
	\subsection{Proof}
	\begin{proof}[Theorem \ref{RS-ABE_security_th}]
		We start by defining the hybrid games used in the proof. 
		
		\gamedef{G}{0}{
		The game \game{G}{0} is the standard security game, defined in Section~\ref{security.sec}: the private keys and time-update keys are standard, as is the challenge ciphertext.
		}
		
		\gamedef{G}{1}{
		This game is almost equal to \game{G}{0}, with the exception of the challenge ciphertext, which is semi-functional, i.e., computed using \algorithm{RS-ABE}{EncryptSF} instead of \algorithm{RS-ABE}{Encrypt}.
		}
		
 		\gamedef{G}{1,h}{
 		The index $h$ ranges from $0$ to $q_n$, where $q_n$ is the total number of distinct nodes of $\tree$ for which the adversary can query the corresponding ABE private key or SUE private key.
 		In the game \game{G}{1,h} the challenge ciphertext is semi-functional, and the keys are given as follow.
 			If the request for the ABE or SUE key is identified by a node index $i_n$ less or equal than $h$, then the key supplied by the challenger is semi-functional.
 			Otherwise, the key is a standard ABE or SUE key.
		}
		
		\gamedef{G}{2}{
		In this game everything is semi-functional: the challenge ciphertext, all the private keys and all the time-update keys.
		}
		
		\gamedef{G}{3}{
		The last game is equal to \game{G}{2}, but the ciphertext is random, which means that if 
		$\mathrm{CT}^\prime_{S,T}=( \mathrm{CH}_{\textnormal{ABE}},\mathrm{CH}_{\textnormal{SUE}},C=\Omega^s \cdot M ) $ is the semi-functional ciphertext for the plaintext $M$ in game \game{G}{2}, the ciphertext of \game{G}{3} is 
		$
		\mathrm{CT}^\prime_{S,T}=\left(\mathrm{CH}_{\textnormal{ABE}},\mathrm{CH}_{\textnormal{SUE}},C=\Omega^t \cdot M\right),
		$
		where $t$ is a random element in $\Z_N$.
		}
		
		In particular we can notice how the game \game{G}{1,0} has no semi-functional keys, because the node index is always greater than zero, meaning that $\mathgame{G}{1,0}=\mathgame{G}{1}$. Similarly for \game{G}{1,q_n}: each node index is less or equal than $q_n$, hence $\mathgame{G}{1,q_n}=\mathgame{G}{2}$.
	
	\vspace{-1em} 
	\begin{table}[H]
		\begin{center}
		\makebox[\textwidth][c]{ 
		\renewcommand{\arraystretch}{1.2}

		\begin{tabular}{M{1cm}| M{1.1cm} M{1.1cm} M{1.1cm} | M{1.1cm} M{1.1cm} M{1.1cm} | M{2.5cm} N }
				&	\multicolumn{3}{c|}{\textbf{Private key}}	&	\multicolumn{3}{c|}{\textbf{Time-update key}}	&	\textbf{Ciphertext} &	\\
				&	\multicolumn{3}{c|}{\textbf{$\mathrm{SK}_{\mathbb{A},u}$ }}	&	\multicolumn{3}{c|}{\textbf{$\mathrm{TK}_{T,R}$}}	&	\textbf{$\mathrm{CT}_{S,T}$} &	\\
			
			\hline 
			\game{G}{0}	&	\multicolumn{3}{c|}{\normal}	&	\multicolumn{3}{c|}{\normal}	&	\normal	&	\\
			\game{G}{1}	&	\multicolumn{3}{c|}{\normal}	&	\multicolumn{3}{c|}{\normal}	&	\semifunctional	&	\\
			\multirow{2}{*}{\game{G}{1,h}}	&	$i_n < h$ & $i_n = h$ &	$i_n > h$	&	$i_n < h$ & $i_n = h$ &	$i_n > h$	&	\multirow{2}{*}{\semifunctional}	&	\\ 
			[-0.5em]
				&	\semifunctional&\semifunctional&\normal	&	\semifunctional&\semifunctional&\normal	&		&	\\ 
			\game{G}{2}	&	\multicolumn{3}{c|}{\semifunctional}	&	\multicolumn{3}{c|}{\semifunctional}	&	\semifunctional	&	\\
			\game{G}{3}	&	\multicolumn{3}{c|}{\semifunctional}	&	\multicolumn{3}{c|}{\semifunctional}	&	\random	&	\\
		\end{tabular}
		}
		\end{center}
		\caption{The structure of the components for each hybrid game; \normal{} means standard, \semifunctional{} means semi-functional and \random{} means random.}
	\end{table}
	
	We call $\adv_{\mathcal{A}}^{\textnormal{G}_\textnormal{i}}$ the advantage of the adversary $\mathcal{A}$ for the game \game{G}{i}.
	\game{G}{0} is the original game, therefore $\adv_{\mathcal{A}}^{\textnormal{G}_0}=\adv_{\mathcal{A}}^{\textnormal{RS-ABE}}$. Moreover, since the final game has a randomly chosen session key, we know that the advantage for the last game is $\adv_{\mathcal{A}}^{\textnormal{G}_3}=0$. We also use the fact that $\adv_{\mathcal{A}}^{\textnormal{G}_1}=\adv_{\mathcal{A}}^{\textnormal{G}_{1,0}}$ and $\adv_{\mathcal{A}}^{\textnormal{G}_2}=\adv_{\mathcal{A}}^{\textnormal{G}_{1,q_n}}$.
		
	We split the advantage of the RS-ABE scheme using the advantages on distinguishing hybrid games. This advantage will later be bounded by its advantage in solving the problem in assumptions \ref{a1}, \ref{a2}, and \ref{a3} by Lemma~\ref{lemma.g0}, \ref{lemma.g1h}, and \ref{lemma.g2}, which are used for the inequality in the last line.
	We keep the dependence on $\lambda$ implicit.
	\begin{align*}
		&\adv_{\mathcal{A}}^{\textnormal{RS-ABE}} =\\ 
		&=\adv_{\mathcal{A}}^{\textnormal{G}_0} + \left(\adv_{\mathcal{A}}^{\textnormal{G}_1}-\adv_{\mathcal{A}}^{\textnormal{G}_1}\right) + \sum_{h=1}^{q_n} \left( \adv_{\mathcal{A}}^{\textnormal{G}_{1,h}}-\adv_{\mathcal{A}}^{\textnormal{G}_{1,h}} \right) -\adv_{\mathcal{A}}^{\textnormal{G}_{3}} \\
		&\leq \left| \adv_{\mathcal{A}}^{\textnormal{G}_0} - \adv_{\mathcal{A}}^{\textnormal{G}_1}\right| + \sum_{h=1}^{q_n} \left| \adv_{\mathcal{A}}^{\textnormal{G}_{1,h-1}}-\adv_{\mathcal{A}}^{\textnormal{G}_{1,h}} \right| + \left| \adv_{\mathcal{A}}^{\textnormal{G}_2} - \adv_{\mathcal{A}}^{\textnormal{G}_3}\right| \\
		&\leq \adv_{\mathscr{B}_1}^{\textnormal{A1}} + \sum_{h=1}^{q_n} 4\left(\left(q_{h}^\textnormal{ABE}+q_{h}^\textnormal{SUE}\right)+1\right)\adv_{\mathscr{B}_2}^{\textnormal{A2}} + \adv_{\mathscr{B}_3}^{\textnormal{A3}}.
	\end{align*}
	
	The number $q_{h}^{\textnormal{ABE}}$ (respectively $q_{h}^{\textnormal{SUE}}$) is the number of ABE (respectively SUE) key requests for the node with node index $h$, i.e., the highest value of a counter index in a request with node index $h$.
	The advantages $\adv_{\mathscr{B}_1}^{\textnormal{A1}}$, $\adv_{\mathscr{B}_2}^{\textnormal{A2}}$, and $\adv_{\mathscr{B}_3}^{\textnormal{A3}}$ in the last line are the one for an adversary who solves the assumption A\ref{a1}, A\ref{a2}, A\ref{a3} respectively, with the algorithms $\mathscr{B}_1$, $\mathscr{B}_2$,  $\mathscr{B}_3$ defined in the corresponding lemma.
	
	We recall now the properties of the CS scheme (Section~\ref{cs.sec}) which estimates the number of elements in $\mathrm{PV}_u$ and $\mathrm{CV}_R$: the elements of $\mathrm{PV}_u$ are at most $\log_2 N_\textnormal{max}$, and the elements of $\mathrm{CV}_R$ are at most $r_\textnormal{max}\log_2\left({N_\textnormal{max}}/{r_\textnormal{max}}\right)$, where $r_\textnormal{max}$ is the maximum size for a set of revoked user. Calling $q_{\textnormal{sk}}$ the number of requests for RS-ABE private keys and $q_{\textnormal{tu}}$ the number of requests for time updates, we obtain that:
	\begin{align*}
		\sum_{h=1}^{q_n}\left(q_{h}^{\textnormal{ABE}}+q_{h}^{\textnormal{SUE}}\right) 
		&\leq q_{\textnormal{sk}}\log_2 N_\textnormal{max} + q_{\textnormal{tu}} r_\textnormal{max} \log_2\left( \frac{N_\textnormal{max}}{r_\textnormal{max}}\right) \\
		\leq& \left(q_{\textnormal{sk}}+q_{\textnormal{tu}}\right) r_\textnormal{max} \log_2 N_\textnormal{max}.
	\end{align*}
	
	Thus the final formula is:
	\begin{align*}
	\adv_{\mathcal{A}}^{\textnormal{RS-ABE}}(\lambda) \leq & \adv_{\mathscr{B}_1}^{\textnormal{A1}}(\lambda) + \bigO{q \cdot r_\textnormal{max} \cdot \log_2 N_\textnormal{max} }\adv_{\mathscr{B}_2}^{\textnormal{A2}}(\lambda) + \adv_{\mathscr{B}_3}^{\textnormal{A3}}(\lambda),
	\end{align*}
	where $q=q_{\textnormal{sk}}+q_{\textnormal{tu}}$ is the total number of requests from the adversary.
	\end{proof}
	Now we prove the three aforementioned lemmas.
	Each lemma uses only a single assumption: the order of presentation is based on the required assumption.
	For any detail regarding the SUE scheme we refer to the paper by Lee et al.~\cite{lee2013RSABE}, since the procedure in regard to the SUE scheme is the same.
	\subsubsection{Changing the ciphertext}
	\begin{lemma}
		\label{lemma.g0}
		If Assumption \ref{a1} is true, then no PPT adversary $\mathcal{A}$ is able to distinguish between~\game{G}{0} and~\game{G}{1}, which means that $\left| \adv_{\mathcal{A}}^{\mathgame{G}{0}} - \adv_{\mathcal{A}}^{\mathgame{G}{1}}\right|$ is negligible with respect to $\lambda$.
	\end{lemma}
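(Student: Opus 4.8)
The plan is to reduce directly to Assumption~\ref{a1}: from a PPT adversary $\mathcal{A}$ distinguishing \game{G}{0} from \game{G}{1} I would build a PPT algorithm $\mathscr{B}_1$ whose advantage against Assumption~\ref{a1} is at least $\left|\adv_{\mathcal{A}}^{\mathgame{G}{0}}-\adv_{\mathcal{A}}^{\mathgame{G}{1}}\right|$. The algorithm $\mathscr{B}_1$ receives $D=\left((N,\G,\GT,e),g_1,g_3\right)$ together with a challenge element $W$ that is either uniform in $\G_{p_1}$ or uniform in $\G_{p_1p_2}$, and must decide which; it knows $N$ but not its prime factorization, and this constraint must be respected throughout the simulation.

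First I would have $\mathscr{B}_1$ simulate \singlefunction{Setup} honestly: set $g=g_1$, sample $\alpha\in\Z_N$ and $\gamma_i\in\Z_N$ for every node $\nu_i\in\tree$, sample exponents $t_a\in\Z_N$ and put $T_a=g_1^{t_a}$ for each attribute $a$, and realise the element $Z$ and all the $\G_{p_3}$-blinding factors occurring in \algorithm{ABE}{GenKey} and \algorithm{SUE}{GenKey} as random powers of $g_3$. None of this requires the primes, so $\mathscr{B}_1$ holds a complete master key and can answer every private-key and time-update query of $\mathcal{A}$ by running \algorithm{RS-ABE}{GenKey} and \algorithm{RS-ABE}{UpdateKey} exactly as in the real scheme (the SUE-side computations being those of Lee et al.~\cite{lee2013RSABE}), while also checking the admissibility conditions of the security game. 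This is sound because in both \game{G}{0} and \game{G}{1} every key handed to $\mathcal{A}$ is standard.

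The heart of the reduction is the challenge ciphertext. When $\mathcal{A}$ submits $M_*^0,M_*^1,S_*,T_*$, I would have $\mathscr{B}_1$ pick $b\leftarrow\{0,1\}$ and build the challenge by using $W$ in the role of $g^s$: the ABE header becomes $C_0=W$ and $C_{1,a}=W^{t_a}$ for $a\in S_*$, the SUE header is formed in the same manner (substituting $W$ for $g^s$ in the construction of \cite{lee2013RSABE}), and the blinding factor is $C=e(g_1,W)^{\alpha}\cdot M_*^b$, which is well defined since $e(g_1,W)=e(g_1,g_1)^{s}$ because $g_1$ has order $p_1$ and hence the $\G_{p_2}$-component of $W$ pairs trivially with it. If $W\in\G_{p_1}$, then $W=g_1^{s}$ with $s$ uniform modulo $p_1$ and the ciphertext is exactly a standard one, so $\mathcal{A}$'s view is that of \game{G}{0}. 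If $W\in\G_{p_1p_2}$, write $W=g_1^{s}g_2^{c}$ with $s$ and $c$ independent and uniform modulo $p_1$ and modulo $p_2$ respectively; then $C_0=g^{s}g_2^{c}$, $C_{1,a}=T_a^{s}g_2^{c t_a}$, and $C=\Omega^{s}M_*^b$, which is precisely the output of \algorithm{RS-ABE}{EncryptSF} with semi-functional exponent $c$ and implicit attribute parameters $z_a=t_a\bmod p_2$. Since each $t_a$ was drawn uniformly in $\Z_N$, the residues $t_a\bmod p_1$ and $t_a\bmod p_2$ are independent and uniform, so the $z_a$ have the correct distribution and $\mathcal{A}$'s view is that of \game{G}{1}.

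To conclude, $\mathscr{B}_1$ outputs $1$ exactly when $\mathcal{A}$'s final guess $b_*$ equals $b$; then $\prob{\mathscr{B}_1(D,W)=1}$ equals $\prob{b_*=b}$ computed in \game{G}{0} when $W\in\G_{p_1}$ and in \game{G}{1} when $W\in\G_{p_1p_2}$, whence $\left|\adv_{\mathcal{A}}^{\mathgame{G}{0}}-\adv_{\mathcal{A}}^{\mathgame{G}{1}}\right|\le\adv_{\mathscr{B}_1}^{\textnormal{A1}}$ by the reverse triangle inequality, and the right-hand side is negligible by Assumption~\ref{a1}. The main obstacle I anticipate is purely distributional: verifying that the semi-functional parameters produced implicitly by the simulation — the exponent $c$, the $z_a$, and their SUE-side counterparts — match exactly the distribution prescribed by \algorithm{RS-ABE}{EncryptSF}; once that is done (with the SUE bookkeeping lifted essentially verbatim from \cite{lee2013RSABE}), the remainder of the argument is routine.
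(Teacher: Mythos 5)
Your reduction is correct and is essentially the paper's own argument: simulate \singlefunction{Setup} with known exponents, answer all queries with standard keys, embed $W$ as $C_0=W$, $C_{1,a}=W^{t_a}$ in the challenge, and invoke the Chinese Remainder Theorem to argue that the implicit $z_a=t_a \bmod p_2$ are correctly distributed. The only differences are cosmetic — you make explicit the computation of the masking factor $C=e(g_1,W)^{\alpha}\cdot M_*^b$ and the final advantage accounting, which the paper leaves implicit.
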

	\begin{proof}
		We prove this lemma by contradiction, showing that if a PPT adversary $\mathcal{A}$ could distinguish between~\game{G}{0} and~\game{G}{1}, then there would exist a PPT algorithm $\mathscr{B}_1$ which breaks assumption~\ref{a1}.
	
	Let us take $D=\left(\mathscr{G}(\lambda),g_1,g_3\right)$ and $W$ as inputs, the latter either of the form~$W_1\in\G_{p_1}$ or~$W_2\in\G_{p_1p_2}$.
	We create a simulator $\mathscr{B}_1$ that from the input constructs two games, whose probability distribution is the same as random construction of the~\game{G}{0} (if $W=W_1$) or~\game{G}{1} (if $W=W_2$).
	
	\phase{Setup}
	We follow the procedure of \algorithm{RS-ABE}{Setup}, but when applying \algorithm{ABE}{Setup} and \algorithm{SUE}{Setup} we choose the random components of both of them by fixing a random exponent for the corresponding subgroup generator, instead of directly making the random choices between the elements of the subgroup.
	Let $l$ be the maximum length of a label string, obtained from the maximum time $T_\textnormal{max}$ as $l=d_\textnormal{max}=\left\lceil \log_2(T_\textnormal{max}+2) \right\rceil-1$.
	\randomchoose{p}{
		$a,\gamma\in\Z_N$;
		$t_j^\prime \in \Z_N\text{, for all }j\in \mathfrak{A}$.
	}

	We fix $g=g_1$. Then we set the ABE keys as:	
	\begin{align*}
		\mathrm{MK}_{\textnormal{ABE}}&=\left( \gamma, Z=g_3 \right);\\
		\mathrm{PK}_{\textnormal{ABE}}&=\left( \left(N,\G,\GT,e\right), g, \left\{g^{t_j}\right\}_{j\in \mathfrak{A}}, e(g,g)^\gamma \right).
	\end{align*}
	\disclaimerSUE{}
	By choosing random exponents we are choosing random elements of the corresponding groups with a uniform distribution, thus the output of this procedure has the same probability distribution of the corresponding outputs of the standard setup procedure.
	Moreover we create the binary tree $\tree$ associating to each node $\nu_i$ a random exponent $\gamma_{i}$.
	The master private key is then:
	\[
		\mathrm{MK}=\left(\alpha, \tree,  \mathrm{MK}_{\textnormal{ABE}}, \mathrm{MK}_{\textnormal{SUE}} \right).
	\]
	
	The public key, which is given to the distinguishing adversary $\mathcal{A}$, is then:
	\[
		\mathrm{PK}=\left( \mathrm{PK}_{\textnormal{ABE}}, \mathrm{PK}_{\textnormal{SUE}}, g, \Omega=e(g,g)^\alpha \right).
	\]
	
	\phase{Query 1}
	During this phase a standard private key or time-update key is provided to each request of $\mathcal{A}$, since we know everything needed in order to run \algorithm{RS-ABE}{GenKey} or \algorithm{RS-ABE}{UpdateKey}.
	
	\phase{Challenge}
	We receive from $\mathcal{A}$ the attributes $S_*$, the time $T_*$, and the messages $M_*^0$ and $M_*^1$ upon which the adversary wish to be challenged.
	
	A standard RS-ABE ciphertext has the form $\left( \mathrm{CH}_{\textnormal{ABE}},\mathrm{CH}_{\textnormal{SUE}},C=\Omega^s M \right)$, where $\mathrm{CH}_{\textnormal{ABE}}=\left(C_0=g^s,\left\{C_{1,i}=T_i^s\right\}_{i\in S} \right)$ and $\mathrm{CH}_{\textnormal{SUE}}$ is a SUE ciphertext. We are going to create a particular ciphertext which generates everything randomly as in the standard construction, but which sets $C_0=W$. If~$W$ is of the form $W_1$, this implies that the key is standard, otherwise it will be a semi-functional ciphertext, thus exploiting $\mathcal{A}$ gives us an advantage in distinguishing $W$.
	
	We show how to create ABE ciphertext:
	\[
		\mathrm{CH}_{\textnormal{ABE}}=\left(C_0=W,\left\{C_{1,i}=W^{t_i}\right\}_{i\in S} \right).
	\]
	
	Observe that if $W=W_1=g_1^s\in\G_{p_1}$, then the above header is exactly a standard ciphertext header for the ABE scheme, since $W^{t_i}=g_1^{st_i}=T_i^s$.
	Otherwise we know that $W$ can be written as a product of an element of $\G_{p_1}$ and an element of $\G_{p_2}$: $W=W_2=g_1^sg_2^m$, for some unknown generator $g_2$.
	Recalling that a semi-functional ciphertext header has the form 
	\[
		\mathrm{CH}_{\textnormal{ABE}}=\left(C_0=C^\prime_0 g_2^c,\left\{C_{1,i}=C^\prime_{1,i} g_2^{cz_i} \right\}_{i\in S} \right),
	\]
	we can observe that the case $W=W_2$ corresponds to a semi-functional ciphertext header with $c=m$ and $z_i=t_i$.
	
	However the $z_i$ are certainly not random and independent on the value of $t_i$, hence there might be differences between this key and a random key generated by \algorithm{ABE}{EncryptSF}.
	This is not the case: the elements $t_i$ are randomly generated in $\Z_N$, but we are only interested in the \emph{$p_1$ part} of $t_i$, i.e., $t_i\mod p_1$. In fact $g=g_1$ is an element of $\G_{p_1}$, and thus has the form $h^{x p_2 p_3}$ for some generator $h$ of $\G$ and $x\in\Z_N$, and obviously $h^{p_1 p_2 p_3}=1$.
	On the other side, using the same argument we notice that we are not interested in all the information of $z_i$, but only on its $p_2$ part, since $g_2$ is a generator of $\G_{p_2}$.
	
	From the Chinese Remainder Theorem we know that, since $p_1$, $p_2$, and $p_3$ are three distinct primes, there is a bijection from $\Z_N$ and $\Z_{p_1}\times\Z_{p_2}\times\Z_{p_3}$ defined by $n\mapsto(n\mod p_1,n\mod p_2,n\mod p_3)$. Therefore choosing a random $t_j$ is equivalent to choosing randomly and independently each $p$ part.
	
	In conclusion, the final distribution for the challenge ciphertext header is the same as the output of \algorithm{RS-ABE}{Encrypt} if $W=W_1$ or \algorithm{RS-ABE}{EncryptSF} if $W=W_2$.
	
	\phase{Query 2}
	This phase is equal to the first querying phase.
	
	\phase{Guess}
	We obtain the guess $b$ from $\mathcal{A}$: this is the final output of $\mathscr{B}_1$.
	
	Since the adversary $\mathcal{A}$ has a non-negligible advantage on differentiating \game{G}{0} from \game{G}{1}, and this difference allows us to correctly guess the type of $W$, then the algorithm $\mathscr{B}_1$ has a non-negligible advantage, which is a contradiction.
	\end{proof}

	The proof structure we have just employed will recur many times when proving that the difference between two games is negligible: we suppose that we are able to distinguish the two, and from this hypothesis we will build a solver for the assumption that are involved.
	
	\subsubsection{Changing each key}
	\ \\
	The proof of the second lemma is more involved, since we need to take care of the ABE and SUE keys which compose each request for a private key or a time-update request.	
	For this reason the reduction is split in a finer sequences of games: each game transforms into the following by changing the behavior of requests associated only with a fixed counter index~$i_c$.
	
	We will consider two separate cases depending on the request type, respectively for an ABE or a SUE key. Let us fix any challenge attributes $S_*$ and challenge time $T_*$; with the querying phase restrictions of the security game defined in Section~\ref{security.sec} the adversary is able to perform only two kinds 
	 of requests for a fixed node $\nu$ with node index $h$: we can thus split the adversary into two possible types.
	\begin{definition}
		We say that an adversary $\mathcal{A}$ is of \emph{\typeone{}} if each access structure $\mathbb{A}$ of an ABE private key associated with the node $\nu$ does not contain the set~$S_*$.
		
		An adversary $\mathcal{A}$ is of \emph{\typetwo{}} if the update time $T$ of every SUE private key associated with the node $\nu$ is less than~$T_*$.
	\end{definition}
	
	This distinction is not a partitioning of all the possible adversaries, but each of them is described by at least one of these types. If we consider the description of the challenge phase, we observe that the two types correspond to the conditions $S_*\not\in\mathbb{A}$ and $T_*>T$ respectively. If by contradiction we suppose that the adversary is neither of \typeone{} nor of \typetwo{}, then there exists at least a valid ABE key and a valid SUE key for the challenge time and set of attributes. Since our security game does not allow the requested keys to directly decipher the ciphertext, this means that the user who owns the secret key must be revoked, $u\in R$. However this is impossible, since we are considering only requests for the node $\nu$: in order to be a node involved in an ABE request, the adversary must require a private key for a user whose position on the binary tree $\tree$ is below the node $\nu$. But if we revoke any one single user $u$ below the node $\nu$, then the node $\nu$ would belong to the path from $u$ to the root node, and thus it would belong to the Steiner tree of the set of revoked users. This means that any valid time-update key from a time-update request would not contain a SUE key associated with the node $\nu$.
	
	We are now ready to prove the following.
	
	\begin{lemma}
		\label{lemma.g1h}
		If Assumption \ref{a2} is true, then no PPT adversary $\mathcal{A}$ is able to distinguish between \game{G}{1,h-1} and \game{G}{1,h}, which means that $\left| \adv_{\mathcal{A}}^{\mathgame{G}{1,h}} - \adv_{\mathcal{A}}^{\mathgame{G}{1,h-1}}\right|$ is negligible with respect to $\lambda$.
	\end{lemma}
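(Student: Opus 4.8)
The plan is to run a dual-system–style hybrid argument, reducing each elementary step to Assumption~\ref{a2}. The games \game{G}{1,h-1} and \game{G}{1,h} differ only in those ABE and SUE keys whose node index equals $h$, i.e.\ in the keys attached to one fixed node $\nu$ of~$\tree$: in \game{G}{1,h-1} these are standard, in \game{G}{1,h} they are semi-functional. Since the adversary may query several keys for the same node (distinguished by their counter index, as in the key-indexing section), I would first interpolate between the two games by a finer chain of sub-hybrids, each of which turns exactly one of these ABE or SUE keys — the one with a prescribed counter index — from standard to semi-functional while leaving every other key and the semi-functional challenge ciphertext untouched. This is what produces the factor $4\bigl((q_h^{\textnormal{ABE}}+q_h^{\textnormal{SUE}})+1\bigr)$ in the bound used in the proof of Theorem~\ref{RS-ABE_security_th}, so it suffices to show that each sub-hybrid step is indistinguishable under Assumption~\ref{a2}.

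For one such step I would build a simulator $\mathscr{B}_2$ from an instance $\bigl(D=((N,\G,\GT,e),g_1,g_3,X_1Y_1,Y_2Z_1),W\bigr)$ of the general subgroup decision problem, where $W$ is either in $\G$ or in $\G_{p_1p_3}$. As in the proof of Lemma~\ref{lemma.g0}, $\mathscr{B}_2$ runs \algorithm{RS-ABE}{Setup} choosing every group element through a random exponent, so it knows all the $\gamma_i$, $\alpha$, the $t_a$, and the semi-functional parameters $\zeta_i,\eta_i,z_a$. It answers all key requests except the target one with the standard or the semi-functional generator as dictated by the hybrid — the $\G_{p_2}$ parts of the already-semi-functional keys and of the semi-functional challenge ciphertext being produced from the $\G_{p_2}$ component hidden inside $X_1Y_1$ — and it forms the semi-functional challenge ciphertext the same way. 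The target key for $\nu$ is produced by placing $W$ in the role of the fresh group element of \algorithm{ABE}{GenKey} (resp.\ \algorithm{SUE}{GenKey}): if $W\in\G_{p_1p_3}$ this is exactly a standard key, while if $W\in\G$ its $\G_{p_2}$ component promotes the key to a semi-functional one, with $\zeta_\nu$ (resp.\ $\eta_\nu$) implicitly defined by the $\G_{p_2}$ part of $W$. Hence any distinguishing advantage of $\mathcal{A}$ between the two sub-hybrids transfers to $\mathscr{B}_2$ against Assumption~\ref{a2}.

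The main obstacle is the usual \emph{nominal semi-functionality} issue: when $W\in\G$ the target key is built not with fresh independent $\G_{p_2}$ randomness but with randomness correlated to data that $\mathscr{B}_2$ also used in the challenge ciphertext, so a priori $\mathcal{A}$ — who, unlike $\mathscr{B}_2$, may hold a potentially authorizing key for $\nu$ — could notice. This is exactly where the \typeone{}/\typetwo{} dichotomy and the injectivity of $\rho$ enter. For a \typeone{} adversary every access structure $\mathbb{A}=(B,\rho)$ queried for $\nu$ fails to authorize $S_*$, so $(1,0,\dots,0)$ is not in the span of $\{B_i:\rho(i)\in S_*\}$; choosing a vector orthogonal to those rows but not to $(1,0,\dots,0)$ and reducing modulo $p_2$ via the Chinese Remainder Theorem (as in Lemma~\ref{lemma.g0}) shows that the $\G_{p_2}$-secret $\zeta_\nu$ embedded in the key is, from $\mathcal{A}$'s viewpoint, uniform and independent of the ciphertext's $\G_{p_2}$ data, so the key is distributed exactly like a properly randomized semi-functional key — injectivity of $\rho$ being what prevents the per-row exponents $s_i$ of the key and the per-attribute exponents $z_a$ of the ciphertext from creating spurious linear relations. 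For a \typetwo{} adversary the symmetric statement on the SUE/CDE side — every SUE key for $\nu$ carries a time $T<T_*$ — supplies the same decoupling, and here I would invoke the corresponding analysis of Lee et al.~\cite{lee2013RSABE}. Combining the two types (each adversary is covered by at least one, as argued just before the lemma) and summing over the sub-hybrids yields $\bigl|\adv_{\mathcal{A}}^{\mathgame{G}{1,h-1}}-\adv_{\mathcal{A}}^{\mathgame{G}{1,h}}\bigr|\le 4\bigl((q_h^{\textnormal{ABE}}+q_h^{\textnormal{SUE}})+1\bigr)\adv_{\mathscr{B}_2}^{\textnormal{A2}}$, which is negligible by Assumption~\ref{a2}.
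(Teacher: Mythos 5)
You follow the paper's route in outline: sub-hybrids over the counter index for the keys with node index $h$, a simulator for Assumption~\ref{a2} that embeds $W$ in the target key and builds the semi-functional challenge ciphertext from $X_1Y_1$, the information-theoretic argument (a vector orthogonal modulo $p_2$ to the rows labelled by $S_*$, injectivity of $\rho$, the Chinese Remainder Theorem as in Lemma~\ref{lemma.g0}) to handle nominal semi-functionality, and the \typeone{}/\typetwo{} dichotomy with the SUE side deferred to Lee et al. The smaller problem is a slip in the simulator: the $\G_{p_2}$ components of the keys that are already semi-functional must be manufactured from $Y_2Z_1$, not from ``the $\G_{p_2}$ component hidden inside $X_1Y_1$''; multiplying a key entry by a power of $X_1Y_1$ also injects the unknown $\G_{p_1}$ factor $X_1^{B_i\cdot\vec{v}}$, which distorts the $p_1$ part of the key and correlates it with the challenge randomness $s$. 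The instance supplies $Y_2Z_1$ exactly for this purpose (its $\G_{p_3}$ part is absorbed by the key's $\G_{p_3}$ randomization), and this is how the paper's simulator proceeds.

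The substantive gap is in your hybrid chain. You switch each key of node $\nu$ directly from standard to semi-functional in a single hop, with $\zeta_\nu$ (resp.\ $\eta_\nu$) ``implicitly defined by the $\G_{p_2}$ part of $W$''. When the node carries more than one key this cannot reach the distribution of \game{G}{1,h}: each hop consumes a fresh assumption instance, so its $W$ would implicitly fix a different, independent value of $\zeta_\nu$, whereas in \game{G}{1,h} (as produced by \algorithm{RS-ABE}{GenKeySF}) all semi-functional ABE keys of the node share one $\zeta_\nu$, and the SUE keys share one $\eta_\nu$. This consistency requirement is precisely why the paper routes each key through the intermediate types \sfone{} and \sftwo{} (a forward pass that first makes the hidden first components independent and random), then through \sfthree{} and \sffour{} (a second pass that re-installs the shared $\zeta_j$, which the simulator can do because it chose $\zeta_j$ at setup), with one final hop (Proposition~\ref{prop.h5}) in which $\zeta_j$ is defined implicitly by $W$ only once; this double pass is also the actual source of the factor of order $4\left(q_h^{\textnormal{ABE}}+q_h^{\textnormal{SUE}}\right)$ that you quote but that a one-hop-per-key chain would not produce. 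Your nominal-semi-functionality argument is the right tool for the individual hops (it is the content of Proposition~\ref{prop.h1}), but without the intermediate key types and the two passes the reduction does not terminate at \game{G}{1,h}.
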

	\begin{proof}
		We consider first an adversary of \typeone{} which makes at most $q_{h}^\textnormal{ABE}$ ABE key requests for the node with node index $h$, and we define the corresponding hybrid games. Each of them is based on the structure of the game \game{G}{1,h-1}, i.e., a game where the requested keys for both ABE and SUE are semi-functional for the nodes with node index less than~$h$, standard for a bigger node index, and the ciphertext is semi-functional. The only difference between \game{G}{1,h-1} and \game{G}{1,h} occurs during the requests for ABE keys associated with nodes with node index exactly equal to $h$. We modify the behavior of the games basing on the value of the counter index of the node,~$i_c$. In the following we assume that the index~$k$ ranges from $0$ to $q_{h}^\textnormal{ABE}$.
		
		\gamedef{H}{k,1}{
		We define the behavior for the $i$\ith{} key request, associated with a node index $i_n=h$ basing on its counter index $i_c$.
		
		\begin{itemize}
			\item
			If $i_c=k$ we generate a standard ABE key $\mathrm{SK}^\prime=(\left\{K_{1,i},\allowbreak K_{2,i}\right\}_{i=1}^\varRow )$ as previously.
			\randomchoose{p}{%
				$u_j\in\Z_N\text{, for all }j\in \{1,\ldots,\varColumn\}$,
				$\delta_i\in\Z_N\text{, for all }i\in \{1,\ldots,\varRow\}$.
			}
			We give to the adversary the following semi-functional key:
			\[
				\mathrm{SK}_\textnormal{ABE}=\left( \left\{K_{1,i}\cdot g_2^{B_i \cdot \vec{u}+\delta_i z_{\rho(i)}}, K_{2,i}\cdot g_2^{\delta_i}\right\}_{i=1}^\varRow \right),
			\]
			where the $z_i$'s are the elements randomly chosen in order to create the ABE semi-functional ciphertext.
			We call this kind private keys \emph{semi-functional of type 1} (\emph{\sfone}, as suggested in~\cite{lewko2010ABE}.
			
			\item
			If $i_c<k$, we create a standard ABE key $\mathrm{SK}^\prime=(\left\{K_{1,i},\allowbreak K_{2,i}\right\}_{i=1}^\varRow )$ using as parameters the access structure $\mathbb{A}$ and the user $u$ given as input by the adversary, where $\varRow$ is the number of rows and $\varColumn$ is the number of columns of the matrix $B$ defining the access structure $\mathbb{A}=(B,\rho)$.
			\randomchoose{p}{%
				$u_i\in\Z_N\text{, for all }i\in \{1,\ldots,\varColumn\}$.
			}
			Moreover we fix $\vec{u}=(u_1,\ldots,u_\varColumn)$, and then we output the semi-functional key defined by:
			\[
				\mathrm{SK}_\textnormal{ABE}=\left( \left\{K_{1,i}\cdot g_2^{B_i \cdot \vec{u}},K_{2,i}\right\}_{i=1}^\varRow \right).
			\]
			Those keys are called \emph{semi-functional of type 2} (\emph{\sftwo}).
			
			\item
			If $i_c>k$ then the challenger gives to the adversary a standard ABE private key.
		\end{itemize}
		The two types of semi-functional keys are different from the semi-functional ABE keys which we have previously defined in \ref{semifunctional.def}: the vector $\vec{u}$ is generated randomly each time for every new key request, but for a semi-functional key the first component of the vector is fixed at the start, depending on the node associated with the request.
		}
		
		\gamedef{H}{k,2}{
		This game is equal to the previous game, with the exception that the case $i_n=h$ and $i_c=k$ gives a semi-functional key of type 2, or equivalently that the elements $\delta_i$ are equal to zero.
		}
		
		\gamedef{H^\prime}{k,1}{
		This game is equal to the game \game{H}{k,1}, with the exception that the cases with node index $i_n=h$ and counter index $i_c\geq k$ give a different semi-functional ABE key. We first generate a key $\mathrm{SK}^\prime=(\left\{K^{\prime}_{1,i},\allowbreak K^{\prime}_{2,i}\right\}_{i=1}^\varRow )$ with the same procedure as described in game \game{H}{k,1}, and then we add the contribution of the elements $\zeta_i$'s, defined in Section~\ref{semifunctional.def} to create the semi-functional keys.
		\randomchoose{p}{%
			$r_2,\ldots,r_\varColumn\in\Z_N$.
		}
		We take the element $\zeta_j$, where $j$ is the index of the node $\nu_j$ associated with the node index $h$, and we fix the vector $\vec{v}=(\zeta_j,r_2,\ldots,r_\varColumn)$. The output is:
		\[
			\mathrm{SK}_{\textnormal{ABE}}=
			\left(\left\{
			K^{\prime}_{1,i}\cdot g_2^{B_i\cdot \vec{v}},
			K^{\prime}_{2,i}\right
			\}_{i=1}^{\varRow}\right).
		\]
		We call \emph{semi-functional of type 3} (\emph{\sfthree}) the key resulting for the case ${i_c=k}$.
		
		We observe that if the starting key is standard, i.e., $i_c>k$, then the resulting key is semi-functional. In fact the added element does not change the distribution of the output: the combined exponent of the element $g_2$ is $B_i\cdot \vec{v} + B_i \cdot \vec{u}=B_i\cdot (\vec{v}+\vec{u})$, where $\vec{v}$ is the vector we have previously defined, and $\vec{u}$ is the random vector used in game \game{H}{k,1}. In fact the first component of the sum of the two vectors is the sum of a random element and a constant, and the other components are the sum of two random elements in $\Z_N$.
		Similarly, if we start with a semi-functional key of type one we obtain a key which has the same distribution as a randomly-generated semi-functional key of type one.
		}
		
		\gamedef{H^\prime}{k,2}{
		This game merges the behaviors of the games \game{H}{k,2} and \game{H^\prime}{k,1}. Again, this game is equal to the game \game{H}{k,2}, with the exception that the cases $i_n=h$ and $i_c\geq k$ give a modified ABE key. The modification is the same as in game \game{H^\prime}{k,1}, which means that the keys for $i_c > k$ are semi-functional, and the key for $i_c=k$, which we denote \emph{semi-functional of type 4} (\emph{\sffour}), has the same distribution of the key given in \game{H}{k,2}, that is of a semi-functional key of type 2.
		}
		
		\gamedef{H^{\prime\prime}}{}{
		This game is equal to the game \game{G}{1,h}.
		}
		
		\begin{remark}The only difference between semi-functional keys and semi-functional keys of type 2 is that the latter uses random elements in lieu of the elements $\zeta_j$ and $\eta_j$.
		\end{remark}
		
		We observe that the two games \game{H}{q_h^\textbf{ABE},2} and \game{H^\prime}{q_h^\textbf{ABE},2} are equal, since the output distribution for the only different case $i_n=h$, $i_c=q_h^\textnormal{ABE}$ is the same.
		The games \game{H^\prime}{0,1} and \game{H^\prime}{0,2} are equal: each ABE key associated with the node with node index $h$ is semi-functional. The difference between them and \game{H^{\prime\prime}}{} are only the SUE key requests with node index $h$: in the formers they are standard, but in the latter they are semi-functional.
		Moreover the games \game{H}{0,1} and \game{H}{0,2} are equal to \game{G}{1,h-1}.
		
		\begin{table}[H]
		\begin{center}
		\makebox[\textwidth][c]{ 
		\renewcommand{\arraystretch}{1.2}

		\begin{tabular}{M{1cm}| M{1.3cm} M{1.3cm} M{1.3cm} | M{0.5cm} M{0.5cm} M{0.5cm} !{\color{white}} N } 
				&	\multicolumn{3}{c|}{\textbf{ABE keys}}	&	\multicolumn{3}{c}{\textbf{SUE keys}}	&	\\
				&	\multicolumn{3}{c|}{\textbf{$\mathrm{SK}_{\textnormal{ABE}}$}}	&	\multicolumn{3}{c}{\textbf{$\mathrm{SK}_{\textnormal{SUE}}$}}	&	\\
			\hline
			\multirow{2}{*}{\game{H}{k,1}}	&	$i_c < k$ & $i_c = k$ &	$i_c > k$	&	\multicolumn{3}{c}{\multirow{2}{*}{\normal}}	&	\\ 
			[-0.5em]
				&	\sftwo&\sfone&\normal	&	&&	&	\\
			\multirow{2}{*}{\game{H}{k,2}}	&	$i_c < k$ & $i_c = k$ &	$i_c > k$	&	\multicolumn{3}{c}{\multirow{2}{*}{\normal}}	&	\\ 
			[-0.5em]
				&	\sftwo&\sftwo&\normal	&	&&	&	\\
			\multirow{2}{*}{\game{H^\prime}{k,1}}	&	$i_c < k$ & $i_c = k$ &	$i_c > k$	&	\multicolumn{3}{c}{\multirow{2}{*}{\normal}}	&	\\ 
			[-0.5em]
				&	\sftwo&\sfthree&\semifunctional	&	&&	&	\\
			\multirow{2}{*}{\game{H^\prime}{k,2}}	&	$i_c < k$ & $i_c = k$ &	$i_c > k$	&	\multicolumn{3}{c}{\multirow{2}{*}{\normal}}	&	\\ 
			[-0.5em]
				&	\sftwo&\sffour&\semifunctional	&	&&	&	\\
			\multirow{2}{*}{\game{H^{\prime\prime}}{}}	&	\multicolumn{3}{c|}{\multirow{2}{*}{\semifunctional}}	&	\multicolumn{3}{c}{\multirow{2}{*}{\semifunctional}}	&	\\ 
			[-0.5em]
				&		&		&	\\
			
		\end{tabular}
		}
		\end{center}
		\caption{The behavior of the security games for all requests with node index~$i_n=h$; the remaining requests are made as in \game{G}{1,h-1}.}
		
	\end{table}
	
		We call $\adv{}_{\mathcal{A}}^{\mathgame{H}{}}$ the advantage of the adversary $\mathcal{A}$ in a game \game{H}{}. In propositions \ref{prop.h1}, \ref{prop.h2}, \ref{prop.h3}, \ref{prop.h4}, and \ref{prop.h5} we will prove that the advantage in distinguishing the pair of games
			(\game{H}{k,1}, \game{H}{k-1,2}),
			(\game{H}{k,1}, \game{H}{k,2}),
			(\game{H^\prime}{k,1}, \game{H^\prime}{k-1,2}),
			(\game{H^\prime}{k,1}, \game{H^\prime}{k,2}), and
			(\game{H^\prime}{0,2}, \game{H^{\prime\prime}}{})
		is negligible with respect to the security parameter $\lambda$.
		Thus we split the advantage in distinguishing between \game{G}{1,h-1} and \game{G}{1,h} for an adversary $\mathcal{A}_1$ of \typeone{} with the following inequalities:
		
		\begin{align}
				\displaybreak[3]
			&\adv{}_{\mathcal{A}_1}^\mathgame{G}{1,h-1}-\adv{}_{\mathcal{A}_1}^\mathgame{G}{1,h} =
			\adv{}_{\mathcal{A}_1}^\mathgame{H}{0,2}-\adv{}_{\mathcal{A}_1}^\mathgame{H^{\prime\prime}}{} =
			\notag{} \\
				\displaybreak[3]
			&=\adv{}_{\mathcal{A}_1}^\mathgame{H}{0,2} + 
			\sum_{k=1}^{q_h^\textnormal{ABE}}
				\left( \adv{}_{\mathcal{A}_1}^\mathgame{H}{k,1}-\adv{}_{\mathcal{A}_1}^\mathgame{H}{k,1} \right) +
			\sum_{k=1}^{q_h^\textnormal{ABE}}
				\left( \adv{}_{\mathcal{A}_1}^\mathgame{H}{k,2}-\adv{}_{\mathcal{A}_1}^\mathgame{H}{k,2} \right) +
			\notag{} \\
				\displaybreak[3]
			&+\sum_{k=1}^{q_h^\textnormal{ABE}}
				\left( \adv{}_{\mathcal{A}_1}^\mathgame{H^\prime}{k,1}-\adv{}_{\mathcal{A}_1}^\mathgame{H^\prime}{k,1} \right) +
			\sum_{k=0}^{q_h^\textnormal{ABE}-1}
				\left( \adv{}_{\mathcal{A}_1}^\mathgame{H^\prime}{k,2}-\adv{}_{\mathcal{A}_1}^\mathgame{H^\prime}{k,2} \right) -
				\adv{}_{\mathcal{A}_1}^\mathgame{H^{\prime\prime}}{}
			\notag{} \\
				\displaybreak[3]
			&\leq\sum_{k=1}^{q_h^\textnormal{ABE}}
				\left| \adv{}_{\mathcal{A}_1}^\mathgame{H}{k-1,2}-\adv{}_{\mathcal{A}_1}^\mathgame{H}{k,1} \right| +
			\sum_{k=1}^{q_h^\textnormal{ABE}}
				\left| \adv{}_{\mathcal{A}_1}^\mathgame{H}{k,1}-\adv{}_{\mathcal{A}_1}^\mathgame{H}{k,2} \right| +
			\notag{} \\
				\displaybreak[3]
				\allowdisplaybreaks
			&+\sum_{k=1}^{q_h^\textnormal{ABE}}
				\left| \adv{}_{\mathcal{A}_1}^\mathgame{H^\prime}{k,2}-\adv{}_{\mathcal{A}_1}^\mathgame{H^\prime}{k,1} \right| +
			\sum_{k=1}^{q_h^\textnormal{ABE}}
				\left| \adv{}_{\mathcal{A}_1}^\mathgame{H^\prime}{k,1}-\adv{}_{\mathcal{A}_1}^\mathgame{H^\prime}{k-1,2} \right| +
			\notag{} \\
				\displaybreak[3]
				\allowdisplaybreaks
				\label{inequality.ABE}
				&+\left| \adv{}_{\mathcal{A}_1}^\mathgame{H^{\prime}}{0,2} - \adv{}_{\mathcal{A}_1}^\mathgame{H^{\prime\prime}}{} \right| \leq \left( 4q_h^\textnormal{ABE}+1\right)\adv_{\mathscr{B}_2}^{\textnormal{A2}}.
		\end{align}

		A similar splitting can be obtained considering an adversary of \typetwo{} who we suppose making at most $q_{h}^\textnormal{SUE}$ SUE key requests for the nodes with node index $h$.		
		The full description for games involving an adversary of \typetwo{}, and the corresponding proof for the inequality that follows, is found in Lee et al.~\cite{lee2013RSABE}.		
		The advantage for an adversary $\mathcal{A}_2$ of \typetwo{} in distinguishing between \game{G}{1,h-1} and \game{G}{1,h} satisfies:
				\begin{equation}
			\label{inequality.SUE}
			\adv{}_{\mathcal{A}_2}^\mathgame{G}{1,h-1}-\adv{}_{\mathcal{A}_2}^\mathgame{G}{1,h} \leq \left( 4q_h^\textnormal{SUE}+1\right)\adv_{\mathscr{B}_2}^{\textnormal{A2}}.
		\end{equation}
		
		Now we combine both adversary types by considering both the events \textit{the adversary is of \typeone{}}, $E_1$, and \textit{the adversary is of \typetwo{}}, $E_2$. If the adversary is both of \typeone{} and of \typetwo{} we can arbitrarily choose one of the two types, e.g.\@ we can assume that they are of \typeone{}.
		
		If the adversary is of \typeone{} we use inequality \ref{inequality.ABE}, otherwise we use \ref{inequality.SUE}. In particular, with this setting we know than there is no difference between the games \game{G}{1,h-1} and \game{G}{1,h} for both an adversary of \typeone{} and of \typetwo{}.
		Thus we can now conclude the proof with the following inequality:
		\begin{align*}
			&\adv{}_\mathcal{A}^\mathgame{G}{1,h-1}-\adv{}_\mathcal{A}^\mathgame{G}{1,h} = \\
			&= \prob{E_1}\left(\adv{}_{\mathcal{A}_1}^\mathgame{G}{1,h-1}-\adv{}_{\mathcal{A}_1}^\mathgame{G}{1,h}\right) + \prob{E_2}\left(\adv{}_{\mathcal{A}_2}^\mathgame{G}{1,h-1}-\adv{}_{\mathcal{A}_2}^\mathgame{G}{1,h}\right) \\
			&= \prob{E_1}\left(\adv{}_{\mathcal{A}_1}^\mathgame{H}{0,2}-\adv{}_{\mathcal{A}_1}^\mathgame{H^{\prime\prime}}{}\right) + (1-\prob{E_1})\left(\adv{}_{\mathcal{A}_2}^\mathgame{G}{1,h-1}-\adv{}_{\mathcal{A}_2}^\mathgame{G}{1,h}\right)\\
			&\leq \prob{E_1}\left(4q_h^\textnormal{ABE}+1\right)\adv_{\mathscr{B}_2}^{\textnormal{A2}} + (1-\prob{E_1})\left(4q_h^\textnormal{SUE}+1\right)\adv_{\mathscr{B}_2}^{\textnormal{A2}} \\	
			&\leq \left(4(q_h^\textnormal{ABE}+q_h^\textnormal{SUE})+2\right)\adv_{\mathscr{B}_2}^{\textnormal{A2}} +4(q_h^\textnormal{ABE}-q_h^\textnormal{SUE})\prob{E_1}\adv_{\mathscr{B}_2}^{\textnormal{A2}}\\
			&\leq \left(4(q_h^\textnormal{ABE}+q_h^\textnormal{SUE})+2\right)\adv_{\mathscr{B}_2}^{\textnormal{A2}}.
		\end{align*}
		
		The last inequality is obtained by supposing $q_h^\textnormal{SUE}\geq q_h^\textnormal{ABE}$, and then removing the resulting negative term. If $q_h^\textnormal{SUE}< q_h^\textnormal{ABE}$ we follow the same procedure, but using $E_2$ as reference set, in particular $\prob{E_1}=1-\prob{E_2}$.
	\end{proof}
	
	As in Lemma~\ref{lemma.g0} we are going to show that if a PPT adversary were able to distinguish between the two considered games, then we would be able to exploit this advantage in order to find a PPT algorithm which breaks Assumption~\ref{a2}.
	The only differences with the Lemma are found in the querying phases, where the key constructions is different between each game.
		
		We are given the two input in Assumption~\ref{a2}, which are the tuple \linebreak$D=\left(\mathscr{S},g_1,g_3,X_1Y_1,Y_2Z_1\right)$, where $X_1\in\G_{p_1}$, $Y_1,Y_2\in\G_{p_2}$, and $Z_1\in\G_{p_3}$, and an element $W$ of the group.
		The group element $W$ can be with equal probability of two different forms: either randomly chosen among all elements of $\G$, or just among the element of $\G_{p_1p_3}$. Equivalently, it can be chosen as a product of random elements, either $W=X_2Y_3Z_3$ or $W=X_2Z_3$, where $X_2\in\G_{p_1}$, $Y_3\in\G_{p_2}$, and $Z_2\in\G_{p_3}$, everything chosen with uniform distribution.
		
	We use the inputs to build the following simulator.
	
	\phase{Setup}
	The setup phase generates the public and private parameters of the scheme through random choices of exponents for each group element.
	
	\randomchoose{p}{%
		$a,\gamma,\alpha\in\Z_N$;
		$t_j^\prime \in \Z_N\text{, for all }j\in \mathfrak{A}$.
	}
	
	Let $g=g_1$. We set the ABE and SUE keys as:
	\begin{align*}
		\mathrm{MK}_{\textnormal{ABE}}&=\left( \gamma, Z=g_3 \right);\\
		\mathrm{PK}_{\textnormal{ABE}}&=\left( \left(N,\G,\GT,e\right), g, \left\{g^{t^\prime_j}\right\}_{j\in \mathfrak{A}}, e(g,g)^\gamma \right).
	\end{align*}
	\disclaimerSUE{}
		This means that, maintaining the notation used in the previous scheme constructions, $T_i=g^{t^\prime_i}$, $w=g^{w^\prime}$, $u_{i,b}=g^{u^\prime_{i,b}}$, and $h_{i,b}=g^{h^\prime_{i,b}}$.
	
	Since we are choosing random exponents, when we consider their corresponding elements of the group we are choosing them with a uniform distribution, and thus the output of this procedure has the same probability distribution of the corresponding outputs of \algorithm{ABE}{Setup} and \algorithm{SUE}{Setup}.
	We create the binary tree $\tree$ associating to each node $\nu_i$ a random exponent~$\gamma_{i}$.
	
	The output, given to the adversary $\mathcal{A}$, is then:
	\begin{align*}
		\mathrm{PK}&=\left( \mathrm{PK}_{\textnormal{ABE}}, \mathrm{PK}_{\textnormal{SUE}}, g, \Omega \right).
	\end{align*}
	
	The secret key, which we fully know, is 
	\[
		\mathrm{MK}=\left( \mathrm{MK}_{\textnormal{ABE}}, \mathrm{MK}_{\textnormal{SUE}}, \alpha, \tree \right).
	\]
	
	We also generate randomly for each node $\nu_i$ the elements $\zeta_i,\eta_i\in\Z_N$, which are needed to create the semi-functional keys. We need also the elements $x_i,y_i\in\Z_N$, for each $i\in\{0,\ldots,l\}$, used in the construction of a SUE ciphertext.
	
	Exception will be Proposition~\ref{prop.h5}, where for a specific node $\nu_j$ the elements $\zeta_j$ and $\eta_j$ will be implicitly defined using input elements.
	
	\phase{Query 1}
	In this phase we give to the adversary $\mathcal{A}$ the required keys.
	
	First we observe that we are able to generate standard private and time-update keys, since we own all the needed elements of the private master key. We are also able to build semi-functional keys, even if we cannot generate them directly, because we do not know a generator for the subgroup $\G_{p_2}$. The following con\-struc\-tions leverage the element $Y_2Z_1$ given as input, and fix implicitly $g_2=Y_2$. 
	
	Suppose that the request is associated with the node $\nu_j$.
	A semi-functional ABE key can be obtained with the following procedure. First we build a standard ABE key, $\mathrm{SK}^\prime_{\textnormal{ABE}}=(\{K^\prime_{1,i},K^\prime_{2,i}\}_{i=1}^{\varRow})$, for the access structure $\mathbb{A}=(B,\rho)$, $B\in\matrixset{\varRow}{\varColumn}{\Z_N}$.
	We fix some random elements $r^\prime_{2},\ldots,r^\prime_{\varColumn}\in \Z_N$ and we set $\pvec{v}=(\zeta_{j},r^\prime_{2},\ldots,r^\prime_{\varColumn})$. 
	The semi-functional ABE key is:
	\[
		\mathrm{SK}_{\textnormal{ABE}}=
		\left(\left\{
		K_{1,i}=K^\prime_{1,i}(Y_2Z_1)^{B_i\cdot \pvec{v}},
		K_{2,i}=K^\prime_{2,i}\right
		\}_{i=1}^{\varRow}\right).
	\]
	The distribution is the same as a standard semi-functional key: the only difference is the presence of the factor $Z_1^{B_i\cdot \pvec{v}}\in\G_{p_3}$ in $K_{1,i}$, but since when we generate $K^\prime_{1,i}$ we use a random element of $\G_{p_3}$ as a factor, the contribution of this element is just a translation inside the space $\G_{p_3}$, which does not change the distribution of our random choice.
	We are also able to create ABE semi-functional keys of type 2, since its construction is the same apart from the elements $\zeta_j$ and $\eta_j$, which are swapped with random elements.
	
	We can also obtain semi-functional SUE key. Everything regarding the SUE part of the scheme is omitted, for more details we refer to Lee et al.~\cite{lee2013RSABE}.
	
	We must consider three different cases in the querying phase.
	\begin{itemize}
		\item
		If $i_n<h$ we output a semi-functional key.
		\item
		If $i_n=h$ the behavior of the simulator will be specified for each considered proposition (Proposition~\ref{prop.h1}, \ref{prop.h2}, \ref{prop.h3}, \ref{prop.h4}, \ref{prop.h5}).
		\item
		If $i_n>h$ we output a standard key.
	\end{itemize}
	
	\phase{Challenge}
	We receive the attributes $S_*$, the time $T_*$, and the messages $M_*^0$ and $M_*^1$ which the adversary wish to challenge, and then we fix a bit $b$ randomly in $\{0,1\}$.
	We want to create a semi-functional ciphertext without knowing the generator~$g_2$.
	To do so we fix implicitly $X_1=g^s$ and $Y_1=g_2^c$.
	
	The components of the ABE ciphertext are:
		\begin{align*}
			C_0&=X_1 Y_1;\\
			C_{1,i}&=(X_1 Y_1)^{t_i};\\
			\mathrm{CH}_\textnormal{ABE}&=\left(C_0,\{C_{1,i}\}_{i\in S}\right).
		\end{align*}
		
	We construct also the SUE ciphertext $\mathrm{CH}_\textnormal{SUE}$ using the element $X_1Y_1$.
	Lastly we observe that $e(X_1Y_1,g)=e(X_1,g)$, which is implicitly $e(g,g)^s$.
	The ciphertext we output to the adversary $\mathcal{A}$ is:
	\[
		\mathrm{CT}=\left(\mathrm{CH}_\textnormal{ABE},\mathrm{CH}_\textnormal{SUE},e(X_1Y_1,g)^\alpha \cdot M_*^b\right).
	\]
	
	The described constructions account for a correctly distributed ciphertext. Moreover, the ciphertext and the keys for the case $i_n\neq h$ are uncorrelated. Thus in the remaining propositions we must only consider the case $i_n=h$.
		
	\phase{Query 2}
	The second quering phase is managed as the first one.
	
	\phase{Guess}
	We obtain the guess $b$ from $\mathcal{A}$, which is the final output of our simulator.
	
	\vspace{0.5em} 
	
	To conclude the proof of Lemma~\ref{lemma.g1h} we need to prove the following propositions.
	In each of them we describe the behavior of the simulator when the counter index is $h$. We consider only the case where the aversary is of \typeone{}, for the same result for an adversary of \typetwo{} we refer to Lee et al.~\cite{lee2013RSABE}.
	
	\begin{proposition}
		\label{prop.h1}
		If Assumption \ref{a2} is true, then no PPT adversary $\mathcal{A}$ of \typeone{} is able to distinguish between \game{H}{k,1} and \game{H}{k-1,2}, that is, $\left| \adv_{\mathcal{A}}^{\mathgame{H}{k,1}} - \adv_{\mathcal{A}}^{\mathgame{H}{k-1,2}}\right|$ is negligible with respect to $\lambda$.
	\end{proposition}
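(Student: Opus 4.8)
The plan is to prove the proposition by contradiction, reducing to Assumption~\ref{a2} and reusing the simulator $\mathscr{B}_2$ described above. Its \textbf{Setup}, \textbf{Challenge} and \textbf{Guess} phases, its handling of every request with node index $i_n\neq h$, and everything concerning SUE keys are already fixed there, together with the fact that $\mathscr{B}_2$ can produce all of them exactly --- using $Y_2Z_1$ for semi-functional parts (including \sftwo{} ABE keys), $X_1Y_1$ for the semi-functional challenge ciphertext, and the fully known $\mathrm{MK}$ for standard parts. The key observation is that \game{H}{k,1} and \game{H}{k-1,2} coincide except on the single ABE request with node index $i_n=h$ and counter index $i_c=k$, which is standard in \game{H}{k-1,2} and \sfone{} in \game{H}{k,1}; for the remaining requests with $i_n=h$ both games give an \sftwo{} key when $i_c<k$ and a standard key when $i_c>k$, which $\mathscr{B}_2$ can supply directly. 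So it suffices to describe how $\mathscr{B}_2$ embeds its challenge element $W$ --- uniform either over $\G$ or over $\G_{p_1p_3}$ --- into exactly that one key, in such a way that $W\in\G_{p_1p_3}$ produces a standard ABE key and $W\in\G$ produces an \sfone{} key.

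For that request, associated with a node $\nu_j$ of node index $h$ and an access structure $\mathbb{A}=(B,\rho)$, I would have $\mathscr{B}_2$ run the ordinary \algorithm{ABE}{GenKey} computation (a fresh $s_i$ per row, fresh $\G_{p_3}$-masks, a fresh vector $\vec v=(\gamma_j,r_2,\dots,r_\varColumn)$ with the known exponent $\gamma_j$ in front) and, in addition, choose a fresh vector $\vec\xi$, a fresh scalar $\tilde\delta_i$ per row, and set $z_{\rho(i)}\equiv t^\prime_{\rho(i)}\pmod{p_2}$ whenever $\rho(i)\in S_*$ --- the value forced by the semi-functional challenge ciphertext built from $X_1Y_1$ --- and a free value otherwise. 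It then multiplies $K_{2,i}$ by $W^{\tilde\delta_i}$ and $K_{1,i}$ by $W^{B_i\cdot\vec\xi+\tilde\delta_i z_{\rho(i)}}$, together with the extra $\G_{p_1}$-factor needed to keep the pair a well-formed ABE key for the perturbed randomness. When $W\in\G_{p_1p_3}$ the $\G_{p_2}$-contributions vanish and the output is a standard ABE key; when $W\in\G$ the $\G_{p_2}$-component of $W$ turns $K_{2,i}$ into $K^\prime_{2,i}g_2^{\delta_i}$ and $K_{1,i}$ into $K^\prime_{1,i}g_2^{B_i\cdot\vec u+\delta_i z_{\rho(i)}}$, with $\delta_i$ and $\vec u$ uniform modulo $p_2$, i.e. an \sfone{} key. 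Forwarding $\mathcal{A}$'s guess then gives $\adv_{\mathscr{B}_2}^{\textnormal{A2}}=\left|\adv_{\mathcal{A}}^{\mathgame{H}{k,1}}-\adv_{\mathcal{A}}^{\mathgame{H}{k-1,2}}\right|$, the desired contradiction.

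The step I expect to be the real obstacle is showing that the two simulated key shapes are distributed \emph{exactly} as the $k$-th key of \game{H}{k-1,2} and of \game{H}{k,1}, respectively, from the viewpoint of $\mathcal{A}$. Embedding $W$ unavoidably shifts the first coordinate of the effective LSSS vector --- hence the node exponent $\gamma_j$ --- by an amount $\mathscr{B}_2$ does not know, and in the case $W\in\G$ the vector $\vec u$ comes out supported on a line rather than uniform; one must argue that $\mathcal{A}$ cannot detect either perturbation. This is precisely the ``nominal versus true semi-functionality'' argument of Lewko~\cite{lewko2010ABE}: since $\mathcal{A}$ is of \typeone{}, every access structure it queries for node $\nu_j$ fails to contain $S_*$, so no admissible linear combination of the rows with $\rho(i)\in S_*$ reconstructs the first coordinate; decomposing each exponent into its $p_1$-, $p_2$- and $p_3$-parts by the Chinese Remainder Theorem (and using that $\rho$ is injective, so each attribute carries a single share) one verifies that the quantities $\mathcal{A}$ can actually form from the keys and the challenge ciphertext have the same joint distribution whether the $k$-th key is standard, nominally semi-functional, or a genuinely random \sfone{} key. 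The analogous statement for a \typetwo{} adversary, where the affected request is a SUE key instead of an ABE key, follows by the same reasoning and is carried out in Lee et al.~\cite{lee2013RSABE}.
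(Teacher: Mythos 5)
Your overall strategy is the paper's: the same hybrid position (the two games differ only at the single ABE request with $i_n=h$, $i_c=k$), the same reduction to Assumption~\ref{a2} via the common simulator, and the same Lewko-style ``nominal versus true'' information-theoretic argument using the \typeone{} property, the injectivity of $\rho$ and the Chinese Remainder Theorem. The gap is in the concrete embedding of $W$ into the $i_c=k$ key, specifically in its $\G_{p_1}$ components. For the reduction to work, the $p_1$-part of that key must be \emph{exactly} a well-formed standard ABE key for the scheme's true node exponent $\gamma_j$: a \typeone{} adversary is only barred from access structures containing $S_*$, so it may hold other authorized sets $S'\in\mathbb{A}$ together with a SUE key at the very same node, encrypt a message of its own under $S'$, and trial-decrypt; any unknown shift of $\gamma_j$ modulo $p_1$, or any mismatch between the $T_{\rho(i)}$-exponent in $K_{1,i}$ and the exponent of $K_{2,i}$, is then efficiently detectable. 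Your embedding creates both problems: you accept that ``the first coordinate of the effective LSSS vector --- hence $\gamma_j$ --- is shifted by an amount $\mathscr{B}_2$ does not know'', and for rows with $\rho(i)\notin S_*$ you raise $W$ to a \emph{free} value $z_{\rho(i)}$, which destroys the $p_1$-consistency of the pair $(K_{1,i},K_{2,i})$. The ``extra $\G_{p_1}$-factor'' you invoke to repair this cannot be computed: it would require the $p_1$-component of $W$ separately, i.e.\ exactly what Assumption~\ref{a2} says is hard to extract. Nor does the nominal-versus-true argument cover it: that argument hides quantities living only in $\G_{p_2}$ (where the adversary's sole correlated information is the challenge ciphertext and the masking terms $\delta_i z_{\rho(i)}$), whereas $\gamma_j$ modulo $p_1$ reappears in every other key at the node and in the SUE keys $\alpha-\gamma_j$, so consistency there is publicly checkable.

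The paper's simulation is arranged precisely to make these issues disappear by construction: it sets $\vec{t}=(0,r_2,\ldots,r_\varColumn)$ as the exponent vector of $W$ (zero first coordinate), carries $\gamma_j$ in the $g$-exponent via $\pvec{t}=(\gamma_j,r'_2,\ldots,r'_\varColumn)$, and uses $t_{\rho(i)}$ itself as the exponent multiplying $\delta_i$ for \emph{every} row, so that the $p_1$-part is automatically a correctly distributed standard key with $s_i=r\delta_i$ and $\vec{v}=r\vec{t}+\pvec{t}$, whatever $W$ is. The only residual discrepancy is then confined to $\G_{p_2}$: the first coordinate of the implicit vector $\vec{u}=\vec{t}$ is $0$ rather than uniform (not ``supported on a line''), and this is exactly what the argument you cite handles --- since $S_*$ is unauthorized, a vector $\vec{w}$ orthogonal to the $S_*$-rows but not to $(1,0,\ldots,0)$ exists modulo $p_2$, and for rows outside $S_*$ the single-use (by injectivity of $\rho$) and, by CRT, fresh $p_2$-residues $z_{\rho(i)}=t_{\rho(i)}\bmod p_2$ mask the missing coordinate. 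So you should restate the embedding with zero first coordinate on the $W$-vector and $t'_{\rho(i)}$ on all rows; with that correction the rest of your argument goes through as in the paper.
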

	\begin{proof}
		If the query is for a SUE key, we give to the adversary a standard SUE key, which we are able to build. If the request is for an ABE key, we must consider three different cases depending on the counter index. Suppose that the access structure is described by $\mathbb{A}=(B,\rho)$, where $B\in\matrixset{\varRow}{\varColumn}{\Z_N}$ and $\rho$ is injective. The request is associated with the node $\nu_j$, corresponding to the node index $i_n=h$, which is associated with the element $\gamma_j$ in $\tree$.
		\begin{itemize}
			\item
			If $i_c<k$ we create a semi-functional ABE key of type 2.
			\item
			If $i_c=k$ we build the key relying on the element $W$.
			\randomchoose{p}{%
				$Z_{1,i},Z_{2,i}\in\G_{p_3}\text{, for all }i\in \{1,\ldots,\varRow\}$;
				$r_{2},\ldots,r_{\varColumn},r^\prime_{2},\ldots,r^\prime_{\varColumn}\in \Z_N$;
				$\delta_i\in\Z_N\text{, for all }i\in \{1,\ldots,\varRow\}$.
			}
			We set $\vec{t}=(0,r_{2},\ldots,r_{\varColumn})$ and $\pvec{t}=(\gamma_j,r^\prime_{2},\ldots,r^\prime_{\varColumn})$.			
			The given secret key is:			
			\[
				\mathrm{SK}_\textnormal{ABE}=\left( \left\{ g^{B_i\cdot \pvec{t}} W^{B_i\cdot\vec{t}}W^{\delta_i t_{\rho(i)}} Z_{1,i}, W^{\delta_i}Z_{2,i} \right\}_{i=1}^{\varRow} \right).
			\]
			\item
			If $i_c>k$ we create a standard ABE key.
		\end{itemize}
		The only case we need to check is when $i_c=k$.
		We call $g^r$ the $p_1$ part of $W$, i.e., $X_2=g^r$, and implicitly we set $g_2=Y_3$, the $p_2$ part of $W$, when it exists.
		The $p_3$ part of $W$ does not add any contribution, since each element composing the key is multiplied by a random element in $\G_{p_3}$.
		The $p_1$ part manages the standard key creation. In particular the private key we are building sets implicitly the vector $\vec{v}$ used in the function \algorithm{ABE}{GenKey} as $\vec{v}=r\vec{t}+\pvec{t}$, and the $s_i$ elements used in the same function implicitly as $s_i=r\delta_i$.
		If the $p_2$ part of $W$ is $1$, i.e., $W\in\G_{p_1p_3}$, then the output key is a standard key, as needed for game \game{H}{k-1,2}.
		Otherwise the $p_2$ part of $W$ is nontrivial, and we obtain a semi-functional key of type 1. In particular (we use the notation found in the definition of \game{H}{k,1} at the beginning of Lemma~\ref{lemma.g1h}) the elements $z_i$'s are equal to the $t_i$'s, the $\delta_i$'s are the same as in the construction of an ABE semi-functional key, and $\vec{u}=\vec{t}$.
		
		It remains to show that the given secret key is correctly distributed.
		First we can use the same argument as in Lemma~\ref{lemma.g0} to observe that the output distribution does not change because $t_i=z_i$. In fact we have previously chosen $t_i$ to be a random element of $\Z_N$, and the occurrence of $t_i$ uses its $p_2$ part if and only if they are used instead of $z_i$.
		
		The only remaining problem is that the first component of the vector $\vec{u}$ is not in general zero, but it is a random number in $\Z_N$. Nevertheless the final distribution remains the same in this case. 
		Since the set $S$ is not authorized, we know that the span of the rows associated with an attribute in~$S$ does not contain $(1,0,\ldots,0)$, otherwise $S$ would be able to decrypt the ciphertext.
		In particular the same is true if we consider the projection of each entry of the row from $\Z_N$ to $\Z_{p_2}$, and their corresponding linear span~$U$. Otherwise if we consider the same linear combination in $\Z_N$ then the first element is one plus a multiple of $p_2$, and the other elements are multiples of $p_2$: by computing the greatest common divisor between any multiple of $p_2$ and $N$ we are able to obtain a nontrivial factor of $N$, thus allowing us to break Assumption~\ref{a2}.
		Therefore we can find a vector $\vec{w}$ which is orthogonal to the space~$U$, but which is not orthogonal to $(1,0,\ldots,0)$. In fact if each element of $U^\perp$ is orthogonal to $(1,0,\ldots,0)$, then $U^\perp \subseteq\langle e_1\rangle^\perp$, and since the standard bilinear product is symmetric, non-degenerate, and the space has finite dimension, then $U \supseteq\langle e_1\rangle$, which is a contradiction. We can fix a basis including $\vec{w}$: this allows us to write $\vec{t}=f\vec{w}+\pvec{w}$ for some $f\in\Z_N$, where $\pvec{w}$ belongs to the span of the remaining elements of the base. We can choose $\pvec{w}$ with uniform distribution, and then we can fix the unique $f$ which zeroes the first component.
		 With this construction we obtain each possible element for $\vec{t}$, thus a uniform distribution of the $\pvec{v}$ in the semi-functional key corresponds to a uniform distribution for the choice of $\vec{t}$. 
		
		We want to show that an adversary has not enough information to discover~$f$: by knowing just $\pvec{w}$ we are not able to obtain any information about $f$.
		The only expression where $\vec{t}$ appears is of the form $B_i \vec{t} +\delta_i z_{\rho(i)}$.
		If we are considering a row $i$ for an attribute which belongs to the challenge set of attributes~$S$, then we are not giving any information about $f$, since $B_i\cdot \vec{w}=0$ by definition of $\vec{w}$.
		For the remaining attributes we use the hyphothesis for which $\rho$ is injective. In particular, there is only one occurrence of the element~$z_{\rho(i)}$, since it does not appear in the challenge ciphertext.
		The element $z_{\rho(i)}$ is randomly chosen, and unless $\delta_i=0$ we are not adding any information about $f$, since we are adding to the expression an unknown random element.
		Furthermore the probability that $\delta_i=0$ becomes negligible as $N$ grows.
		In particular, an adversary is not able to distinguish the distribution of the output key from a key created with random first component for $\vec{t}$.
		
		We have shown that the key given to the adversary in our simulator has the correct distribution. Therefore, we have shown that this is a correct simulator. Its existence would contradict our assumption, 
	\end{proof}
	\begin{proposition}
		\label{prop.h2}
		If Assumption \ref{a2} is true, then no PPT adversary $\mathcal{A}$ of \typeone{} is able to distinguish between \game{H}{k,1} and \game{H}{k,2}, meaning that $\left| \adv_{\mathcal{A}}^{\mathgame{H}{k,1}} - \adv_{\mathcal{A}}^{\mathgame{H}{k,2}}\right|$ is negligible with respect to $\lambda$.
	\end{proposition}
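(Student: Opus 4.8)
The plan is to reduce to Assumption~\ref{a2} exactly in the way set up after the statement of Lemma~\ref{lemma.g1h}, exploiting that \game{H}{k,1} and \game{H}{k,2} are the very same game except for the single ABE key request whose node index is $h$ and whose counter index equals $k$: in \game{H}{k,1} that key is \sfone{}, while in \game{H}{k,2} it is \sftwo{}, i.e.\ the same construction with every $\delta_i$ forced to zero. So I reuse the simulator of Lemma~\ref{lemma.g1h} verbatim, and only need to fix its behaviour for requests with node index $i_n=h$: for $i_c<k$ output an \sftwo{} ABE key, for $i_c>k$ output a standard ABE key (both producible since the simulator knows $\mathrm{MK}$ and the element $Y_2Z_1$), always output standard SUE keys, and for the unique request with $i_c=k$ build the key out of the challenge element $W$.

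For that $i_c=k$ request I would first draw an honest standard ABE key $\mathrm{SK}'=(\{K'_{1,i},K'_{2,i}\}_{i=1}^{\varRow})$ for the queried $(\mathbb{A},u)$, $\mathbb{A}=(B,\rho)$, $B\in\matrixset{\varRow}{\varColumn}{\Z_N}$ --- possible, because the simulator knows $\gamma_j$ --- then choose $\vec u\in\Z_N^{\varColumn}$, $\delta_i\in\Z_N$ and fresh $Z_{1,i},Z_{2,i}\in\G_{p_3}$, and return
\[
\mathrm{SK}_{\textnormal{ABE}}=\left(\left\{K'_{1,i}\,(Y_2Z_1)^{B_i\cdot\vec u}\,W^{\delta_i t_{\rho(i)}}Z_{1,i},\ K'_{2,i}\,W^{\delta_i}Z_{2,i}\right\}_{i=1}^{\varRow}\right),
\]
where $t_a$ denotes the exponent fixed in the setup with $T_a=g^{t_a}$, which is also, by the shape of the challenge ciphertext, the value $z_a$. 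If $W\in\G_{p_1p_3}$ the factors coming from $W$ contribute nothing in $\G_{p_2}$, so the $\G_{p_2}$-part of the key is exactly $g_2^{B_i\cdot\vec u}$ (with $g_2=Y_2$), its $\G_{p_3}$-part is re-randomised by $Z_{1,i},Z_{2,i}$, and in $\G_{p_1}$ it is a standard key whose $s_i$ has been shifted by $r\delta_i$, where $r$ is the $\G_{p_1}$-exponent of $W$; this is an \sftwo{} key, matching \game{H}{k,2}. If instead $W$ is uniform in $\G$, its $\G_{p_2}$-exponent $w_2$ is nonzero except with negligible probability, the $W$-factors now add $g_2^{w_2\delta_i t_{\rho(i)}}$ to $K'_{1,i}$ and $g_2^{w_2\delta_i}$ to $K'_{2,i}$, and, putting $\tilde\delta_i=w_2\delta_i$ and recalling $z_{\rho(i)}=t_{\rho(i)}$, one reads off an \sfone{} key, matching \game{H}{k,1}.

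The only real content is to check that these keys are distributed exactly as in the respective games, and I expect this to be the main obstacle to state cleanly. Three observations suffice. First, the $\G_{p_1}$-component is a standard ABE key regardless of $W$, since replacing $s_i$ by $s_i+r\delta_i$ preserves uniformity and keeps $K_{1,i},K_{2,i}$ mutually consistent. Second, reusing $t_a$ as $z_a$ is harmless: $t_a$ is uniform in $\Z_N$, so by the Chinese Remainder Theorem its residue modulo $p_1$ (which governs $T_a=g^{t_a}$) and its residue modulo $p_2$ (which is $z_a$) are independent and uniform --- the same argument already used in Lemma~\ref{lemma.g0}. Third, in the \sfone{} case the pair $(s_i+r\delta_i \bmod p_1,\ w_2\delta_i \bmod p_2)$ is jointly uniform, independent across $i$, and independent of the freshly chosen $\vec u$: again by the Chinese Remainder Theorem $\delta_i \bmod p_1$ and $\delta_i \bmod p_2$ are independent, $s_i \bmod p_1$ is uniform and independent of $\delta_i$, and $w_2\neq 0$ in the field $\Z_{p_2}$; hence this matches a genuine choice of $(\{s_i\},\{\delta_i\},\vec u)$ in \game{H}{k,1}, the single scalar $\delta_i$ picked by the simulator being correlated across the two subgroups only in appearance. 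Since, by the simulator of Lemma~\ref{lemma.g1h}, all the remaining keys and the challenge ciphertext are already correctly distributed and independent of this key's fresh randomness, the simulator perfectly reproduces \game{H}{k,2} when $W\in\G_{p_1p_3}$ and \game{H}{k,1} when $W$ is uniform in $\G$; feeding back $\mathcal{A}$'s guess then distinguishes $W$ with the same advantage, contradicting Assumption~\ref{a2}.
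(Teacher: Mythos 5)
Your reduction is correct and essentially the paper's own proof: the same embedding of the Assumption~\ref{a2} challenge $W$ into the single $i_c=k$ key via the exponents $\delta_i t_{\rho(i)}$ and $\delta_i$, the $(Y_2Z_1)^{B_i\cdot\vec u}$ factor supplying the type-2 component, and the same Chinese-Remainder-Theorem argument for reusing $t_a$ as $z_a$ and decoupling the $p_1$ and $p_2$ parts of $\delta_i$. The only (harmless) cosmetic difference is that you multiply an honestly generated standard key, so the implicit randomness is $s_i+r\delta_i$, whereas the paper builds the $\G_{p_1}$ component directly with $s_i=r\delta_i$; both yield the right distributions.
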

	\begin{proof}
		We consider the case $i_n=h$.
		As before, we consider a request for the access structure $\mathbb{A}=(B,\rho)$, where $B\in\matrixset{\varRow}{\varColumn}{\Z_N}$. The request is associated with the node $\nu_j$, which is associated with the element $\gamma_j$ in $\tree$.
		\begin{itemize}
			\item
			If $i_c<k$ we create a semi-functional ABE key of type 2.
			\item
			If $i_c=k$ we build the key relying on the element $W$.
			\randomchoose{p}{%
				$Z_{1,i},Z_{2,i}\in\G_{p_3}\text{, for all }i\in \{1,\ldots,\varRow\}$;
				$r_{1},\ldots,r_{\varColumn},r^\prime_{2},\ldots,r^\prime_{\varColumn}\in \Z_N$;
				$\delta_i\in\Z_N\text{, for all }i\in \{1,\ldots,\varRow\}$.
			}
			We set $\vec{t}=(r_1,r_{2},\ldots,r_{\varColumn})$ and $\pvec{t}=(\gamma_j,r^\prime_{2},\ldots,r^\prime_{\varColumn})$.
			
			The given secret key is:
			\[
				\mathrm{SK}_\textnormal{ABE}=\left( \left\{ g^{B_i\cdot \pvec{t}} (Y_2 Z_1)^{B_i\cdot\vec{t}}W^{\delta_i t_{\rho(i)}} Z_{1,i}, W^{\delta_i}Z_{2,i} \right\}_{i=1}^{\varRow} \right).
			\]
			\item
			If $i_c>k$ we create a standard ABE key.
		\end{itemize}
		The only case we need to check is when $i_c=k$.
		We call $g^r$ the $p_1$ part of $W$, i.e., $X_2=g^r$, and we set implicitly $Y_2=g_2$.
		Observe that the $p_1$ part of the key creates a standard key, which sets implicitly in the construction of a standard ABE key done in \algorithm{ABE}{GenKey} the element $s_i$ as $r \delta_i$, which is uniformly distributed in $\Z_N$.
		If $W$ contains no $p_2$ part, then the key is a semi-functional ABE key of type 2, and $\vec{v}=\vec{t}$.
		Otherwise if $g_2^d$ is the $p_2$ part of $W$, we have also the additional component needed for the key of type 1, with the same $\delta_i$'s and $z_i=t_i$.
		As before we notice how the reuse of $t_i$ as $z_i$ does not change the result, since we are considering in one case its $p_1$ part, and for the other variable its $p_2$ part.
		Moreover, the $p_3$ part contribution is absorbed by the random elements, and the other elements are independent.
		
		We proved that the simulator is correctly distributed, thus an adversary who could distinguish between the two games would also be able to break Assumption~\ref{a2}.
	\end{proof}
	\begin{proposition}
		\label{prop.h3}
		If Assumption \ref{a2} is true, then no PPT adversary $\mathcal{A}$ of \typeone{} is able to distinguish between \game{H^\prime}{k,1} and \game{H^\prime}{k-1,2}, that is, $\left| \adv_{\mathcal{A}}^{\mathgame{H^\prime}{k,1}} - \adv_{\mathcal{A}}^{\mathgame{H^\prime}{k-1,2}}\right|$ is negligible with respect to $\lambda$.
	\end{proposition}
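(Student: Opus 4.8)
The plan is to re-enact the proof of Proposition~\ref{prop.h1} with the extra $\zeta_j$-contribution added on top, which is exactly what turns an \sfone{} key into an \sfthree{} key and a standard key into an ordinary semi-functional key. Concretely, I would reuse verbatim the simulator built for Lemma~\ref{lemma.g1h} --- its \textbf{Setup}, \textbf{Challenge} and \textbf{Guess} phases, the identification $g_2=Y_2$, and the answers to requests with $i_n\neq h$ (semi-functional if $i_n<h$, standard if $i_n>h$) --- so it remains only to describe how requests with node index $i_n=h$ are answered; these are arranged so that the Assumption~\ref{a2} input $W\in\G_{p_1p_3}$ makes the run distributed as \game{H^\prime}{k-1,2} and $W\in\G$ makes it distributed as \game{H^\prime}{k,1}. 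A distinguisher for the two games then solves Assumption~\ref{a2}.

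For $i_n=h$ a request for a SUE key is answered with a standard SUE key. For a request for an ABE key for $\mathbb{A}=(B,\rho)$, $B\in\matrixset{\varRow}{\varColumn}{\Z_N}$ with $\rho$ injective, associated with the node $\nu_j$ carrying the tree exponent $\gamma_j$, I would split on the counter index: if $i_c<k$, output an \sftwo{} key (which the simulator can build directly); if $i_c>k$, output an ordinary semi-functional ABE key, built as in the querying phase of the Lemma~\ref{lemma.g1h} simulator, i.e.\ multiplying the first components of a fresh standard key by $(Y_2Z_1)^{B_i\cdot\vec{v}}$ with $\vec{v}=(\zeta_j,r_2,\ldots,r_\varColumn)$, fresh random $r_i\in\Z_N$; and if $i_c=k$, first form
\[
  K^\prime_{1,i}=g^{B_i\cdot\pvec{t}}\,W^{B_i\cdot\vec{t}}\,W^{\delta_i t_{\rho(i)}}\,Z_{1,i},\qquad K^\prime_{2,i}=W^{\delta_i}Z_{2,i}
\]
exactly as in the case $i_c=k$ of Proposition~\ref{prop.h1} (random $Z_{1,i},Z_{2,i}\in\G_{p_3}$, $\delta_i\in\Z_N$, $\vec{t}=(0,r_2,\ldots,r_\varColumn)$, $\pvec{t}=(\gamma_j,r^\prime_2,\ldots,r^\prime_\varColumn)$ with fresh random entries), and then output
\[
  \mathrm{SK}_{\textnormal{ABE}}=\Bigl(\Bigl\{\,K^\prime_{1,i}\,(Y_2Z_1)^{B_i\cdot\ppvec{v}},\ K^\prime_{2,i}\,\Bigr\}_{i=1}^{\varRow}\Bigr),\qquad \ppvec{v}=(\zeta_j,r^{\prime\prime}_2,\ldots,r^{\prime\prime}_\varColumn),
\]
with fresh random $r^{\prime\prime}_i\in\Z_N$ and the second components left untouched.

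Then I would verify the distributions. Writing $W=X_2Z_3$ (resp.\ $W=X_2Y_3Z_3$) with $X_2=g^{\hat r}\in\G_{p_1}$, $Y_3\in\G_{p_2}$, $Z_3\in\G_{p_3}$, the $W$-dependent factors build in $\G_{p_1}$ a standard ABE key with secret vector $\pvec{t}+\hat r\vec{t}$ (first component $\gamma_j$, since $\vec{t}$ begins with $0$) and randomizers $s_i=\hat r\delta_i$, all correctly distributed; the factor $(Y_2Z_1)^{B_i\cdot\ppvec{v}}$ supplies in $\G_{p_2}$ exactly the genuine $\zeta_j$-part of an ordinary (resp.\ \sfthree{}) key, its $\G_{p_3}$ component $Z_1^{B_i\cdot\ppvec{v}}$ being absorbed by the already uniform $Z_{1,i}$. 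If $W\in\G_{p_1p_3}$ there is no further $\G_{p_2}$ contribution, so the answer is an ordinary semi-functional key, as in \game{H^\prime}{k-1,2}. If $W\in\G$, its $p_2$-part --- an immaterial rescaling of the fixed $g_2=Y_2$ --- adds to $K_{1,i}$ the term $g_2^{B_i\cdot\vec{u}+\delta^\prime_i z_{\rho(i)}}$ with $\vec{u}$ uniform, $z_{\rho(i)}=t_{\rho(i)}$ (the exponent already used in the challenge header), $\delta^\prime_i$ uniform, and to $K_{2,i}$ the term $g_2^{\delta^\prime_i}$, so that, together with the $\zeta_j$-part, the answer is precisely an \sfthree{} key, as in \game{H^\prime}{k,1}. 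All $i_n=h$ keys share the single $\zeta_j$ supplied through $Y_2Z_1$, hence are mutually consistent as in the real games, and the correlations $s_i=\hat r\delta_i$ versus $\delta^\prime_i$ proportional to $\delta_i$ are independent modulo the distinct primes $p_1$ and $p_2$ by the Chinese Remainder Theorem.

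The main obstacle is the one already met in Proposition~\ref{prop.h1}: the vector $\vec{u}$ that the simulator implicitly plants for the $i_c=k$ key has first component $0$, whereas in a real \sfthree{} key this component is uniform, so the simulated key carries a $\G_{p_2}$-vector whose first component is $\zeta_j$ rather than $\zeta_j$ plus a uniform value. I would resolve this exactly as there, invoking the \typeone{} hypothesis: since $S_*\notin\mathbb{A}$, the rows $B_i$ with $\rho(i)\in S_*$ do not span $(1,0,\ldots,0)$ --- and this stays true modulo $p_2$, for otherwise a greatest common divisor with $N$ would factor $N$ and break Assumption~\ref{a2} --- while, $\rho$ being injective, each $z_{\rho(i)}$ with $\rho(i)\notin S_*$ occurs in exactly one unrevealed place, so that (up to the negligible event $\delta_i=0$) the first component of $\vec{u}$ stays information-theoretically hidden from $\mathcal{A}$. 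Hence the simulator reproduces \game{H^\prime}{k-1,2} or \game{H^\prime}{k,1} up to negligible statistical error according to the form of $W$, and any PPT distinguisher between these games would break Assumption~\ref{a2}; the claim follows.
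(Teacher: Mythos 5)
Your proposal is correct and follows essentially the same route as the paper: the same simulator from Lemma~\ref{lemma.g1h}, with the $i_c=k$ key built exactly as in Proposition~\ref{prop.h1} but multiplied by the extra factor $(Y_2Z_1)^{B_i\cdot\ppvec{t}}$ carrying $\zeta_j$, SF2 keys for $i_c<k$, semi-functional keys for $i_c>k$, and the distributional analysis (including the \typeone{}/injective-$\rho$ hiding of the first component of the implicit $\vec{u}$) inherited from Proposition~\ref{prop.h1}. This matches the paper's proof, which indeed reduces to observing that the added $\zeta_j$-factor does not change the output distribution.
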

	\begin{proof}
		This proof follows the same procedure as the Proposition~\ref{prop.h1}.
		In particular, each case is equal, with two exceptions, where we add the contribution of the factor $(Y_2Z_1)^{\zeta_j}$.
		\begin{itemize}
			\item
			If $i_c<k$ we follow the same procedure as in Proposition~\ref{prop.h1}.
			\item
			If $i_c=k$ we build the key relying on the element $W$.
			\randomchoose{p}{%
				$Z_{1,i},Z_{2,i}\in\G_{p_3}\text{, for all }i\in \{1,\ldots,\varRow\}$;
				$r_{2},\ldots,r_{\varColumn},r^\prime_{2},\ldots,r^\prime_{\varColumn},r^{\prime\prime}_{2},\ldots,\allowbreak{}r^{\prime\prime}_{\varColumn}\in \Z_N$;
				$\delta_i\in\Z_N\text{, for all }i\in \{1,\ldots,\varRow\}$.
			}
			We set $\vec{t}=(0,r_{2},\ldots,r_{\varColumn})$, $\pvec{t}=(\gamma_j,r^\prime_{2},\ldots,r^\prime_{\varColumn})$ and $\ppvec{t}=(\zeta_j,r^{\prime\prime}_{2},\ldots,r^{\prime\prime}_{\varColumn})$.
			The given secret key is:
			
			\[
				\mathrm{SK}_\textnormal{ABE}=\left( \left\{ g^{B_i\cdot \pvec{t}} W^{B_i\cdot\vec{t}}W^{\delta_i t_{\rho(i)}} Z_{1,i}(Y_2Z_1)^{B_i\cdot\ppvec{t}}, W^{\delta_i}Z_{2,i} \right\}_{i=1}^{\varRow} \right).
			\]
			\item
			If $i_c>k$ we create a semi-functional ABE key.
		\end{itemize}
		The rest of the proof is equal to the Proposition~\ref{prop.h1}, since the added factor does not change the output distribution.
	\end{proof}
	\begin{proposition}
		\label{prop.h4}
		If Assumption \ref{a2} is true, then no PPT adversary $\mathcal{A}$ of \typeone{} is able to distinguish between \game{H^\prime}{k,1} and \game{H^\prime}{k,2}, meaning that $\left| \adv_{\mathcal{A}}^{\mathgame{H^\prime}{k,1}} - \adv_{\mathcal{A}}^{\mathgame{H^\prime}{k,2}}\right|$ is negligible with respect to $\lambda$.
	\end{proposition}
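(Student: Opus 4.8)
The plan is to replay the argument of Proposition~\ref{prop.h2} almost verbatim, adding the contribution of the element $\zeta_j$ attached to the node $\nu_j$ of node index $h$ --- exactly the modification that turned Proposition~\ref{prop.h1} into Proposition~\ref{prop.h3}. Concretely, I take the Assumption~\ref{a2} challenge $(D,W)$ with $D=(\mathscr{S},g_1,g_3,X_1Y_1,Y_2Z_1)$, run the Setup and Challenge phases already described just before Proposition~\ref{prop.h1} (so that $g_2=Y_2$ is fixed implicitly and the challenge ciphertext is a correctly distributed semi-functional one), and answer an ABE request for $\mathbb{A}=(B,\rho)$, $B\in\matrixset{\varRow}{\varColumn}{\Z_N}$ with $\rho$ injective, associated to $\nu_j$ with node index $i_n=h$, according to its counter index $i_c$: for $i_c<k$ I output a semi-functional key of type~2 (which the simulator can build directly); for $i_c>k$ I output an ordinary semi-functional key (built from $Y_2Z_1$); and for $i_c=k$ I build the key out of $W$. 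For that last case I pick at random $Z_{1,i},Z_{2,i}\in\G_{p_3}$, $r_1,\ldots,r_\varColumn,r'_2,\ldots,r'_\varColumn,r''_2,\ldots,r''_\varColumn\in\Z_N$ and $\delta_i\in\Z_N$, set $\vec{t}=(r_1,r_2,\ldots,r_\varColumn)$, $\pvec{t}=(\gamma_j,r'_2,\ldots,r'_\varColumn)$, $\ppvec{t}=(\zeta_j,r''_2,\ldots,r''_\varColumn)$, and hand over
\[
\mathrm{SK}_{\textnormal{ABE}}=\left(\left\{g^{B_i\cdot\pvec{t}}\,(Y_2Z_1)^{B_i\cdot\vec{t}}\,W^{\delta_i t_{\rho(i)}}\,Z_{1,i}\,(Y_2Z_1)^{B_i\cdot\ppvec{t}},\ W^{\delta_i}Z_{2,i}\right\}_{i=1}^{\varRow}\right).
\]

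Next I would verify that this key has the distribution prescribed by the two games. Writing $g^r$ for the $p_1$ part of $W$, the $p_1$ component is an ordinary ABE key with secret-sharing vector $\vec{v}=\pvec{t}$ and $s_i=r\delta_i$, since $Y_2Z_1$ contributes nothing in $\G_{p_1}$; the $p_3$ parts of $W$, of $(Y_2Z_1)^{B_i\cdot\vec{t}}$ and of $(Y_2Z_1)^{B_i\cdot\ppvec{t}}$ are absorbed by the fresh $Z_{1,i},Z_{2,i}$, so they do not change the distribution. For the $p_2$ component, $(Y_2Z_1)^{B_i\cdot\vec{t}}$ and $(Y_2Z_1)^{B_i\cdot\ppvec{t}}$ jointly provide the exponent vector $\vec{t}+\ppvec{t}$, whose first entry $r_1+\zeta_j$ and remaining entries are uniform --- precisely the shape shared by semi-functional keys of types~3 and~4. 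The only remaining $p_2$ contribution is $W^{\delta_i t_{\rho(i)}}$: if $W\in\G_{p_1p_3}$ this term has no $p_2$ part, the $\delta_i$-linking disappears, and the key is semi-functional of type~4, i.e.\ the view is that of \game{H^\prime}{k,2}; if the $p_2$ part of $W$ is nontrivial we recover the $\delta_i$-linking with $z_i=t_i$ and the same $\delta_i$'s, so the key is semi-functional of type~3 and the view is that of \game{H^\prime}{k,1}. That $t_i$ is reused as $z_i$ is harmless by the Chinese-remainder argument of Lemma~\ref{lemma.g0} (we only ever read the $p_1$ part of $t_i$ inside $T_{\rho(i)}$ and its $p_2$ part as $z_i$), and $\gamma_j$ and $\zeta_j$ enter here consistently with their use in every other request, so nothing about them leaks to the adversary.

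Therefore a PPT \typeone{} adversary distinguishing \game{H^\prime}{k,1} from \game{H^\prime}{k,2} would let the simulator decide whether $W\in\G_{p_1p_3}$ or $W\in\G$, contradicting Assumption~\ref{a2}. I do not anticipate a real obstacle beyond bookkeeping; the one point worth stating explicitly --- and the place where a careless argument would slip --- is that the extra $(Y_2Z_1)^{B_i\cdot\ppvec{t}}$ factor touches neither the $p_1$ standard-key part nor the $p_3$ randomisation, and that the combined $p_2$ first component comes out uniform, so that the type~3 and type~4 distributions are reproduced exactly; everything else is identical to the proof of Proposition~\ref{prop.h2}, and the \typetwo{} case is handled by the reduction of Lee et al.~\cite{lee2013RSABE} for the SUE part.
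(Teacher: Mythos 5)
Your proposal is correct and follows essentially the same route as the paper: the same simulator built on the Assumption~\ref{a2} input, the identical key construction for $i_c=k$ (with $\vec{t}$, $\pvec{t}$, $\ppvec{t}$ and the extra $(Y_2Z_1)^{B_i\cdot\ppvec{t}}$ factor), and the same reduction of the distributional analysis to that of Proposition~\ref{prop.h2}. The paper merely states this more tersely, while you spell out the $p_1$/$p_2$/$p_3$ bookkeeping explicitly.
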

	\begin{proof}
		This proof follows the same procedure as the Proposition~\ref{prop.h2} and Proposition~\ref{prop.h3}.
		In particular, each case is equal, with two exceptions, where we add the contribution of the factor $(Y_2Z_1)^{\zeta_j}$.
		\begin{itemize}
			\item
			If $i_c<k$ we follow the same procedure as in Proposition~\ref{prop.h2}.
			\item
			If $i_c=k$ we build the key relying on the element $W$.
			\randomchoose{p}{%
				$Z_{1,i},Z_{2,i}\in\G_{p_3}\text{, for all }i\in \{1,\ldots,\varRow\}$;
				$r_{1},\ldots,r_{\varColumn},r^\prime_{2},\ldots,\allowbreak{}r^\prime_{\varColumn},r^{\prime\prime}_{2},\ldots,r^{\prime\prime}_{\varColumn}\in \Z_N$;
				$\delta_i\in\Z_N\text{, for all }i\in \{1,\ldots,\varRow\}$.
			}
			We set $\vec{t}=(r_1,r_{2},\ldots,\allowbreak{}r_{\varColumn})$, $\pvec{t}=(\gamma_j,r^\prime_{2},\ldots,r^\prime_{\varColumn})$ and $\ppvec{t}=(\zeta_j,r^{\prime\prime}_{2},\ldots,r^{\prime\prime}_{\varColumn})$.
			
			The given secret key is:
			\[
				\mathrm{SK}_\textnormal{ABE}=\left( \left\{ g^{B_i\cdot \pvec{t}} (Y_2 Z_1)^{B_i\cdot\vec{t}}W^{\delta_i t_{\rho(i)}} Z_{1,i}(Y_2Z_1)^{B_i\cdot\ppvec{t}}, W^{\delta_i}Z_{2,i} \right\}_{i=1}^{\varRow} \right).
			\]
			\item
			If $i_c>k$ we create a semi-functional ABE key.
		\end{itemize}
		The rest of the proof is equal to the Proposition~\ref{prop.h2}, since the added factor does not change the output distribution.
	\end{proof}
	\begin{proposition}
		\label{prop.h5}
		If Assumption \ref{a2} is true, then no PPT adversary $\mathcal{A}$ of \typeone{} is able to distinguish between \game{H^\prime}{0,2} and \game{H^{\prime\prime}}{}, which means that $\left| \adv_{\mathcal{A}}^{\mathgame{H^\prime}{0,2}} - \adv_{\mathcal{A}}^{\mathgame{H^{\prime\prime}}{}}\right|$ is negligible with respect to $\lambda$.
	\end{proposition}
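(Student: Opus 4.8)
The plan is to reduce to Assumption~\ref{a2}, following the same template as Lemma~\ref{lemma.g1h} and Propositions~\ref{prop.h1}--\ref{prop.h4}. From an A2 instance $\bigl(D=(\mathscr{S},g_1,g_3,X_1Y_1,Y_2Z_1),\,W\bigr)$, with $W$ either uniform in $\G$ or uniform in $\G_{p_1p_3}$, I would build a simulator whose output is distributed as \game{H^\prime}{0,2} when $W\in\G_{p_1p_3}$ and as \game{H^{\prime\prime}}{} when $W\in\G$. Since \game{H^\prime}{0,2} and \game{H^{\prime\prime}}{} differ only on the SUE keys attached to the node $\nu_j$ with node index $h$ — standard in the former, semi-functional in the latter — the element $W$ has to be placed so that it switches exactly the $\G_{p_2}$-component of those SUE keys on or off.

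Concretely, I would run the Setup, Query~1, Challenge, Query~2 and Guess phases of the generic simulator described just before Proposition~\ref{prop.h1}: the challenge ciphertext is made semi-functional via $X_1Y_1$ exactly as there, keys with node index $<h$ are semi-functional and keys with node index $>h$ are standard, and only the behaviour at the distinguished node $\nu_j$ needs to be specified. For an ABE key request at $\nu_j$ I would output the honest semi-functional ABE key, built from a standard key, the factor $Y_2Z_1$, and the implicitly defined secret $\zeta_j$, exactly as in the generic simulator; this is legitimate because at $\nu_j$ the ABE keys are semi-functional in \emph{both} games. For a SUE key request at $\nu_j$ I would generate a SUE key from the (known) master key $\mathrm{MK}_{\textnormal{SUE},j}=(\alpha-\gamma_j,Z)$ but with its per-key randomness supplied implicitly by $W$, following the SUE construction of Lee et al.~\cite{lee2013RSABE}: the $\G_{p_1}$-part of $W$ plays the role of the honest randomness, the $\G_{p_3}$-part is harmless junk absorbed by the $\G_{p_3}$-components, and the $\G_{p_2}$-part — present precisely when $W$ is uniform in $\G$ — makes the key \emph{nominally} semi-functional, its hidden $\G_{p_2}$-secret being dictated by the key-generation relations and by $\gamma_j$ rather than freshly random. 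When $W\in\G_{p_1p_3}$ there is no $\G_{p_2}$-part and the SUE keys at $\nu_j$ stay standard, so the simulator plays exactly \game{H^\prime}{0,2}.

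Two points then remain. First, the $W$-switch must perturb nothing else: the challenge ciphertext, every key with node index $\neq h$, and the ABE keys at $\nu_j$ are all produced independently of $W$ (the ABE secret $\zeta_j$ is taken from $Y_2Z_1$, not from $W$), so when $W$ is uniform in $\G$ the only residual gap between the simulation and \game{H^{\prime\prime}}{} is that the SUE keys at $\nu_j$ are nominally rather than genuinely semi-functional. Second — and this is the main obstacle — one must show that a nominally semi-functional SUE key at $\nu_j$ is distributed exactly like a genuine one. This is where the \typeone{} hypothesis is essential: a \typeone{} adversary may legitimately hold a SUE key at $\nu_j$ that matches the challenge (its time $T'$ satisfying $T_*\leq T'$), so one cannot rule out a matching key by a game restriction and must argue information-theoretically, in the spirit of Proposition~\ref{prop.h1}. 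Since every ABE key the adversary owns at $\nu_j$ has $S_*\notin\mathbb{A}$, it can never recover $\mathrm{EK}_{\textnormal{ABE}}$ at that node, hence never complete the decryption that would reveal the $\G_{p_2}$ cross-term; and, using injectivity of $\rho$ together with the $\gamma_j$ versus $\alpha-\gamma_j$ split of the node's master key, the $\G_{p_2}$-data the adversary can observe is independent of whether the hidden secret is the nominal one or a fresh random $\eta_j$. Granting these, any distinguisher between \game{H^\prime}{0,2} and \game{H^{\prime\prime}}{} would hand $\mathscr{B}_2$ a non-negligible A2-advantage, contradicting Assumption~\ref{a2}; the SUE-level bookkeeping and the detailed independence check are in each case imported from Lee et al.~\cite{lee2013RSABE}.
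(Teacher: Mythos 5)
Your reduction places $W$ in the wrong spot, and this is not a cosmetic difference: it changes which hiding argument you need, and the one you would need is exactly the one that is unavailable for a \typeone{} adversary. You keep the ABE keys at $\nu_j$ independent of $W$ and inject $W$ as the \emph{per-key randomness} of the SUE keys at $\nu_j$, so that a $\G_{p_2}$-component of $W$ makes those keys only \emph{nominally} semi-functional, and you then assert an information-theoretic step ``nominal $\approx$ genuine''. But for a \typeone{} adversary the SUE keys at $\nu_j$ are allowed to have times $T'\geq T_*$, i.e.\ they may match the (already semi-functional) challenge header; a nominally semi-functional key carries $\G_{p_2}$-components on several components (wherever the randomness multiplies the public elements), correlated through the shared label exponents with the $\G_{p_2}$-components of the matching header, whereas a genuine semi-functional SUE key carries a single $g_2^{\eta_j}$ factor with $\eta_j$ fresh. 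The usual Lewko/Lee-style argument that this discrepancy is statistically hidden requires precisely that the key \emph{cannot} match the challenge (that is the \typetwo{} side, handled in Lee et al.\ and imported via inequality~\eqref{inequality.SUE}); your appeal to injectivity of $\rho$ and to the ``$\gamma_j$ versus $\alpha-\gamma_j$ split'' does not substitute for it, since ``the adversary cannot compute $e(g,g)^{\gamma_j s}$'' is a computational statement, not a proof that the handed-out group elements have identical distributions. There is also a second, structural problem: the adversary may request several time-update keys covering $\nu_j$, and a single $W$ used as per-key randomness either gets reused (producing correlated $\G_{p_1}$-parts that are detectable even in the $W\in\G_{p_1p_3}$ case) or re-randomized (producing structured, per-key $\G_{p_2}$-secrets instead of one shared fresh $\eta_j$); the randomness-embedding trick is inherently a one-key-per-hybrid tool, which is why Lemma~\ref{lemma.g1h} iterates over the counter index in the first place.

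The paper's proof of Proposition~\ref{prop.h5} avoids all of this by embedding $W$ in the \emph{node-shared secret} rather than in any key's randomness: it implicitly sets $g^{\gamma_j}$ equal to the $\G_{p_1}$-part of $W$, builds each ABE key at $\nu_j$ as $W^{B_i\cdot\vec t}T_{\rho(i)}^{s_i}(Y_2Z_1)^{B_i\cdot\pvec{t}}Z_{1,i}$ with $\vec t=(1,r_2,\ldots,r_\varColumn)$ and $\pvec{t}=(\zeta'_j,r'_2,\ldots,r'_\varColumn)$, and builds the SUE keys at $\nu_j$ from the master share $\alpha-\gamma_j$ analogously (via $g^{\alpha}$ and $W$). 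Then a $\G_{p_2}$-component $g_2^{\delta}$ of $W$ lands exactly on the semi-functional positions of \emph{both} families simultaneously: the ABE keys remain genuinely semi-functional with $\zeta_j=\zeta'_j+\delta$ (still uniform thanks to $\zeta'_j$) and the SUE keys become genuinely semi-functional with the corresponding $\eta_j$, all counter indices flipping at once because $W$ sits in the quantity shared by every key at the node. No nominal semi-functionality arises, so no information-theoretic hiding step is needed at all, and the simulation is perfect in both cases of Assumption~\ref{a2}. To repair your proposal you would essentially have to switch to this node-secret embedding; as written, the crucial step of your argument is unsupported and, for matching SUE keys, fails.
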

	\begin{proof}
		This proof is similar to Proposition~\ref{prop.h1}.
		Suppose that the adversary queries for an ABE key with node index $i_n=h$, for an access structure described by $\mathbb{A}=(B,\rho)$, where $B\in\matrixset{\varRow}{\varColumn}{\Z_N}$ and $\rho$ is injective. The request is associated with the node $\nu_j$. Notice that we do not need to distinguish between the different counter indices, since all keys for the two considered games with the same node index are equal.
		
		\randomchoose{p}{%
			$Z_{1,i},Z_{2,i}\in\G_{p_3}\text{, for all }i\in \{1,\ldots,\varRow\}$;
			\linebreak$r_{2},\ldots,r_{\varColumn},r^\prime_{2},\ldots,r^\prime_{\varColumn},s_1,\ldots,s_\varRow\in \Z_N$;
			$\zeta^\prime_j\in \Z_N$.
		}		
		We call $\vec{t}=(1,r_2,\ldots,r_\varColumn)$ and $\pvec{t}=(\zeta^\prime_j,r^\prime_2,\ldots,r^\prime_\varColumn)$.		
		The semi-functional ABE key is:
		\[
			\mathrm{SK}_{\textnormal{ABE}}=
			\left(\left\{
			W^{B_i\cdot \vec{t}}T_{\rho(i)}^{s_i}(Y_2Z_1)^{B_i\cdot \pvec{t}}Z_{1,i},
			g^{s_i}Z_{2,i}\right
			\}_{i=1}^{\varRow}\right).
		\]		
		\disclaimerSUE{}
		We compute explicitly the components of the ABE key where $W$ appears, ignoring the $p_3$ parts.
		\begin{align*}
			W^{B_i\cdot \vec{t}}T_{\rho(i)}^{s_i}(Y_2)^{B_i\cdot \pvec{t}}
			&=(g^{\gamma_j} g_2^{\delta})^{B_i\cdot \vec{t}}T_{\rho(i)}^{s_i}(g_2)^{B_i\cdot \pvec{t}}\\
			&=g^{\gamma_j B_i\cdot \vec{t}} T_{\rho(i)}^{s_i} g_2^{\delta B_i\cdot \vec{t}+B_i\cdot \pvec{t}}.
		\end{align*}
		The first component of the vector $\gamma_j B_i\cdot \pvec{t}$ is exactly $\gamma_j$, and the remaining are randomly distributed. If the $p_2$ part of $W$ is equal to $1$, i.e., $\delta=0$, then we have a standard semi-functional ABE key construction with $\zeta_j=\zeta^\prime_j$. Otherwise the first component of the vector $\delta B_i\cdot \vec{t}+B_i\cdot \pvec{t}$ is $\delta + \zeta^\prime_j$, which is $\zeta_j$. The remaining vector elements are again randomly distributed and independent on other key value, since from the second element onward of $\pvec{t}$ they are randomly fixed in~$\Z_N$.
		
	\end{proof}
	\subsubsection{Randomizing the ciphertext}
	\begin{lemma}
	\label{lemma.g2}
	If Assumption \ref{a3} is true, then no PPT adversary $\mathcal{A}$ is able to distinguish between \game{G}{2} and \game{G}{3}, which means that $\left| \adv_{\mathcal{A}}^{\mathgame{G}{3}} - \adv_{\mathcal{A}}^{\mathgame{G}{2}}\right|
	$ is negligible with respect to $\lambda$.
	\end{lemma}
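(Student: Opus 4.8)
The plan is to argue by contradiction exactly in the style of Lemma~\ref{lemma.g0} and Lemma~\ref{lemma.g1h}: assume a PPT adversary $\mathcal{A}$ distinguishes \game{G}{2} from \game{G}{3} with non-negligible advantage and turn it into a PPT algorithm $\mathscr{B}_3$ that breaks Assumption~\ref{a3}. Thus $\mathscr{B}_3$ receives $D=\left((N,\G,\GT,e),g_1,g_2,g_3,g_1^\alpha Y_1,g_1^s Y_2\right)$ together with a challenge $W$, which is either $W_2=e(g_1,g_1)^{s\alpha}$ or a uniformly random $W_1\in\GT$; it must present a faithful copy of \game{G}{2} in the first case and of \game{G}{3} in the second.

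For the setup $\mathscr{B}_3$ puts $g=g_1$, $Z=g_3$, samples $t_a\in\Z_N$ with $T_a=g^{t_a}$, samples $\gamma_i\in\Z_N$ for every node $\nu_i\in\tree$, and samples the semi-functional parameters $\zeta_i,\eta_i$ (one pair per node) and $z_a$ (one per attribute); the generator $g_2$ of $\G_{p_2}$ needed for all semi-functional components is supplied by the assumption. Although $\alpha$ itself is unknown, $\Omega=e(g,g)^\alpha$ is computable as $e(g_1^\alpha Y_1,g_1)$, since $e(Y_1,g_1)=1$ by orthogonality of the prime-order subgroups under the pairing. The ABE master-key shares $\gamma_i$ are known, so every semi-functional ABE key is produced directly as in Section~\ref{semifunctional.def}. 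The SUE master-key shares $\alpha-\gamma_i$ are not known, but a semi-functional SUE key can still be assembled by using $g_1^\alpha Y_1$ in place of $g^\alpha$: the spurious factor $Y_1\in\G_{p_2}$ is absorbed into the random $\G_{p_2}$ randomisation that a semi-functional SUE key already carries, so its distribution is unchanged; this is the same technique as in Lee et al.~\cite{lee2013RSABE}, to which we defer for the SUE details.

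In the two query phases $\mathscr{B}_3$ answers every private-key and time-update request with a semi-functional key, built as above; this matches both \game{G}{2} and \game{G}{3}. In the challenge phase, on input $M_*^0,M_*^1,S_*,T_*$, it picks $b\in\{0,1\}$ and forms a semi-functional ciphertext by setting $C_0=g_1^s Y_2$ (implicitly $g^s g_2^c$, with $c$ the unknown discrete log of $Y_2$ to base $g_2$) and $C_{1,a}=(g_1^s Y_2)^{t_a}=T_a^s g_2^{c\,t_a}$ for $a\in S_*$, i.e.\ the semi-functional header with $z_a=t_a$; by the Chinese Remainder Theorem the $p_1$-part and $p_2$-part of each $t_a$ are independent, so reusing $t_a$ as $z_a$ leaves the joint distribution of header and keys exactly as in the genuine semi-functional construction, precisely as argued in Lemma~\ref{lemma.g0}. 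The SUE header is formed analogously from $g_1^s Y_2$. Finally $\mathscr{B}_3$ sets $C=W\cdot M_*^b$, so that $C=\Omega^s M_*^b$ when $W=W_2$ (the \game{G}{2} ciphertext) and $C=W_1 M_*^b=\Omega^t M_*^b$ for a uniformly random $t$ when $W=W_1$ (the \game{G}{3} ciphertext). When $\mathcal{A}$ outputs a guess $b_*$, $\mathscr{B}_3$ outputs $1$ if $b_*=b$ and $0$ otherwise.

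Both simulations are perfectly distributed copies of \game{G}{2} and \game{G}{3}, so $\mathcal{A}$'s advantage in separating the games transfers directly to an advantage of $\mathscr{B}_3$ against Assumption~\ref{a3}, the desired contradiction. The one genuinely delicate point — beyond the now-routine CRT argument for the $z_a=t_a$ coincidence — is the construction of the semi-functional SUE keys when $\alpha-\gamma_i$ is unavailable: one must verify that inserting $g_1^\alpha Y_1$ into the SUE key does not perturb its distribution and, in particular, that the implicit value of $\alpha$ stays consistent across all keys and with $\Omega$; this is handled exactly as in the corresponding step of Lee et al.~\cite{lee2013RSABE}.
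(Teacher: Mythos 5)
Your proposal is correct and follows essentially the same route as the paper: the same reduction $\mathscr{B}_3$ with $\Omega=e(g_1^\alpha Y_1,g_1)$, semi-functional answers to all queries (ABE keys directly from the known $\gamma_i$, SUE keys via the $g_1^\alpha Y_1$ trick deferred to Lee et al.), the challenge header built from $g_1^s Y_2$ with the implicit setting $z_a=t_a$ justified by the Chinese Remainder Theorem, and $C=W\cdot M_*^b$ so that the two forms of $W$ yield \game{G}{2} and \game{G}{3} respectively. No gaps beyond what the paper itself defers to the SUE reference.
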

	\begin{proof}
		By contradiction we prove that the existence of a PPT adversary who can distinguish between \game{G}{2} and \game{G}{3} implies the existence of an algorithm~$\mathscr{B}_3$ which can break Assumption~\ref{a3}.
		
		The algorithm $\mathscr{B}_3$ is given the two inputs $D=\left(\mathscr{S},g_1,g_2,g_3,g_1^\alpha Y_1,g_1^s Y_2\right)$ and $W$.
		$W$ can be $W_1=e(g_1,g_1)^{s\alpha}$ or a random element $W_2$ in $\G_{p_1}$, or equivalently $W_2=e(g_1,g_1)^c$ for some random $c\in \Z_N$.
		
	\phase{Setup}
	The setup phase is similar to the one in Lemma \ref{lemma.g0}, with the exception of the choice of the public parameter $\Omega$, for which we will (implicitly) use the same $\alpha$  used in the input.
	\randomchoose{p}{%
		$a,\gamma\in\Z_N$;
		$t_j^\prime \in \Z_N\text{, for all }j\in \mathfrak{A}$.
	}
	
	Let $g=g_1$. We set the ABE keys as:
	\begin{align*}
		\mathrm{MK}_{\textnormal{ABE}}&=\left( \gamma, Z=g_3 \right);\\
		\mathrm{PK}_{\textnormal{ABE}}&=\left( \left(N,\G,\GT,e\right), g, \left\{g^{t_j}\right\}_{j\in \mathfrak{A}}, e(g,g)^\gamma \right).
	\end{align*}
	\disclaimerSUE{}
	Since we are choosing random exponents, when we consider their corresponding elements of the group, we are choosing them with a uniform distribution. Thus the output of this procedure has the same probability distribution of the corresponding outputs of the standard setup procedure.
	We create the binary tree $\tree$ associating to each node $\nu_i$ a random exponent~$\gamma_{i}$.
	Lastly we set $\Omega=e(g,g^\alpha Y_1)$, which for the bilinearity is equivalent to $e(g,g)^\alpha$. In particular we are implicitly setting $
		\mathrm{MK}=\left(\alpha, \tree, \mathrm{MK}_{\textnormal{ABE}}, \mathrm{MK}_{\textnormal{SUE}} \right)$.
	The public key, given to $\mathcal{A}$, is then:
	\begin{align*}
		\mathrm{PK}&=\left( \mathrm{PK}_{\textnormal{ABE}}, \mathrm{PK}_{\textnormal{SUE}}, g, \Omega \right).
	\end{align*}
	We also generate randomly for each node $\nu_i$ the elements $\zeta_i$, $\eta^\prime_i$ in $\Z_N$, which are needed to create the semi-functional keys.
	
	\phase{Query 1}
	In this phase we supply $\mathcal{A}$ with all the required keys, both private keys and time-update keys. 
	Since we do not know $\alpha$, we are not able to create standard time-update keys; nevertheless we will show that we can still create semi-functional keys from our inputs. We fix $g_2$ as the generator of $\G_{p_2}$ used for every semi-functional request.
	\begin{itemize}
		\item
		If the request is for a private key, then we first find the private set $\mathrm{PV}_u$ for the user $u$. For each node $\nu_h$ involved with the private set, we create its corresponding ABE key. This ABE key should be generated with an ABE scheme of private master key $\mathrm{MK}_{\textnormal{ABE}}, \mathrm{PK}_{\textnormal{SUE}}=(\gamma_h,g_3)$: since all elements are known, we are able to generate the corresponding ABE private key, and then we can apply the same procedure in \algorithm{RS-ABE}{GenKeySF} to make this key semi-functional .
		\item
		Otherwise the request is for a time-update key, for which we use the procedure shown in Lee et al.~\cite{lee2013RSABE}.
		
	\end{itemize}
	
	\phase{Challenge}
	We receive the attributes $S_*$, the time $T_*$, and the messages $M_*^0$ and $M_*^1$ upon which the adversary wishes to be challenged; we fix a bit $b$ randomly in $\{0,1\}$.
	We use both these and the inputs of $\mathscr{B}_3$ in order to create the needed semi-functional ciphertext, using randomly chosen exponents.
	First we generate the ABE semi-functional ciphertext.
	The components of the ciphertext are:
		\begin{align*}
			C_0=g^s Y_2;\ 
			C_{1,i}=(g_1^s Y_2)^{t_i};\ 
			\mathrm{CH}_\textnormal{ABE}=\left(C_0,\{C_{1,i}\}_{i\in S}\right).
		\end{align*}
	We generate also the SUE ciphertext $\mathrm{CH}_\textnormal{SUE}$, as done by Lee et al.~\cite{lee2013RSABE}.
	
	Finally we output to the adversary the ciphertext
	\[
		\mathrm{CT}=\left(\mathrm{CH}_\textnormal{ABE},\mathrm{CH}_\textnormal{SUE},W\cdot M_*^b\right).
	\]
	
	We need to prove that the above definition gives rise to the games \game{G}{2} and \game{G}{3} depending on the form of the given value $W$. In particular we must show that the above construction generates valid semi-functional ciphertexts. Notice that the construction of the keys uses the term $g^\alpha Y_1$, and the ciphertexts contruction uses $g^s Y_2$, elements which are chosen independently. Thus there is not mutual correlation between them.
	
	The constructed ABE ciphertext is equivalent to a semi-functional ABE ciphertext with $c=t$. The construction of an ABE semi-functional ciphertext involves the choice of some random elements $z_i$ for each used attribute, but we have implicitly set $z_i=t_i$, values used in the previous construction. Nevertheless this setting is independent from the previous random choices, since this is the only time we have considered the $p_2$ part of the $t_i$'s.
	Previously we have only used them in the form $g^{t_i}$, which uses only the $p_1$ part of the $t_i$'s. Using the same argument of Lemma~\ref{lemma.g0}, the Chinese remainder theorem assures us that the two parts are mutually independent.
	
	If the element $W$ has the form $W_1=e(g_1,g_1)^{s\alpha}$, then the output is exactly a semi-functional ciphertext, as required by the game \game{G}{2}, otherwise it is of the form $W_2=e(g_1,g_1)^c$ for some random $c\in \Z_N$, which is the output required for the game \game{G}{3}.
	
	\phase{Query 2}
	We repeat the keys construction of the first querying phase.
	
	\phase{Guess}
	We obtain the guess $b$ from $\mathcal{A}$, which is the final output of $\mathscr{B}_3$.
	
	The adversary $\mathcal{A}$ has a non-negligible advantage on differentiating \game{G}{2} from \game{G}{3}, which means that we can correctly guess the type of $W$: the algorithm $\mathscr{B}_3$ has a non-negligible advantage in breaking Assumption \ref{a3}, which is a contradiction.
	\end{proof}
	
	
	\section{Efficiency and conclusions}
	\label{efficiency.sec}
	In this section we study the dimension of keys and ciphertexts of RS-ABE.
	The dimension is measured in term of group elements which compose the keys and the ciphertext, and depends on the parameters of the scheme.
	
	For the SUE scheme we refer to~\cite{lee2013RSABE}. The maximum length for a SUE key is $\left\lceil \log_2(T_\textnormal{max}+2) \right\rceil+1$, where $T_\textnormal{max}$ is the maximum time for the system, and a SUE cipertext header has size that is upper-bounded by $(d+2)+2(d)=3d+2\leq 3 \left\lceil \log_2(T_\textnormal{max}+2) \right\rceil -1$.
	
	We analyze now the ABE scheme, starting from the key. We want to find the length of a key associated with an LSSS access structure $(B,\rho)$. The first thing to notice is the hypothesis stated at the beginning of Section~\ref{ABE.construction}, requiring that $\rho$ is injective, i.e., each attribute is assigned with a single row of the matrix $B$. We can extend the result by assuming that $\rho$ is associated with at most $k$ rows of the matrix $B$. We modify the attributes used in the scheme: if $\mathfrak{A}$ is the set of attributes of the scheme, we consider the set $\mathfrak{A}^\prime=\{(u,i)\mid u\in\mathfrak{A},i\in\{1,\ldots,k\}\}$ which contains $k$ copies of each attribute. Then each attribute set $S$ becomes the set $S^\prime=\{(u,i)\mid u\in S,i\in\{1,\ldots,k\}\}$, and the new function $\rho$ assigns to each occurrence of an attribute one of the new corresponding attributes, without repetitions.
	
	An ABE private key for an LSSS access structure $(B,\rho)$, modified to take into account for repetitions in $\rho$, such that $B\in \matrixset{\varRow}{\varColumn}{\Z_N}$ is a matrix with $\varRow$ rows and $\varColumn$ columns, has the following form:
	\[
		\mathrm{SK}_{\mathbb{A}}=\left(\left\{K_{1,i}=g^{B_i\cdot \vec{v}} T_{\rho(i)}^{s_i} Z_{1,i}, K_{2,i}=g^{s_i} Z_{2,i}\right\}_{1\leq i\leq \varRow} \right).
	\]
	A key contains exactly $2l$ group elements.
	
	An ABE ciphertext for the set of attributes $S$ has the form\[
		\mathrm{CH}_{S}=\left(C_0=g^s,\left\{C_{1,i}=T_i^s\right\}_{i\in S} \right),
	\]
	which corresponds to a size equal to $|S|+1$.
	
	Now we merge these results together to obtain the length of the keys for the RS-ABE scheme.
	The private key for a user $u$ and an access structure $\mathbb{A}$ is generated starting from the private set $\mathrm{PV}_u$ associated with the user $u$: a private ABE key for the access structure $\mathbb{A}$ is generated for each element of $\mathrm{PV}_u$ . Then the private key is: \[
		\mathrm{SK}_{\mathbb{A},u}=\left( \mathrm{PV}_u, \mathrm{SK}_{\textnormal{ABE},1}, \ldots, \mathrm{SK}_{\textnormal{ABE},d} \right).
	\]
	The number of elements $d$ of the private set is smaller than $\log_2(N_\textnormal{max})$, as we have noticed in Section~\ref{cs.sec}. This means that the maximum length of a private RS-ABE key, considering only group elements, is bounded by $2 l N_\textnormal{max}$.
	
	A time-update key can be obtained starting from a covering of the associated revoked set $R$: for each element of the covering $\mathrm{CV}_R$ a SUE key is built. Calling $r=|R|$, we know from Section~\ref{cs.sec} that the size of the covering is bounded above by the value $r \log_2 (N_\textnormal{max}/r)$. Therefore the maximum number of group elements is	$3 r \log_2 \left(\frac{N_\textnormal{max}}{r}\right)  \left\lceil \log_2(T_\textnormal{max}+2) \right\rceil$.
	
	A ciphertext for a set of attributes $S$ is composed by an element of $\GT$ and both an ABE and a SUE ciphertext. The number of group elements is then $|S|+\left\lceil\log_2(T_\textnormal{max}+2) \right\rceil+1$.
	\begin{table}[H]
		\begin{center}
		\makebox[\textwidth][c]{ 
		\renewcommand{\arraystretch}{1.2}

		\begin{tabular}{M{3.9cm}| M{4.55cm}| M{4.1cm} !{\color{white}} N } 
				&	{\textbf{Upper bound}}	&	{\textbf{Magnitude}}	&	\\
			\hline 
			SUE private key	&	$\left\lceil \log_2(T_\textnormal{max}+2) \right\rceil+1$	&	$\bigO{\log_2T_\textnormal{max}}$ \\
			SUE ciphertext	&	$3 \left\lceil \log_2(T_\textnormal{max}+2) \right\rceil -1$	&	$\bigO{\log_2T_\textnormal{max}}$ \\
			ABE	private key	&	$2l$	&	$\bigO{l}$ \\
			ABE ciphertext	&	$|S|+1$	&	$\bigO{|S|}$ \\
			RS-ABE private key	&	$2 l N_\textnormal{max}$	&	$\bigO{l N_\textnormal{max}}$ \\
			RS-ABE time-update key	&	${3 r\! \log_2 \left(\frac{N_\textnormal{max}}{r}\right)\! \left\lceil \log_2(T_\textnormal{max}\!+\!2) \right\rceil}$	&	${\!\mathcal{O}(r \!\log_2 (N_\textnormal{max}/r)\!\log_2\! T_\textnormal{max})}$ \\ 
			RS-ABE ciphertext	&	$|S|+\left\lceil\log_2(T_\textnormal{max}+2) \right\rceil+1$	&	$\bigO{|S|+\log_2 T_\textnormal{max}}$ \\
			
		\end{tabular}
	}
	\end{center}
	\caption{The length in term of number of group elements employed. $T_\textnormal{max}$ is the maximum time used in the scheme, $l$ is the number of rows of the matrix in the access structure used for the key, $|S|$ is the size of the set of attributes used for the ciphertext, $N_\textnormal{max}$ is the maximum number of users, and $r$ is the size of the set of revoked users in the update key.}
	\end{table}

Finally we present briefly a possible extension of our work. In this paper we built a Key-Policy ABE scheme that allows to revoke users arbitrarily, and we prove its security. In our scheme only a single authority can create the public parameters and issue private keys to the users: this might be seen as a limitation for certain practical application of the scheme, for example those in which the users fear that the authority becomes curious. A natural follow-up for this work consists in introducing a hierarchy to manage the key-generation process, in a similar fashion to what is done for SSL certificates. A different problem is the trust on the authority managing the scheme: in our setting the authority has the full power to create and revoke the keys according to its own volition. This problem might be assuaged by distributing the authority's power among multiple authorities, such that a single failure would not cause a complete collapse of the scheme. In \cite{longocollaborative,longo2015key}, the authors present two different Key-Policy Attribute-Based Encryption schemes in which many different authorities operate independently, and prove their security. As a future work may be interesting the design of a Multi-Authority Key-Policy ABE scheme that allows user revocation, with a full proof of security.

	
	\section*{Acknowledgements}
	Most of the results shown in this work were developed in the first author's Master's thesis at the University of Padua, who would like to thank his supervisors: the third author and A.~Tonolo.


	\printbibliography{}
\end{document}